\documentclass[10pt, reqno]{amsart}
\usepackage{amsaddr}
\usepackage[a4paper, total={6.5in, 8.4in}]{geometry}

\usepackage[symbol]{footmisc}

\usepackage{amsmath}
\usepackage{amssymb}
\usepackage{amsthm}
\usepackage[latin1]{inputenc}
\usepackage{eurosym}
\usepackage{graphicx}
\usepackage{epsfig}
\usepackage{hyperref}
\usepackage{dsfont}
\usepackage[space]{cite}
\usepackage{subcaption}
\usepackage{caption}
\usepackage{placeins}
\usepackage{mathtools}
\usepackage{xcolor}
\usepackage[ruled,vlined]{algorithm2e}
\usepackage[title]{appendix}

\usepackage[displaymath,mathlines]{lineno}
\modulolinenumbers[1]

\allowdisplaybreaks

\usepackage{hyperref}


\usepackage{ifthen}
\usepackage{comment}

\makeindex
\usepackage{color}
\usepackage{float}
\usepackage{commath}

\captionsetup[subfigure]{labelfont=rm}

\renewcommand{\div}{\operatorname{div}}

\def\leq{\leqslant}

\numberwithin{equation}{section}

\newtheoremstyle{thmlemcorr}{10pt}{10pt}{\itshape}{}{\bfseries}{.}{10pt}{{\thmname{#1}\thmnumber{
			#2}\thmnote{ (#3)}}}
\newtheoremstyle{thmlemcorr*}{10pt}{10pt}{\itshape}{}{\bfseries}{.}\newline{{\thmname{#1}\thmnumber{
			#2}\thmnote{ (#3)}}}
\newtheoremstyle{defi}{10pt}{10pt}{\itshape}{}{\bfseries}{.}{10pt}{{\thmname{#1}\thmnumber{
			#2}\thmnote{ (#3)}}}
\newtheoremstyle{remexample}{10pt}{10pt}{}{}{\bfseries}{.}{10pt}{{\thmname{#1}\thmnumber{
			#2}\thmnote{ (#3)}}}
\newtheoremstyle{ass}{10pt}{10pt}{}{}{\bfseries}{.}{10pt}{{\thmname{#1}\thmnumber{
			A#2}\thmnote{ (#3)}}}

\theoremstyle{thmlemcorr}
\newtheorem{theorem}{Theorem}
\numberwithin{theorem}{section}

\theoremstyle{thmlemcorr*}
\newtheorem{theorem*}{Theorem}
\newtheorem{lemma*}[theorem]{Lemma}
\newtheorem{corollary*}[theorem]{Corollary}
\newtheorem{proposition*}[theorem]{Proposition}
\newtheorem{problem*}[theorem]{Problem}
\newtheorem{conjecture*}[theorem]{Conjecture}

\theoremstyle{defi}

\newtheorem{hyp}{Assumption}

\newtheorem{problem}{Problem}

\theoremstyle{remexample}
\newtheorem{remark}[theorem]{Remark}

\theoremstyle{ass}

\newtheorem*{notations*}{Notations}
\newtheorem*{related_works*}{Notations}
\newtheorem*{acknowledgement*}{Acknowledgement}
\allowdisplaybreaks

\title[Decoding MFGs from Population and Environment Observations By GPs]{Decoding Mean Field Games from Population and Environment Observations By  Gaussian Processes}

\author{Jinyan Guo$^1$, Chenchen Mou$^2$, Xianjin Yang$^{3,*}$, Chao Zhou$^1$}

\address[Y. Guo]{
	$^1$Department of Mathematics and Risk Management Institute, National University of Singapore, Singapore.}
\email{e0708165@u.nus.edu}

\address[C. Mou]{
	$^2$Department of Mathematics, City University of Hong Kong, Hong Kong SAR, China.}
\email{chencmou@cityu.edu.hk}

\thanks{$^*$Corresponding author.}
\address{$^3$Department of Computing and Mathematical Sciences, California Institute of Technology, Pasadena, CA 91125, USA.}
\email{yxjmath@caltech.edu}

\email{matzc@nus.edu.sg}

\begin{document}
\maketitle

\begin{abstract}
This paper presents a Gaussian Process (GP) framework, a non-parametric technique widely acknowledged for regression and classification tasks, to address inverse problems in mean field games (MFGs). By leveraging GPs, we aim to  recover agents' strategic actions and the environment's configurations from partial and noisy observations of the population of agents and the setup of the environment. Our method is a probabilistic tool to infer the behaviors of agents in MFGs from data in scenarios where the comprehensive dataset is either inaccessible or contaminated by noises.

\end{abstract}

\section{Introduction}
In this paper, we present a Gaussian Process (GP) framework to infer strategies and environment configurations in mean field games (MFGs) from partially noisy observations of agent populations and environmental setups. 
MFGs, as proposed in \cite{lasry2006jeux1, lasry2006jeux, lasry2007mean, huang2006large, huang2007large, huang2007nash, huang2007invariance}, study the actions of a large number of indistinguishable rational agents. As the number of agents approaches infinity, the Nash equilibrium of a standard MFG is described by a coupled system of two distinct partial differential equations (PDEs): the Hamilton--Jacobi--Bellman (HJB) and the Fokker--Plank (FP) equations. The HJB determines the agent's value function, while the FP governs the evolution of the distribution of agents. We refer the readers to \cite{gueant2011mean, gomes2015economic, lee2020mean, gao2021belief, gomes2021mean, evangelista2018first, lee2021mean} for applications of MFGs in various fields.

To date, the consistency and completeness of MFGs have been explored thoroughly across different contexts. The earliest findings can be traced back to the foundational studies by Lasry and Lions, and further expanded upon in Lions' lectures at Coll\`ege de France (referenced by \cite{lionslec}). Notably, only a few MFG models provide explicit solutions, highlighting the significance of numerical calculations in MFGs. 

A typical time-dependent MFG admits the following form
\begin{align}
\label{intro_MFG}
\begin{cases}
-\partial_t u(t, x) - \nu \Delta u+ H(x, Du) = F[m](x), &\forall (t, x)\in (0, T)\times \mathbb{R}^d,\\
\partial_t m(t, x) - \nu \Delta m - \div(m D_pH(x, Du)) = 0, &\forall (t, x)\in (0, T)\times \mathbb{R}^d,\\
m(0, x) = \mu(x), u(T, x) = g(x), &\forall x\in \mathbb{R}^d,
\end{cases}
\end{align}
where $T$ is the terminal time, $u$ stands for a representative agent's value function, \( m \) is the probability density function representing the distribution of agents, \( \nu \) denotes the volatility of agents' movements, \( H \) is referred as the Hamiltonian, \( F \) is the mean field term  describing the effect of agents' interactions, \( \mu \) is the initial distribution, and \( g \) represents the terminal cost. If agents play a MFG in \eqref{intro_MFG}, the optimal strategy of agents is given by $-D_pH(x, Du)$ \cite{lasry2006jeux1, lasry2006jeux, lasry2007mean}.

In applications, MFGs have emerged as a prominent framework for modeling and analyzing the strategic interactions of a large number of agents. Traditionally, a thorough understanding of these interactions requires comprehensive data on individual agent strategies, population dynamics, and the intricacies of the environment in which these agents operate. However, in many real-world scenarios, our observations are often partial, noisy, or even sporadic. The challenge then is to distill meaningful information and insights from such limited data.

In this research, we focus on deducing agents' strategies and environmental information by observing the agents' distribution and leveraging limited data about the surrounding environment. Specifically, we are concerned with the following problem.
\begin{problem}
\label{main_prob}
Suppose that agents are engaged in a MFG within a competitive environment (For instance, suppose that agents play a MFG specified in \eqref{intro_MFG}). Based on limited and noisy observations of the agents' population ($m$ in \eqref{intro_MFG}) and partial observations on the environment ($\nu, H, F$, and $g$ in \eqref{intro_MFG}), we infer the complete population (values of $m$), the optimal strategy employed by the agents ($-D_pH(x, Du)$ in \eqref{intro_MFG}), and the setup of the environment ($\nu$, $H$, $F$, and $g$ in \eqref{intro_MFG}). 
\end{problem}
We further explore Problem \ref{main_prob} by delving into the realms of both general stationary and time-dependent MFGs, as detailed in Sections \ref{secStationary} and \ref{secTimDep} respectively. Stationary MFGs' inverse problems are of interest due to their representation as asymptotic limits of time-dependent MFGs \cite{
cannarsa2020long, cirant2021long, cardaliaguet2012long}. Agents typically reach a steady state rapidly under appropriate conditions \cite{
cannarsa2020long, cirant2021long, cardaliaguet2012long}. Therefore, in practical scenarios, observed MFG data is likely in this steady state, emphasizing the need to recover strategies and environmental information at equilibrium.

Inverse problems in MFGs often suffer from non-uniqueness, implying multiple PDE profiles can fit the same data. Particularly, observing only the distribution of agents rarely yields a unique profile for quantities \((m, u, \nu, H, F, \text{ and } g)\) as defined in Problem \ref{main_prob}. To motivate the solvability of Problem \ref{main_prob} and highlight the necessity of observations on the environment,  we consider a scenario where agents engage in a MFG described in \eqref{intro_MFG}, with a Hamiltonian expressed as $H(x, Du)=\frac{|Du(x)|^2}{2}+V(x)$. In this setting, the optimal strategy for a typical agent is represented by $-Du$ \cite{lasry2006jeux1, lasry2006jeux, lasry2007mean}.  First, we consider the case when we know the viscosity $\nu=1$. If we observe the aggregate population, i.e., we have access to the population density $m$, we can uniquely determine the velocity field $-Du$ across the regions where $m$ is supported. This is evident from the FP equation in \eqref{intro_MFG}. To illustrate, let $u_1$ be a solution to \eqref{intro_MFG}. Assume there exists another value function, $u_2$, fulfilling the condition $\partial_t m(t, x) - \Delta m -\div(mDu_2)=0$. By subtracting this equation from the FP equation in \eqref{intro_MFG}, we obtain $\div(m(Du_1 - Du_2))=0$. Further, upon multiplying this resultant equation by $u_1 - u_2$ and applying integration by parts, it follows that $\int m|Du_1 - Du_2|^2\dif x=0$. Consequently, we deduce that $Du_1 = Du_2$ almost everywhere with respect to the measure $m$. Therefore, even without explicit knowledge of the potential function $V$ in the Hamiltonian or the specific interactions among agents, as described by $F$, it is possible to ascertain the strategies $-Du$ of agents solely by observing the population's distribution. However, when the viscosity $\nu$ is not known, it is impossible to recover a unique pair $(\nu, -Du)$ just only from the Fokker--Plank equation. Indeed, given $m$, if $(\nu, u)$ solve $\partial_t m(t, x) - \nu\Delta m -\div(mDu)=0$, $(\nu + C, u - C\log m)$ is also a solution for any constant $C$. Therefore, to uniquely determine the optimal strategy of agents, in addition to the knowledge of the density  $m$,  we need to know either the viscosity $\nu$ or some information about the Hamiltonian and the coupling function $F$ in the HJB equation in \eqref{intro_MFG}. 
Thus, without any information about the environment characterized by $(\nu, H, F, g)$,  it is unlikely to determine the optimal strategies of agents uniquely. 

Problem \ref{main_prob} expands upon the scenarios outlined in the preceding paragraph. It delves into the recovery of the strategy, population, and environment through partial and noisy observations of both the population and the environment.  We tackle Problem \ref{main_prob} through Gaussian Processes (GPs). a powerful non-parametric statistical method renowned for its ability to model and predict complex phenomena with a degree of uncertainty. The beauty of GPs lies in their inherent flexibility, allowing them to capture a wide range of behaviors and patterns without being tied down by a strict parametric form. This makes them an appealing choice for tackling the multifaceted challenges posed by MFGs. We provide a general GP framework which determines the unknown variables using Maximum A Posteriori (MAP) estimators. These estimators are conditioned on the fulfillment of PDEs at designated sample points. Additionally, the process incorporates regularization through the inclusion of observations that are subject to noises.

In Section  \ref{secNumercialExpe}, we demonstrate the effectiveness of our methodology through a variety of examples. Our tests focus on reconstructing comprehensive profiles of populations, strategies of agents, and complete environmental setups from partial and noisy data about populations and environments. In particular, Subsection \ref{sub:num:smfg0}  details the calculation of the viscosity constant, an essential parameter for gauging uncertainty in dynamic systems.  Subsections \ref{sub:num:smfguk} and \ref{sub:num:smfgnlc}  are devoted to identifying the coupling functions in MFG, which reflect the interactions between agents. In Subsection \ref{sub:num:mspfcd}, we address a misspecification problem. This section demonstrates the robustness and effectiveness of our methodology by recovering agents' information using a misspecified model. The data originates from a MFG characterized by local interactions, yet our model employs a non-local coupling approach for this reconstruction. This contrast underscores the adaptability and precision of our technique in handling complex data scenarios. Finally, in Subsection \ref{sub:num:tdmfg}, we apply our methods to a time-dependent MFG, highlighting its flexibility in time-dependent environments.

\subsection{Related Works in MFGs}
Although numerous numerical algorithms exist for solving MFGs (see \cite{achdou2010mean, achdou2012iterative, nurbekyan2019fourier, liu2021computational, liu2020splitting, briceno2018proximal, briceno2019implementation, carmona2021convergence, carmona2019convergence, ruthotto2020machine, lin2020apac, lauriere2021numerical, achdou2020mean, gomes2020hessian,  mou2022numerical, meng2023sparse}), there is a notably limited amount of research addressing the inverse problems in MFGs \cite{liu2023inverse, ding2022mean, chow2022numerical, klibanov2023convexification, klibanov2023h, imanuvilov2023lipschitz, liu2023simultaneously, ding2023determining}. To the best of our knowledge, only three numerical papers have been published that specifically address the inverse problems in MFGs. The work \cite{ding2022mean} introduces models to reconstruct ground metrics and interaction kernels in MFG running costs. Efficiency and robustness of the models are demonstrated through numerical simulations. The paper \cite{chow2022numerical} presents a numerical algorithm to solve an inverse problem in MFGs using partial boundary measurements. Authors in \cite{klibanov2023convexification} propose a globally convergent convexification numerical method for recovering the global interaction term from the single measurement data.  In a recent study, \cite{yang2023context} tackles mean field control inverse problems through operator learning, introducing a novel algorithm designed for broader applications in operator learning. Their approach, as detailed in \cite{yang2023context}, involves training models on pairs of input and output data, all calculated under the same parameters. Contrasting with this method, our paper presents a distinct approach where we estimate the unknowns by analyzing the data in a single observation. 

In contrast to previous research on inverse problems in MFGs, the GP framework introduced in this paper offers a comprehensive approach to simultaneously infer all unknown variables using only noisy, partial observations of the population and environment. Notably, to the best of our knowledge, this is the first study to estimate the viscosity parameter $\nu$ in MFGs. This parameter is vital as it reflects the uncertainty in agents' dynamics.

\subsection{Related Works in GPs}
GP Regression, a Bayesian non-parametric technique for supervised learning, stands out for its ability to quantify uncertainty. GPs have been successfully applied in solving and learning ODEs \cite{hamzi2023learning, yang2023learning} and PDEs \cite{chen2021solving, chen2023sparse, yang2023mini, raissi2017machine, raissi2018numerical}. In the context of MFGs, GPs have also been deployed to solve MFG systems, see \cite{mou2022numerical, meng2023sparse}.

This paper extends the earlier work of learning unknowns in PDEs using GPs in \cite{chen2021solving} for inverse problems in MFGs. The key difference between \eqref{intro_MFG} and a general PDE system lies in the fact that \eqref{intro_MFG} consists of a forward and backward PDE system, where $m$ is given the initial condition and $u$ admits the terminal condition. Hence, the time discretization scheme presented in \cite{chen2021solving} can not be simply applied in the MFG setting. Instead, in Section \ref{secTimDep}, we adopt a forward-backward discretization which is inspired by the finite difference method proposed in  \cite{achdou2010mean} for solving time-dependent inverse problems of MFGs.

\subsection{Outlines}
The structure of this paper is as follows. In Section \ref{secGPR}, we provide a review of the fundamental concepts of GP regression for the sake of readers. In Section \ref{secStationary}, we delve into a comprehensive analysis of Problem \ref{main_prob}, focusing on its application in stationary MFGs. This is followed by formulating the inverse problem of recovering unknown elements in stationary MFGs as an optimal recovery problem. There, we establish the existence of minimizers. Progressing to Section \ref{secTimDep}, we introduce a method of time-discretization to address inverse problems in time-dependent MFGs. The efficacy of our proposed techniques is validated through various numerical experiments, as presented in Section \ref{secNumercialExpe}. The paper concludes in Section \ref{secDisc}, where we engage in further discussion and discuss potential future work.

\begin{notations*}
In this paper, a real-valued vector  $\boldsymbol{v}$ is consistently displayed in boldface, except when it denotes a point in the physical domain. The Euclidean norm of \(\boldsymbol{v}\) is denoted by \(|\boldsymbol{v}|\), while its transpose is represented as \(\boldsymbol{v}^T\). For a vector \(\boldsymbol{v}\), \(v_i\) represents its \(i^{\textit{th}}\) element. For a matrix $M$, we denote by $|M|$ the Frobenius norm.  Given a function \(u\) and a vector \(\boldsymbol{v}\), \(u(\boldsymbol{v})\) denotes the vector \((u(v_1), \dots, u(v_N))\), with \(N\) being the length of \(\boldsymbol{v}\). We denote by $\delta_x$ the Dirac delta function concentrated at $x$. 
  Let \(\Omega\) be a subset of \(\mathbb{R}^d\). The terms \(\operatorname{int}\Omega\) and \(\partial\Omega\) refer to the interior and boundary of \(\Omega\), respectively. We denote by $\mathcal{N}(0, \gamma^2I)$ the standard multivariate normal distribution with the covariance matrix given by $\gamma^2 I$, where $\gamma>0$. For a normed vector space labeled \(V\), its norm is represented by \(\|\cdot\|_V\).

Let's consider a Banach Space \(\mathcal{U}\) associated with a quadratic norm \(\|\cdot\|_{\mathcal{U}}\). The dual space of \(\mathcal{U}\) is represented as \(\mathcal{U}^*\), and we use the notation \([\cdot, \cdot]\) for the duality pairing. We propose the existence of a covariance operator, \(\mathcal{K}_{\mathcal{U}}: \mathcal{U}^*\rightarrow \mathcal{U}\), which is linear, bijective, symmetric (meaning \([\mathcal{K}_{\mathcal{U}}\phi, \psi] = [\mathcal{K}_{\mathcal{U}}\psi, \phi]\)), and maintains positivity (\([\mathcal{K}_{\mathcal{U}}\phi, \phi] > 0\) whenever \(\phi \neq 0\)). The norm in \(\mathcal{U}\) is defined as
	$\|u\|_{\mathcal{U}}^2 = [\mathcal{K}^{-1}_{\mathcal{U}}u, u],  \text{for every } u \in \mathcal{U}$. 
Let \(\boldsymbol{\phi} = (\phi_1, \dots, \phi_P), P \in \mathbb{N}\),  which is situated within the  product space \({(\mathcal{U}^*)}^{\bigotimes P}\). For any element \(u\) in \(\mathcal{U}\), we  express the pairing \([\boldsymbol{\phi}, u]\) as 
$[\boldsymbol{\phi}, u] := ([{\phi}_1, u], \dots, [{\phi}_P, u])$.

Let $\{\mathcal{U}_i\}_{i=1}^N$ be a collection of RKHSs and define the product space $\boldsymbol{\mathcal{U}}=\mathcal{U}_1\times \dots \times \mathcal{U}_N$. We define the dual $\boldsymbol{\mathcal{U}}^*$ of $\boldsymbol{\mathcal{U}}$ as  the product space $\mathcal{U}^*_1\times\dots\times\mathcal{U}_N^*$. Let $\boldsymbol{\phi}=(\phi_1,\dots, \phi_N)\in \boldsymbol{\mathcal{U}}^*$ and  $\boldsymbol{u}=(u_1,\dots, u_N)\in \boldsymbol{\mathcal{U}}$, we define $\boldsymbol{\phi}\odot \boldsymbol{u} = ([\phi_1, u_1], \dots, [\phi_N, u_N])$. 
Let $\Phi$ be a matrix of linear operators with the $i^{\textit{th}}$ column, denoted by $\boldsymbol{\phi}_i$, in $\boldsymbol{\mathcal{U}}^*$. For $\boldsymbol{u}=(u_1,\dots, u_N)\in \mathbb{\mathcal{U}}$, we represent the matrix $[\Phi, \boldsymbol{u}]$ such that the  $i^{\textit{th}}$ column is defined as $\boldsymbol{\phi}_i \odot \boldsymbol{u}$. For a matrix $\boldsymbol{Y}$, let $\overset{\rightarrow}{\boldsymbol{Y}}$ be the vector formed by concatenating ${\boldsymbol{Y}}$'s columns.  The symbol \(C\) stands for a variable positive real number which can vary from one instance to another.
\end{notations*}

\section{Prerequisites for GP Regression}
\label{secGPR}
In this section, we briefly revisit the fundamentals of GP Regression and its connection to Reproducing Kernel Hilbert Spaces (RKHSs). We refer readers to \cite{rasmussen2006gaussian, owhadi2019operator} for an in-depth exploration of the subject.

\subsection{Learning Scalar-valued Functions with GPs}
\label{sub:GPR:DM}
Let \(\Omega \subseteq \mathbb{R}^d\) be an open subset. A real-valued GP, \(f: \Omega \to \mathbb{R}\), is defined such that for any finite set $\boldsymbol{x}$ of points in $\Omega$, \(f(\boldsymbol{x})\) follows a joint Gaussian distribution. A GP $f$ is usually characterized by a mean function \(\mu: \Omega \to \mathbb{R}\) and a covariance function \(K: \Omega \times \Omega \to \mathbb{R}\) such that \(E(f(x)) = \mu(x)\) and \(\operatorname{Cov}(f(x), f(x')) = K(x, x'), \forall x, x'\in \Omega\). In this case, we denote $f\sim \mathcal{GP}(\mu, K)$. In supervised learning, we aim to build a GP estimator \(f^\dagger\) from a training set \((x_i, y_i)_{i=1}^N\). To do that, we  assume \(f \sim \mathcal{GP}(0, K)\) and define the estimator $f^\dagger$ as the mean of the posterior of $f$ conditioned on the training set, i.e., 
\begin{align}
\label{fptst}
f^\dagger = E[f| f(x_i) = y_i, i = 1, \dots, N].
\end{align}
The estimator $f^\dagger$ admits an explicit formula given by 
\begin{align}
\label{fpexpt}
f^\dagger(x) = K(x, \boldsymbol{x})K(\boldsymbol{x}, \boldsymbol{x})^{-1}\boldsymbol{y},
\end{align}
where $\boldsymbol{x}=(x_i)_{i=1}^N, \boldsymbol{y}=(y_i)_{i=1}^N$, $K(x, \boldsymbol{x})=(K(x, x_i))_{i=1}^N$, and \(K(\boldsymbol{x}, \boldsymbol{x})\) is a matrix with entries $K(x_i, x_j)$. The estimator \( f^\dagger \) can also be understood from an optimization perspective. Let $K$ be a positive definite covariance function. Then, according to the Moore--Aronszajn theorem, there is a RKHS $\mathcal{U}$ associated with $K$, such that $f(x) = \langle f, K(x, \cdot)\rangle$. Then, the estimator $f^\dagger$ in \eqref{fptst} and \eqref{fpexpt} is a minimizer of the following optimal recover problem
\begin{align}
\label{gprorp}
\begin{cases}
\min\limits_{f\in \mathcal{U}} \|f\|_{\mathcal{U}}^2\\
\operatorname{s.t.} f(x_i) = y_i, \forall i\in \{1,\dots, N\}. 
\end{cases}
\end{align}
We call \eqref{fpexpt} the representer formula for the optimization problem \eqref{gprorp} and it is the cornerstone of GP Regression. Let $\mathcal{U}^*$ be the dual of $\mathcal{U}$ and denote by $[,]$ the duality pairing. Let $\delta_x$ bet the Dirac function centered at $x$. Then, $\delta_x\in \mathcal{U}^*$ for $x\in \Omega$. Define $\boldsymbol{\delta}=(\delta_{x_1}, \dots, \delta_{x_N})$. Then, for simplicity, we write \eqref{gprorp} as 
\begin{align*}
\begin{cases}
\min\limits_{f\in \mathcal{U}} \|f\|_{\mathcal{U}}^2\\
\operatorname{s.t.} [\boldsymbol{\delta}, f] = \boldsymbol{y}.
\end{cases}
\end{align*}
For a kernel $K$ and a collection $\boldsymbol{\phi}$ of linear operators in the dual space of the RKHS associated with $K$, let $K(x, \boldsymbol{\phi})$ be the vector with entries $\int_{\Omega}K(x,x' )\phi_i(x')\dif x'$  and let $K(\boldsymbol{\phi}, \boldsymbol{\phi})$, called the Gram matrix, be with entries $\int_{\Omega}\int_{\Omega}K(x, x)\phi_i(x')\phi_j(x')\dif x\dif x'$.  Then, we reformulate  $f^\dagger$ in \eqref{fpexpt} as 
\begin{align*}
f^\dagger(x) = K(x, \boldsymbol{\delta})K(\boldsymbol{\delta}, \boldsymbol{\delta})^{-1}\boldsymbol{y}.
\end{align*}
Next, we rewrite the training set $(x_i, y_i)_{i=1}^N$ as $(\delta_{x_i}, y_i)_{i=1}^N$. Then, if we replace the training set $(\delta_{x_i}, y_i)_{i=1}^N$ with $(\phi_i, y_i)_{i=1}^N$, where $\{\phi_i\}_{i=1}^N$ contains other independent linear operators in $\mathcal{U}^*$,  the mean of the posterior of the prior $f$, $f\sim \mathcal{GP}(0, K)$, conditioned on the training set $(\phi_i, y_i)_{i=1}^N$ is given by \cite[Sec 17.8]{owhadi2019operator}
\begin{align}
\label{goprs}
f^\dagger(x) = K(x, \boldsymbol{\phi})K(\boldsymbol{\phi}, \boldsymbol{\phi})^{-1}\boldsymbol{y},
\end{align}
which is the minimizer of the following optimal recovery problem
\begin{align*}
\begin{cases}
\min\limits_{f\in \mathcal{U}} \|f\|_{\mathcal{U}}^2\\
\operatorname{s.t.} [\boldsymbol{\phi}, f] = \boldsymbol{y}.
\end{cases}
\end{align*}

\subsection{Learning Vector-valued Functions with GPs}
\label{sub:GPR:LM}
The above discussion can be extended to consider regressing vector-valued functions with GPs. We refer readers to \cite{alvarez2012kernels, micchelli2005learning} for further details.

Let \(\Omega \subseteq \mathbb{R}^d\) be an open subset. A vector-valued GP, \( \boldsymbol{f}: \Omega \to \mathbb{R}^m \), is defined such that for any  \(\boldsymbol{X} \in \Omega^{\bigotimes N}\), \( \boldsymbol{f}(\boldsymbol{X})\in \mathbb{R}^{N\times m} \) follows a joint Gaussian distribution. A vector-valued GP \(\boldsymbol{f}\) is characterized by a mean function \(\boldsymbol{\mu}: \Omega \to \mathbb{R}^m\) and a covariance function \(K: \Omega \times \Omega \to \mathbb{R}^{m \times m}\) such that \(E(\boldsymbol{f}(x)) = \boldsymbol{\mu}(x)\) and \(\operatorname{Cov}(\boldsymbol{f}(x), \mathbf{f}(x')) = K(x, x')\), for all $x, x'\in \Omega$. We denote this as \(\boldsymbol{f} \sim \mathcal{GP}(\boldsymbol{\mu}, K)\). In the setting of learning vector-valued functions, our goal is to construct a GP estimator \(\boldsymbol{f}^\dagger\) from a training set \((x_i, \boldsymbol{Y}_i)_{i=1}^N\), where $\boldsymbol{Y}_i\in \mathbb{R}^m$. We assume \(\boldsymbol{f} \sim \mathcal{GP}(\boldsymbol{0}, K)\) and define the estimator \(\boldsymbol{f}^\dagger\) as the mean of the posterior of \(\boldsymbol{f}\) conditioned on the training set, i.e.,
\begin{align*}
\boldsymbol{f}^\dagger = E[\boldsymbol{f}| \boldsymbol{f}(x_i) = \boldsymbol{Y}_i, i = 1, \dots, N].
\end{align*}
Let $\boldsymbol{Y}$ be the matrix with the $i^{\textit{th}}$ column being $\boldsymbol{Y}_i$, and let \(\overset{\rightarrow}{\boldsymbol{Y}}\) bet the vector obtained by concatenating the columns of $\boldsymbol{Y}$. The estimator \(\boldsymbol{f}^\dagger\) is explicitly given by 
\begin{align*}
\boldsymbol{f}^\dagger(x) = K(x, \boldsymbol{x})K(\boldsymbol{x}, \boldsymbol{x})^{-1}\overset{\rightarrow}{\boldsymbol{Y}},
\end{align*}
where $K(x, \boldsymbol{x})$ is a matrix with $m$ rows and $N\times m$ columns by concatenating matrices $K(x, x_i)$ for $i=1,\dots, N$, and  \(K(\boldsymbol{x}, \boldsymbol{x})\) is defined as
\begin{align*}
K(\boldsymbol{x}, \boldsymbol{x}')=\begin{bmatrix}
K(x_1, x_1) & \dots & K(x_1, x_N)\\
\dots & \dots & \dots \\
K(x_N, x_1) & \dots & K(x_N, x_N)
\end{bmatrix}.
\end{align*}
The estimator \(\boldsymbol{f}^\dagger\) also has an optimization interpretation. Let $\boldsymbol{\mathcal{U}}$ be the vector-valued RKHS associated with $K$, where \(\boldsymbol{f}(x)^T\boldsymbol{c} = \langle \boldsymbol{f}, K(\cdot, x)\boldsymbol{c}\rangle\), for any $\boldsymbol{c}\in \mathbb{R}^m$ and $\boldsymbol{f}\in \boldsymbol{\mathcal{U}}$. The estimator \(\boldsymbol{f}^\dagger\) minimizes the following optimal recovery problem
\begin{align*}
\begin{cases}
\min\limits_{\boldsymbol{f}\in \mathcal{U}} \|\boldsymbol{f}\|_{\mathcal{U}}^2\\
\operatorname{s.t.} \boldsymbol{f}(x_i) = \boldsymbol{Y}_i, \forall i\in \{1,\dots, N\}. 
\end{cases}
\end{align*}
In this paper, we are interested in the special case where each output of $\boldsymbol{f}$ can be modeled as an independent GP, i.e., $K(x, x')$ is a diagonal matrix for all $x, x'\in \Omega$. Denote $K(x, x')=\operatorname{diag}(K_1(x, x'), \dots,  K_m(x, x'))$. For simplicity, we represent $\boldsymbol{f}(x)=(f_1(x), \dots, f_m(x))$ and $f_i \in \mathcal{U}_i$ with $\mathcal{U}_i$ being a RKHS associated with the kernel $K_i$ for $i=1,\dots, m$. Hence, the RKHS $\boldsymbol{\mathcal{U}}$ is equivalent to $\mathcal{U}_1\times\dots\times \mathcal{U}_m$. Accordingly, $\|\boldsymbol{f}\|_{\boldsymbol{\mathcal{U}}}^2 = \sum_{i=1}^N\|f_i\|_{\mathcal{U}_i}^2, \forall \boldsymbol{f}\in \boldsymbol{\mathcal{U}}$. 

Let $\boldsymbol{\mathcal{U}}^*=\mathcal{U}_1^*\times \dots \times \mathcal{U}_m^*$.  Consider the training set with $(\boldsymbol{\phi}_i, \boldsymbol{Y}_i)_{i=1}^N$, where $\boldsymbol{\phi}_i$ is a vector of linear operators with $\phi_{i,j}\in \mathcal{U}_j^*$. Let $\Phi$ be the matrix with the $i^{\textit{th}}$ column being $\boldsymbol{\phi}_i$. Let $\boldsymbol{\phi} := \overset{\rightarrow}{\Phi}$,  the vector concatenating $(\boldsymbol{\phi}_i)_{i=1}^N$. Then, the posterior mean of the prior $\boldsymbol{f}$, assumed as $\boldsymbol{f} \sim \mathcal{GP}(\boldsymbol{0}, K)$, is given by $\boldsymbol{f}^\dagger = E[\boldsymbol{f}| [\Phi, \boldsymbol{f}] = \boldsymbol{Y} ]$ admits the following representer formula
\begin{align}
\label{reprefmlve}
\boldsymbol{f}^\dagger(x) = K(x, \boldsymbol{\phi})K(\boldsymbol{\phi}, \boldsymbol{\phi})^{-1}\overset{\rightarrow}{\boldsymbol{Y}},
\end{align}
where $\overset{\rightarrow}{\boldsymbol{Y}}$ is the vector concatenating columns of $\boldsymbol{Y}$ and $\boldsymbol{f}^\dagger$ minimizes the following  optimal recovery problem:
\begin{align*}
\begin{cases}
\min\limits_{\boldsymbol{f}\in \boldsymbol{\mathcal{U}} } \|\boldsymbol{f}\|_{\boldsymbol{\mathcal{U}}}^2\\
\operatorname{s.t.} [\Phi, \boldsymbol{f}] = \boldsymbol{Y}.
\end{cases}
\end{align*}

\section{An Inverse Problem for Stationary MFGs}
\label{secStationary}
In this section, we elaborate Problem \ref{main_prob} in the context of stationary MFGs. Inverse problems in stationary MFGs are intriguing as they represent the long-term limits of time-dependent MFGs \cite{
cannarsa2020long, cirant2021long, cardaliaguet2012long}. Under certain conditions, agents' states converge exponentially to a steady state \cite{
cannarsa2020long, cirant2021long, cardaliaguet2012long}. Hence, MFG data is likely to reflect this steady state, making it practical to focus on recovering strategies and environmental information in this equilibrium. Then, we introduce a GP framework to solve the inverse problem. This approach builds upon the GP method from \cite{chen2021solving}, adapting it to coupled PDE systems with integral constraints. Subsection \ref{sub:ProbSMFG} redefines Problem \ref{main_prob} for stationary MFGs, Subsection \ref{sub:OptProb} formulates the inverse problem as an optimal recovery problem and provides a general framework for solving inverse problems of stationary MFGs. To illustrate the framework, we describe our method through a concrete example in Subsection \ref{sub:concex}. Subsection \ref{sub:eximi} proves the existence of a minimizer, confirming the solvability of our proposed inverse problem. Subsection \ref{sub:probpersp} offers a probabilistic interpretation of our method, linking it to maximum likelihood estimation.

\subsection{An Inverse Problem  for  Stationary MFGs}
\label{sub:ProbSMFG}
Let $\Omega$ be a subset of $\mathbb{R}^d$. Suppose that the stationary MFGs of our interests admit the following form
\begin{align}
	\label{gsmfgs}
	\begin{cases}
		\boldsymbol{\mathcal{P}}(u^*, m^*, \overline{H}^*, \boldsymbol{V}^*)(x)= 0, &\forall x \in \operatorname{int}\Omega,\\
		\mathcal{B}(u^*, m^*, \boldsymbol{V}^*)(x) = 0, &\forall x \in \partial \Omega,\\
		\int_{\Omega}u^*\dif x=0, \int_{\Omega}m^*\dif x = 1, 
	\end{cases}
\end{align}  
where $\boldsymbol{V}^*$ contains a set of functions, which represent  environment configurations. Given $\boldsymbol{V}^*$, $(u^*, m^*, \overline{H}^*)$ solves the stationary MFG \eqref{gsmfgs} such that $u^*$ stands for the value function, $m^*$ represents the density of populations of agents, and  $\overline{H}^*$ is real number ensuring the unit mass of $m^*$. In this context, $\boldsymbol{\mathcal{P}}$ is a collection of nonlinear differential operators, while $\mathcal{B}$ stands for a boundary operator. We proceed under the assumption that equation \eqref{gsmfgs} has a unique classical solution $(u^*, m^*, \overline{H}^*)$ when the underlying environment configuration $\boldsymbol{V}^*$ is given. The equation \eqref{gsmfgs} represents a general form of the long-time limit of the time-dependent MFG described in \eqref{intro_MFG}. For discussions on the long-time behavior of MFGs, see \cite{cannarsa2020long, cirant2021long, cardaliaguet2012long}. We detail Problem \ref{main_prob} for stationary MFGs of the form \eqref{gsmfgs} in the following problem.
\begin{problem}
\label{st_prob}
Let $\boldsymbol{V}^*=\{V^*_i\}_{i=1}^{N_v}$ be a collection of functions on $\Omega$, where $N_v\in \mathbb{N}$. Assume that given $\boldsymbol{V}^*$, \eqref{gsmfgs} admits a unique classical solution $(u^*, m^*, \overline{H}^*)$. Suppose that we only have some partial noisy observations on $m^*$ and on a subset $\boldsymbol{V}_o^*$ of $\boldsymbol{V}^*$. The objective is to infer the values of $u^*$, $m^*$, $\overline{H}^*$, and $\boldsymbol{V}^*$. To elaborate, we have 
\begin{enumerate}
    \item \textbf{Partial noisy observations on $m^*$}.  We have a set of linear operators $\{\phi_l^o\}_{l=1}^{N_m}$, $N_m\in \mathbb{N}$ where some noisy observations on $m^*$ are available. These observations are represented as $\boldsymbol{m}^o$, i.e. $\boldsymbol{m}^o = ([\phi_1^o, m^*], \dots, [\phi_{N_m}^o, m^*]) + \boldsymbol{\epsilon}$, $\boldsymbol{\epsilon}\sim \mathcal{N}(0, \gamma^2 I)$. We call $\boldsymbol{\phi}^o=(\phi_1^o, \dots, \phi_{N_m}^o)$ the vector of observation operators. For instance, if we only have observations of $m^*$ at a finite set of collocation points,  $\boldsymbol{\phi}^o$ contains Diracs centered at these points. 
    \item \textbf{Partial noisy observations of $\boldsymbol{V}^*$}. Let $\boldsymbol{V}_o^*=\{V_j^*\}_{j\in \mathcal{O}_v}$ with  $\mathcal{O}_v\subset \{1,\dots, N_v\}$ be a subset of $\boldsymbol{V}^*$. For this subset, we possess noisy observations at collections of  linear operators $\{\{\psi^{o,j}_e\}_{e=1}^{N^{j}_{v}}\}_{j\in \mathcal{O}_v}$. These observations can be collected into a matrix $\boldsymbol{V}^o$ such that the $j^{\textit{th}}$ row of $\boldsymbol{V}^o$, denoted by $\boldsymbol{V}_j^o$  is given by 
\begin{align*}
\boldsymbol{V}^o_j = ([\psi^{o,j}_1, V_j], \dots, [\psi^{o,j}_{N_v^j}, V_j]) + \boldsymbol{\epsilon}_j, \boldsymbol{\epsilon}_j \sim \mathcal{N}(0, \gamma_j^2I). 
\end{align*}
\end{enumerate}
\textbf{Inverse Problem Statement}\\
Suppose that agents are involved in the stationary MFG in \eqref{gsmfgs}, we infer $u^*$, $m^*$, $\overline{H}^*$, and $\boldsymbol{V}^*$ based on $\boldsymbol{m}^o$ and $\boldsymbol{V}^o$.
\end{problem}

\subsection{The Optimal Recovery Problem}
\label{sub:OptProb}
To solve Problem \ref{st_prob}, we approximate $(u^*, m^*, \overline{H}^*, \boldsymbol{V}^*)$ by GPs conditioned on PDEs at sampled collocation points in $\Omega$. Then, we compute the solution by calculating the MAP points of such conditioned GPs. More precisely, we take a set of samples $\{x_i\}_{i=1}^M$ in such a way that $x_1, \dots, x_{M_\Omega}\in \operatorname{int}\Omega$ and $x_{M_\Omega+1}, \dots, x_M\in \partial \Omega$ for $1\leq M_\Omega\leq M$. We put GP priors on $u^*, m^*, \overline{H}^*$ and $\boldsymbol{V}^*$ by assuming that $u^*\sim \mathcal{GP}(0, K_u)$, $m^*\sim \mathcal{GP}(0, K_m)$, $\overline{H}^*\sim \mathcal{N}(0, 1)$, and $\boldsymbol{V}^*\sim \mathcal{GP}(\boldsymbol{0}, K_{\boldsymbol{V}})$, where $\boldsymbol{V}^*$ is a vector-valued GP with independent outputs.  Let $\mathcal{U}$,  $\mathcal{M}$, and $\boldsymbol{\mathcal{V}}$ be RKHSs associated with $K_u, K_m$, and $K_{\boldsymbol{V}}$, separately. Denote $\boldsymbol{\phi}^o = (\phi^{o}_1, \dots, \phi_{{N_m}}^o)$. Let $\boldsymbol{m}^o$ be a noisy observations of $[\boldsymbol{\phi}^o, m^*]$ such that $[\boldsymbol{\phi}^o, m^*] - \boldsymbol{m}^o \in \mathcal{N}(0, \gamma^2 I)$. Let $\Psi$ be a matrix of linear functionals with each column in $\boldsymbol{\mathcal{V}}^*$, the dual of $\boldsymbol{\mathcal{V}}$,  such that  we have a noisy observation $\boldsymbol{V}^o$ of $[\Psi, \boldsymbol{V}^*]$. 
Then, to solve Problem \ref{st_prob}, we introduce a positive penalization parameter $\beta$ and approximate $(u^*, m^*, \overline{H}^*, \boldsymbol{V}^*)$ by a minimizer of the following optimal recovery problem
\begin{align}
\label{OptGPProb}
\begin{cases}
\min\limits_{(u, m ,\overline{H}, \boldsymbol{V})\in \mathcal{U}\times\mathcal{M}\times\mathbb{R}\times \boldsymbol{\mathcal{V}}} \|u\|_{\mathcal{U}}^2 + \|m\|_{\mathcal{M}}^2 + |\overline{H}|^2 + \|\boldsymbol{V}\|_{\boldsymbol{\mathcal{V}}}^2+ \frac{1}{\gamma^2} |[\boldsymbol{\phi}^o, m] - \boldsymbol{m}^o|^2 + |\Sigma^{-1}([\Psi, \boldsymbol{V}] - \boldsymbol{V}^o)|^2\\
\quad\quad\quad\quad\quad\quad\quad+\beta \bigg|\frac{1}{M_\Omega}\sum\limits_{i=1}^{M_\Omega}u(x_i)\bigg|^2+\beta\bigg|\frac{1}{M_\Omega}\sum\limits_{i=1}^{M_\Omega}m(x_i)-1\bigg|^2\\
\text{s.t.}\quad \boldsymbol{\mathcal{P}}(u, m, \overline{H}, \boldsymbol{V})(x_i) = 0, \quad\text{for}\ i = 1, \dots M_{\Omega},\\
\quad\quad\ \mathcal{B}(u, m, \boldsymbol{V})(x_j) = 0, \quad \text{for}\ j = M_{\Omega+1},\dots, M,
\end{cases}
\end{align}
where $\Sigma = \operatorname{diag}(\gamma_1, \dots, \gamma_{N_v})$. The formulation in \eqref{OptGPProb} extends the one in Section 4.4 of \cite{chen2021solving}, adapting it to a coupled PDE system with integral constraints. Our approach, which involves discretizing the integral constraints and incorporating them into the integration process, circumvents the need for kernel integration during covariance matrix assembly. Should the solution of \eqref{gsmfgs} lack sufficient smoothness, we recommend employing the vanishing viscosity method \cite{cardaliaguet2010notes}, or regularizing the MFG using smooth mollifiers \cite{cesaroni2019stationary}. This approach yields a system with solutions of higher regularity, to which our numerical methods can be applied.

The minimization in \eqref{OptGPProb} involves optimization within function spaces. For computational feasibility through numerical algorithms, we reformulate \eqref{OptGPProb} into a finite-dimensional minimization problem.  To do that, we make the following  assumptions about the forms of $\boldsymbol{\mathcal{P}}$ and $\mathcal{B}$. 
\begin{hyp}
\label{hyp_pb}
Assume that there exists collections of bounded linear operators $\boldsymbol{L}^{u, \Omega}$, $\boldsymbol{L}^{u, \partial\Omega}$, $\boldsymbol{L}^{m, \Omega}$, $\boldsymbol{L}^{m, \partial \Omega}$, $\boldsymbol{L}^{\boldsymbol{{V}}, \Omega}$, and $\boldsymbol{L}^{\boldsymbol{{V}}, \partial\Omega}$,  and continuous nonlinear maps $\boldsymbol{{P}}$ and $B$ such that
\begin{align*}
\begin{cases}
\boldsymbol{\mathcal{P}}(u, m, \overline{H}, \boldsymbol{V})(x) = \boldsymbol{{P}}(\boldsymbol{L}^{u, \Omega}(u)(x), \boldsymbol{L}^{m, \Omega}(m)(x), \overline{H}, \boldsymbol{L}^{\boldsymbol{V}, \Omega}(\boldsymbol{V})(x)), \forall x\in \operatorname{int}\Omega, \\
\mathcal{B}(u, m, \boldsymbol{V})(x) = B(\boldsymbol{L}^{u, \partial \Omega}(u)(x), \boldsymbol{L}^{m, \partial \Omega}(m)(x),  \boldsymbol{L}^{\boldsymbol{V}, \partial \Omega}(\boldsymbol{V})(x)), \forall x\in \partial \Omega. 
\end{cases}
\end{align*}
In the following texts, we denote $\boldsymbol{L}^{u} := (\boldsymbol{L}^{u, \Omega}, \boldsymbol{L}^{u, \partial\Omega})$, $\boldsymbol{L}^{m} := (\boldsymbol{L}^{m, \Omega}, \boldsymbol{L}^{m, \partial\Omega})$, and $\boldsymbol{L}^{\boldsymbol{{V}}} := (\boldsymbol{L}^{\boldsymbol{{V}}, \Omega}, \boldsymbol{L}^{\boldsymbol{{V}}, \partial\Omega})$.
\end{hyp}
Under Assumption \ref{hyp_pb}, we define a functional vector $\boldsymbol{\delta}:=(\boldsymbol{\delta}^{\Omega}, \boldsymbol{\delta}^{\partial\Omega})$, where $\boldsymbol{\delta}^\Omega=(\delta_{x_1},\dots, \delta_{x_{M_\Omega}})$ and $\boldsymbol{\delta}^{\partial\Omega}=(\delta_{x_{M_\Omega+1}},\dots, \delta_{x_{M}})$. Let $\widetilde{\boldsymbol{\phi}}^u = \boldsymbol{\delta}\circ \boldsymbol{L}^u$, $\widetilde{\boldsymbol{\phi}}^m = \boldsymbol{\delta}\circ \boldsymbol{L}^m$, and $\widetilde{\boldsymbol{\phi}}^{\boldsymbol{V}} = \boldsymbol{\delta}\circ \boldsymbol{L}^{\boldsymbol{V}}$. We define the nonlinear map $G$ such that for any $u\in \mathcal{U}$, $m\in \mathcal{M}$, $\overline{H}\in \mathbb{R}$, and $\boldsymbol{V}\in \boldsymbol{\mathcal{V}}$, we have 
\begin{align*}
(G([\widetilde{\boldsymbol{\phi}}^{u}, u], [\widetilde{\boldsymbol{\phi}}^m, m], \overline{H}, [\widetilde{\boldsymbol{\phi}}^{\boldsymbol{V}}, \boldsymbol{V}]))_i=\begin{cases}
\boldsymbol{{P}}([\delta_{x_i}\circ \boldsymbol{L}^{\Omega, u}, u], [\delta_{x_i}\circ \boldsymbol{L}^{\Omega, m}, m], \overline{H}, [\delta_{x_i}\circ \boldsymbol{L}^{\Omega, \boldsymbol{V}}, \boldsymbol{V}])\\
\quad\quad\quad\quad\quad\quad\quad \text{if}\ i \in \{1, \dots, M_{\Omega}\}, \\
B([\delta_{x_i}\circ \boldsymbol{L}^{\partial\Omega, u}, u], [\delta_{x_i}\circ \boldsymbol{L}^{\partial\Omega, m}, m], [\delta_{x_i}\circ \boldsymbol{L}^{\partial\Omega, \boldsymbol{V}}, \boldsymbol{V}])\\
\quad\quad\quad\quad\quad\quad\quad \text{if}\ i \in \{M_{\Omega}+1, \dots, M\}. 
\end{cases}
\end{align*}
Then, we rewrite \eqref{OptGPProb} as
\begin{align}
\label{OptGPProb_Cpt}
\begin{cases}
\min\limits_{(u, m ,\overline{H}, \boldsymbol{V})\in \mathcal{U}\times\mathcal{M}\times\mathbb{R}\times \boldsymbol{\mathcal{V}}} \|u\|_{\mathcal{U}}^2 + \|m\|_{\mathcal{M}}^2 + |\overline{H}|^2 + \|\boldsymbol{V}\|_{\boldsymbol{\mathcal{V}}}^2+ \frac{1}{\gamma^2} |[\boldsymbol{\phi}^o, m] - \boldsymbol{m}^o|^2 + |\Sigma^{-1}([\Psi, \boldsymbol{V}] - \boldsymbol{V}^o)|^2\\
\quad\quad\quad\quad\quad\quad\quad+\beta \bigg|\frac{1}{M_\Omega}\sum\limits_{i=1}^{M_\Omega}u(x_i)\bigg|^2+\beta\bigg|\frac{1}{M_\Omega}\sum\limits_{i=1}^{M_\Omega}m(x_i)-1\bigg|^2\\
\text{s.t.}\quad G([\widetilde{\boldsymbol{\phi}}^{u}, u], [\widetilde{\boldsymbol{\phi}}^{m}, m], \overline{H}, [\widetilde{\boldsymbol{\phi}}^{\boldsymbol{V}}, \boldsymbol{V}]) = 0. 
\end{cases}
\end{align}
To solve \eqref{OptGPProb_Cpt}, we introduce latent variables, $\boldsymbol{z}, \boldsymbol{\rho}$, and $\boldsymbol{v}$, and rewrite \eqref{OptGPProb_Cpt} as 
\begin{align}
\label{OptGPProb_Cpttl}
\begin{cases}
\min\limits_{\boldsymbol{z}, \boldsymbol{\rho}, \boldsymbol{v}, \overline{H}}
\begin{cases}
\min\limits_{(u, m, \boldsymbol{V})\in \mathcal{U}\times\mathcal{M}\times \boldsymbol{\mathcal{V}}} \|u\|_{\mathcal{U}}^2 + \|m\|_{\mathcal{M}}^2 + \|\boldsymbol{V}\|_{\boldsymbol{\mathcal{V}}}^2\\
\text{s.t.} \quad [\boldsymbol{\delta}^{\Omega}, u] = \boldsymbol{z}^{(1), \Omega}, [\boldsymbol{\delta}^{\partial\Omega}, u] = \boldsymbol{z}^{(1), \partial\Omega},  [\widetilde{\boldsymbol{\phi}}^u, u] = \boldsymbol{z}^{(2)}, \\ \quad
\quad[\boldsymbol{\delta}^{\Omega}, m] = \boldsymbol{\rho}^{(1), \Omega}, [\boldsymbol{\delta}^{\partial\Omega}, m] = \boldsymbol{\rho}^{(1), \partial\Omega}, [\boldsymbol{\phi}^o, m] =  \boldsymbol{\rho}^{(2)}, 
[\widetilde{\boldsymbol{\phi}}^m, m] = \boldsymbol{\rho}^{(3)},\\ \quad\quad[\widetilde{\boldsymbol{\phi}}^{\boldsymbol{V}}, \boldsymbol{V}] = \boldsymbol{v}^{(1)},  [\Psi, \boldsymbol{V}] = \boldsymbol{v}^{(2)}, 
\end{cases}\\
\quad\quad + |\overline{H}|^2 + \frac{1}{\gamma^2} |\boldsymbol{\rho}^{(2)} - \boldsymbol{m}^o|^2 + |\Sigma^{-1}(\boldsymbol{v}^{(2)} - \boldsymbol{V}^o)|^2+\beta \bigg|\frac{1}{M_\Omega}\sum\limits_{i=1}^{M_\Omega}z^{(1), \Omega}_i\bigg|^2+\beta\bigg|\frac{1}{M_\Omega}\sum\limits_{i=1}^{M_\Omega}\rho^{(1), \Omega}_i-1\bigg|^2\\
\text{s.t.}\quad G(\boldsymbol{z}^{(2)}, \boldsymbol{\rho}^{(3)}, \overline{H}, \boldsymbol{v}^{(1)}) = 0,
\end{cases}
\end{align}
where $\boldsymbol{z} = (\boldsymbol{z}^{(1), \Omega}, \boldsymbol{z}^{(1), \partial\Omega}, \boldsymbol{z}^{(2)})$, $\boldsymbol{\rho}=(\boldsymbol{\rho}^{(1), \Omega}, \boldsymbol{\rho}^{(1), \partial\Omega}, \boldsymbol{\rho}^{(2)}, \boldsymbol{\rho}^{(3)})$, and $\boldsymbol{v}=(\boldsymbol{v}^{(1)}, \boldsymbol{v}^{(2)})$.  Denote $\boldsymbol{\phi}^u := (\boldsymbol{\delta}, \widetilde{\boldsymbol{\phi}}^u)$,  $\boldsymbol{\phi}^m :=(\boldsymbol{\delta}, \boldsymbol{\phi}^o, \widetilde{\boldsymbol{\phi}}^m)$, and $\boldsymbol{\phi}^{\boldsymbol{V}} :=(\widetilde{\boldsymbol{\phi}}^{\boldsymbol{V}}, \Psi)$. 

For a kernel $K$ and a collection $\boldsymbol{\phi}$ of linear operators in the dual space of the RKHS associated with $K$, let $K(x, \boldsymbol{\phi})$ be the vector with entries $\int_{\Omega}K(x,x' )\phi_i(x')\dif x'$  and let $K(\boldsymbol{\phi}, \boldsymbol{\phi})$, called the Gram matrix, be with entries $\int_{\Omega}\int_{\Omega}K(x, x)\phi_i(x')\phi_j(x')\dif x\dif x'$. Let $K_u$, $K_m$, and $K_{\boldsymbol{V}}$ be kernels associated with RKHSs $\mathcal{U}$, $\mathcal{M}$, and $\mathcal{V}$. Let $(u^\dagger, m^\dagger, \boldsymbol{V}^\dagger)$ be the solution to the first level minimization problem in \eqref{OptGPProb_Cpttl} given $(\boldsymbol{z}, \boldsymbol{\rho}, \boldsymbol{v}, \overline{H})$.  By the representer formulas in \eqref{goprs} and \eqref{reprefmlve}, we get
\begin{align}
	\label{1drepre}
	\begin{cases}
		u^\dagger(x)=K_u(x,\boldsymbol{\phi}^u)K_u(\boldsymbol{\phi}^u, \boldsymbol{\phi}^u)^{-1}\boldsymbol{z},\\
		m^\dagger(x)=K_m(x,{\boldsymbol{\phi}}^m) K({\boldsymbol{\phi}}^m, {\boldsymbol{\phi}}^m)^{-1}\boldsymbol{\rho},\\
		\boldsymbol{V}^\dagger(x)= {K}_{\boldsymbol{V}}(x,{\boldsymbol{\phi}}^{\boldsymbol{V}}) {K}_{\boldsymbol{V}}({\boldsymbol{\phi}}^{\boldsymbol{V}},  {\boldsymbol{\phi}}^{\boldsymbol{V}})^{-1}\boldsymbol{v}. 
	\end{cases}
\end{align}
Thus, we have
\begin{align*}
	\begin{cases}		\|u^\dagger\|_{\mathcal{U}}^2=\boldsymbol{z}^TK_u(\boldsymbol{\phi}^u, \boldsymbol{\phi}^u)^{-1}\boldsymbol{z},\\
\|m^\dagger\|_{\mathcal{M}}^2=\boldsymbol{\rho}^TK_m({\boldsymbol{\phi}}^m, {\boldsymbol{\phi}}^m)^{-1}\boldsymbol{\rho}, \\
\|\boldsymbol{V}^\dagger\|_{\boldsymbol{\mathcal{V}}}^2=\boldsymbol{v}^T{K}_{\boldsymbol{V}}({\boldsymbol{\phi}}^{\boldsymbol{V}}, {\boldsymbol{\phi}}^{\boldsymbol{V}})^{-1}\boldsymbol{v}, 
	\end{cases}
\end{align*}
Hence, we can formulate \eqref{OptGPProb_Cpttl} as a finite-dimensional optimization problem
\begin{align}
\label{OptGPProb_Cpttl_fin}
\begin{cases}
\min\limits_{\boldsymbol{z}, \boldsymbol{\rho}, \boldsymbol{v}, \overline{H}} \boldsymbol{z}^TK_u(\boldsymbol{\phi}^u, \boldsymbol{\phi}^u)^{-1}\boldsymbol{z} + \boldsymbol{\rho}^TK_m({\boldsymbol{\phi}}^m, {\boldsymbol{\phi}}^m)^{-1}\boldsymbol{\rho} + \boldsymbol{v}^T{K}_{\boldsymbol{V}}({\boldsymbol{\phi}}^{\boldsymbol{V}}, {\boldsymbol{\phi}}^{\boldsymbol{V}})^{-1}\boldsymbol{v} + |\overline{H}|^2 \\
\quad\quad\quad\quad + \frac{1}{\gamma^2} |\boldsymbol{\rho}^{(2)} - \boldsymbol{m}^o|^2 + |\Sigma^{-1}(\boldsymbol{v}^{(2)} - \boldsymbol{V}^o)|^2+\beta \bigg|\frac{1}{M_\Omega}\sum\limits_{i=1}^{M_\Omega}z^{(1), \Omega}_i\bigg|^2+\beta\bigg|\frac{1}{M_\Omega}\sum\limits_{i=1}^{M_\Omega}\rho^{(1), \Omega}_i-1\bigg|^2\\
\text{s.t.}\quad G(\boldsymbol{z}^{(2)}, \boldsymbol{\rho}^{(3)}, \overline{H}, \boldsymbol{v}^{(1)}) = 0.
\end{cases}
\end{align}
To deal with the nonlinear constraints in \eqref{OptGPProb_Cpttl_fin}, we introduce a prescribed penalization parameter $\alpha>0$ and consider the following relaxation 
\begin{align}
\label{OptGPProb_Cpttl_f_relax}
\begin{cases}
\min\limits_{\boldsymbol{z}, \boldsymbol{\rho}, \boldsymbol{v}, \overline{H}} \boldsymbol{z}^TK_u(\boldsymbol{\phi}^u, \boldsymbol{\phi}^u)^{-1}\boldsymbol{z} + \boldsymbol{\rho}^TK_m({\boldsymbol{\phi}}^m, {\boldsymbol{\phi}}^m)^{-1}\boldsymbol{\rho} + \boldsymbol{v}^T{K}_{\boldsymbol{V}}({\boldsymbol{\phi}}^{\boldsymbol{V}}, {\boldsymbol{\phi}}^{\boldsymbol{V}})^{-1}\boldsymbol{v} + |\overline{H}|^2 \\
\quad\quad\quad\quad + \frac{1}{\gamma^2} |\boldsymbol{\rho}^{(2)} - \boldsymbol{m}^o|^2 + |\Sigma^{-1}(\boldsymbol{v}^{(2)} - \boldsymbol{V}^o)|^2+\beta \bigg|\frac{1}{M_\Omega}\sum\limits_{i=1}^{M_\Omega}z^{(1), \Omega}_i\bigg|^2+\beta\bigg|\frac{1}{M_\Omega}\sum\limits_{i=1}^{M_\Omega}\rho^{(1), \Omega}_i-1\bigg|^2\\
\quad\quad\quad\quad + \alpha|G(\boldsymbol{z}^{(2)}, \boldsymbol{\rho}^{(3)}, \overline{H}, \boldsymbol{v}^{(1)})|^2.
\end{cases}
\end{align} 
The problem \eqref{OptGPProb_Cpttl_f_relax} is the foundation of the GP method for solving \eqref{OptGPProb}. 
We use the Gauss--Newton method to solve \eqref{OptGPProb_Cpttl_f_relax} (see Section 3 of  \cite{chen2021solving}).

\begin{remark}
	\label{rmk1dnugget}
	Generally, for a kernel $K$ and a vector $\boldsymbol{\phi}$ of linear operators in the dual space of RKHS associated with $K$,  the Gram matrix $K(\boldsymbol{\phi}, \boldsymbol{\phi})$ is ill-conditioned. To compute $K(\boldsymbol{\phi}, \boldsymbol{\phi})^{-1}$, we perform the  Cholesky decomposition on $K(\boldsymbol{\phi}, \boldsymbol{\phi})+\eta R$, where $\eta>0$ is a  chosen regularization  constant, and $R$ is a block diagonal nugget constructed using the approach introduced in \cite{chen2021solving}. In the numerical experiments, we precompute the Cholesky decomposition of  $K(\boldsymbol{\phi}, \boldsymbol{\phi})+\eta R$ and store Cholesky factors for further uses. 
\end{remark} 

\subsection{A Concrete Example}
\label{sub:concex}
To illustrate the general framework described in Subsection \ref{sub:OptProb}, we apply our method to solve a one-dimensional stationary MFG, which is analogous to the examples in Section \ref{secNumercialExpe}. 
Let $\mathbb{T}$ be the one-dimensional torus and be characterized by $[0, 1)$.  
Let $V^*:\mathbb{T}\mapsto \mathbb{R}$ be a smooth solution. We consider the following stationary MFG
\begin{align}
	\label{1dsmfg}
	\begin{cases}
		\frac{(u_x^*)^2}{2}+V^*(x) = (m^*)^2 + \overline{H}^*, & \text{on}\ \mathbb{T},\\
		-(m^*(u_x^*))_x = 0, & \text{on}\ \mathbb{T},\\
		\int_{\mathbb{T}}m^* \dif x = 1, \int_{\mathbb{T}}u^* \dif x = 0, &
	\end{cases}
\end{align}
where $u^*$ is the value function, $m^*$ presents the probability density of the population, and $\overline{H}^*$ is a real number. 
We note that given $V^*$, \eqref{1dsmfg} admits a smooth solution $(u^*, m^*, \overline{H}^*)$ such that $u^*=0$, $m^*=\sqrt{V^* - \overline{H}^*}$, and $\overline{H}^*$ is a constant satisfying $\int_{\mathbb{T}}\sqrt{V^* - \overline{H}^*}\dif x = 1$. In this subsection, our goal is to recover $u^*, m^*, V^*$, and $\overline{H}^*$ using the framework presented in the previous section under noisy observations of  $m^*$ and $V^*$, . More precisely, let $\{x_i^o\}_{i=1}^{N_m}$ be a collection of observation points for $m^*$ and let $\{x_i^{o,1}\}_{i=1}^{N_v^1}$ be another set of observation points for $V^*$. Let $\boldsymbol{m}^o$ be a noisy observation of $(m^*(x_i^o))_{i=1}^{N_m}$ such that ${m}^o_i - m^*(x_i^o)\sim \mathcal{N}(0, \gamma^2)$. Meanwhile, let $\boldsymbol{V}^o$ be a noisy observation of $V^*$ at  $\{x_i^{o,1}\}_{i=1}^{N_v^1}$  such that $V^*(\boldsymbol{x}^{o,1}) - \boldsymbol{V}^o \sim \mathcal{N}(0, \Sigma)$, where $\Sigma$ is the covariance matrix for noises. Let $\{x_i\}_{i=1}^M$ be a collection of collocation points on $\mathbb{T}$. Let $\mathcal{U}$, $\mathcal{M}$, and $\mathcal{V}$ be the RKHSs associated with positive definite kernels $K_u$, $K_m$, and $K_{{V}}$, respectively. Then, we recover $(u^*, m^*, V^*, \overline{H}^*)$  by a minimizer of the following optimal recovery problem
\begin{align}
	\label{1ssmfgoptpk}
	\begin{cases}
		\min\limits_{(u, m , V, \overline{H})\in \mathcal{U}\times\mathcal{M}\times\mathcal{V}\times\mathbb{R}} \|u\|_{\mathcal{U}}^2+\|m\|_{\mathcal{M}}^2 + \|V\|_{\mathcal{V}}^2+|\overline{H}|^2 + \frac{1}{\gamma^2}|[\boldsymbol{\delta}^o, m] - \boldsymbol{m}^o|^2 + |\Sigma^{-1}( \boldsymbol{V}(\boldsymbol{x}^{o,1}) - \boldsymbol{V}^o)|^2\\
  \quad\quad\quad\quad\quad\quad+ \beta\bigg|\frac{1}{M}\sum\limits_{i=1}^{M}m(x_i)-1\bigg|^2+\beta\bigg|\frac{1}{M}\sum\limits_{i=1}^{M}u(x_i)\bigg|^2\\
		\text{s.t.}\  \frac{u_x^2(x_i)}{2} + V(x_i)= m^2(x_i) + \overline{H}, \quad\quad\quad\forall i = 1,\dots,M,\\
		\quad \ \  m_{x}(x_i)u_x(x_i) + m(x_i)u_{xx}(x_i)  = 0, \quad\forall i = 1,\dots,M,
	\end{cases}
\end{align}
where $\beta>0$ is a penalization parameter. 
Let $K_u$, $K_m$, and $K_{{V}}$ be kernels associated with RKHSs $\mathcal{U}$, $\mathcal{M}$, and ${\mathcal{V}}$ such that $u^*\in \mathcal{U}$, $m^*\in \mathcal{M}$, and ${V}^*\in {\mathcal{V}}$. Then, the constraints of \eqref{1ssmfgoptpk} are well-defined. 
Let $(u^\dagger, m^\dagger, \overline{H}^\dagger)$ be a minimizer to \eqref{1ssmfgoptpk}, whose existence is given by Theorem \ref{exi11} later. Then, \eqref{1ssmfgoptpk} can be viewed as MAP estimators of  GPs  conditioned on the MFG at the sample points $\{x_i\}_{i=1}^M$ while matching the observations. The conditioned GPs are not Gaussian since the constraints in \eqref{1ssmfgoptpk} are nonlinear.

Next, we rewrite \eqref{1ssmfgoptpk} as a two-level optimization problem
\begin{align}
	\label{1ssmfgopt2lvpk}
	\begin{cases}
		\min\limits_{\boldsymbol{z},\boldsymbol{\rho},\boldsymbol{v}, \overline{H}}\begin{cases}
			\min\limits_{(u, m, V)\in \mathcal{U}\times\mathcal{M}\times\mathcal{V}} \|u\|_{\mathcal{U}}^2+\|m\|_{\mathcal{M}}^2 + \|{V}\|_{\mathcal{V}}^2\\
			\text{s.t.}\  u(x_i) = z_i^{(1)}, u_x(x_i) = z_i^{(2)}, u_{xx}(x_i) = z_i^{(3)}, \quad\quad \forall i=1,\dots,M,\\
			\quad\ \  m(x_i) = \rho_i^{(1)},  m_x(x_i) = \rho_i^{(2)}, \quad\quad\quad\quad\quad\quad\quad\quad \forall i=1,\dots,M,\\
            \quad\ \  m(x_i^o) = \rho_i^{(3)}, \quad\quad\quad\quad\quad\quad\quad\quad\quad\quad\quad\quad\quad\quad\ \forall i = 1, \dots, N_m,\\
            \quad\ \ V(x_i) = v_i^{(1)}, \quad\quad\quad\quad\quad\quad\quad\quad\quad\quad\quad\quad\quad\quad\ \  \forall i = 1, \dots, M,\\
            \quad\ \  V(x_i^{o,1}) = v_i^{(2)},  \quad\quad\quad\quad\quad\quad\quad\quad\quad\quad\quad\quad\quad\quad  \forall i = 1, \dots, N_v^1,
		\end{cases}\\
		\quad\quad+|\overline{H}|^2 + \frac{1}{\gamma^2}|\boldsymbol{\rho}^{(3)} - \boldsymbol{m}^o|^2 + |\Sigma^{-1}(\boldsymbol{v} - \boldsymbol{V}^o)|^2 + \beta\bigg|\frac{1}{M}\sum\limits_{i=1}^{M}\rho_i^{(1)}- 1\bigg|^2+\beta\bigg|\frac{1}{M}\sum\limits_{i=1}^{M}z_i^{(1)}\bigg|^2\\
		\text{s.t.}\ \frac{(z^{(2)}_i)^2}{2} + v_i^{(1)}=(\rho_i^{(1)})^2 + \overline{H},\quad \forall i=1,\dots,M,\\
		\quad\ \   \rho_i^{(2)}z_i^{(2)} + \rho_i^{(1)}z_i^{(3)}  = 0, \quad\quad\quad\,\, \forall i=1,\dots,M,
	\end{cases}
\end{align}
where $\boldsymbol{z}=(z^{(1)}_1,\dots, z^{(1)}_M, z^{(2)}_1,\dots, z^{(2)}_M, z^{(3)}_1,\dots, z^{(3)}_M)$, $\boldsymbol{\rho}=(\rho^{(1)}_1,\dots, \rho^{(1)}_M, \rho^{(2)}_1,\dots, \rho^{(2)}_M, \rho_{1}^{(3)}, \dots, \rho_{N_m}^{(3)})$, and $\boldsymbol{v}=(v^{(1)}_1,\dots, v^{(1)}_M, v_{1}^{(2)}, \dots, v_{N_v^1}^{(2)})$.  
Let $\delta_x$ be the Dirac delta function concentrated at $x$. We define $\boldsymbol{\delta}=(\delta_{x_i})_{i=1}^M$, $\boldsymbol{\delta}^o=(\delta_{x_i^o})_{i=1}^{N_m}$, and $\boldsymbol{\delta}^{o,1}=(\delta_{x_i^{o,1}})_{i=1}^{N_v^1}$. Let $\boldsymbol{\phi}^u=(\boldsymbol{\delta}, \boldsymbol{\delta}\circ \partial_x, \boldsymbol{\delta}\circ \partial_{xx})$, $\boldsymbol{\phi}^m = (\boldsymbol{\delta}, \boldsymbol{\delta}\circ \partial_x, \boldsymbol{\delta}^o)$, and $\boldsymbol{\phi}^V=(\boldsymbol{\delta}, \boldsymbol{\delta}^{o,1})$. Let $(u^\dagger, m^\dagger, V^\dagger)$ be the minimizer of the first level optimization problem in \eqref{1ssmfgopt2lvpk}. Then,  by the representer formulas in \eqref{goprs} (see \cite[Sec. 17.8]{owhadi2019operator}),  we get
\begin{align*}
	\begin{cases}
		u^\dagger(x)= K_u(x,\boldsymbol{\phi}^u) K_u(\boldsymbol{\phi}^u, \boldsymbol{\phi}^u)^{-1}\boldsymbol{z},\\
  m^\dagger(x)= K_m(x,\boldsymbol{\phi}^m) K_m(\boldsymbol{\phi}^m, \boldsymbol{\phi}^m)^{-1}\boldsymbol{\rho},\\
  V^\dagger(x)= K_V(x,\boldsymbol{\phi}^V) K_V(\boldsymbol{\phi}^V, \boldsymbol{\phi}^V)^{-1}\boldsymbol{v}.
	\end{cases}
\end{align*}
Thus, we have
\begin{align*}
	\begin{cases}
		\|u^\dagger\|_{\mathcal{U}}^2=\boldsymbol{z}^TK_u(\boldsymbol{\phi}^u, \boldsymbol{\phi}^u)^{-1}\boldsymbol{z},\\
  \|m^\dagger\|_{\mathcal{M}}^2=\boldsymbol{\rho}^TK_m(\boldsymbol{\phi}^m, \boldsymbol{\phi}^m)^{-1}\boldsymbol{\rho},\\
  \|V^\dagger\|_{\mathcal{V}}^2=\boldsymbol{v}^TK_V(\boldsymbol{\phi}^V, \boldsymbol{\phi}^V)^{-1}\boldsymbol{v}. 
	\end{cases}
\end{align*}
Hence, we can formulate \eqref{1ssmfgopt2lvpk} as a finite-dimensional optimization problem
\begin{align}
	\label{1ssmfgoptfnpk}
	\begin{cases}
		\min\limits_{\boldsymbol{z},\boldsymbol{\rho},\boldsymbol{v}, \overline{H}} \boldsymbol{z}^TK_u(\boldsymbol{\phi}^u, \boldsymbol{\phi}^u)^{-1}\boldsymbol{z} + \boldsymbol{\rho}^TK_m(\boldsymbol{\phi}^m, \boldsymbol{\phi}^m)^{-1}\boldsymbol{\rho} + \boldsymbol{v}^TK_V(\boldsymbol{\phi}^V, \boldsymbol{\phi}^V)^{-1}\boldsymbol{v} \\
		\quad\quad+|\overline{H}|^2 + \frac{1}{\gamma^2}|\boldsymbol{\rho}^{(3)} - \boldsymbol{m}^o|^2 + |\Sigma^{-1}(\boldsymbol{v} - \boldsymbol{V}^o)|^2 + \beta\bigg|\frac{1}{M}\sum\limits_{i=1}^{M}\rho_i^{(1)}- 1\bigg|^2+\beta\bigg|\frac{1}{M}\sum\limits_{i=1}^{M}z_i^{(1)}\bigg|^2\\
		\text{s.t.}\ \frac{(z^{(2)}_i)^2}{2} + v_i^{(1)}=(\rho_i^{(1)})^2 + \overline{H},\forall i=1,\dots,M,\\
		\quad\ \   \rho_i^{(2)}z_i^{(2)} + \rho_i^{(1)}z_i^{(3)}  = 0, \forall i=1,\dots,M,
	\end{cases}
\end{align}
To deal with the nonlinear constraints in \eqref{1ssmfgoptfnpk}, we introduce a prescribed penalization parameter $\alpha>0$ and consider the following relaxation 
\begin{align}
	\label{relax1D}
	\begin{cases}
		\min\limits_{\boldsymbol{z},\boldsymbol{\rho},\boldsymbol{v}, \overline{H}} \boldsymbol{z}^TK_u(\boldsymbol{\phi}^u, \boldsymbol{\phi}^u)^{-1}\boldsymbol{z} + \boldsymbol{\rho}^TK_m(\boldsymbol{\phi}^m, \boldsymbol{\phi}^m)^{-1}\boldsymbol{\rho} + \boldsymbol{v}^TK_V(\boldsymbol{\phi}^V, \boldsymbol{\phi}^V)^{-1}\boldsymbol{v} \\
		\quad\quad+|\overline{H}|^2 + \frac{1}{\gamma^2}|\boldsymbol{\rho}^{(3)} - \boldsymbol{m}^o|^2 + |\Sigma^{-1}(\boldsymbol{v} - \boldsymbol{V}^o)|^2 + \beta\bigg|\frac{1}{M}\sum\limits_{i=1}^{M}\rho_i^{(1)}- 1\bigg|^2+\beta\bigg|\frac{1}{M}\sum\limits_{i=1}^{M}z_i^{(1)}\bigg|^2\\
		\quad\quad + \alpha\sum_{i=1}^M|\frac{(z^{(2)}_i)^2}{2} + v_i^{(1)}-(\rho_i^{(1)})^2 - \overline{H}|^2 + \alpha \sum_{i=1}^M|\rho_i^{(2)}z_i^{(2)} + \rho_i^{(1)}z_i^{(3)}|^2.	\end{cases}
\end{align}
In numerical implementations, we  use the Gauss--Newton method to solve \eqref{relax1D} (see also Section 3 of  \cite{chen2021solving}).

\subsection{Existence of Minimizers}
\label{sub:eximi}

By adapting the proof of  Theorem 3.3 in \cite{mou2022numerical}, the next theorem shows that  there exists a solution to \eqref{OptGPProb}. Hence, we establish the solvability of the optimization problem  in  \eqref{OptGPProb}.
\begin{theorem}
	\label{exi11}
Let $\mathcal{U}$, $\mathcal{M}$, and $\boldsymbol{\mathcal{V}}$ be RKHSs associated with kernels $K_u$, $K_m$, and $K_{\boldsymbol{V}}$. Assume that $(u^*, m^*, \overline{H}^*, \boldsymbol{V}^*)$ satisfies the MFG in  \eqref{gsmfgs} such that $u^*\in \mathcal{U}$, $m^*\in \mathcal{M}$, $\overline{H}^*\in \mathbb{R}$, and $\boldsymbol{V}^*\in \boldsymbol{\mathcal{V}}$. 
 Define the vectors of linear operators $\boldsymbol{\phi}^u$, $\boldsymbol{\phi}^m$, and $\boldsymbol{\phi}^{\boldsymbol{V}}$ as outlined in Subsection \ref{sub:OptProb}. 
	Under Assumption \ref{hyp_pb}, the minimization problem in \eqref{OptGPProb} admits a minimizer $(u^\dagger, m^\dagger, \overline{H}^\dagger, \boldsymbol{V}^\dagger)$ such that 
	\begin{align}
	\label{exp1df}
	\begin{cases}
u^\dagger(x)=K_u(x,\boldsymbol{\phi}^u)K_u(\boldsymbol{\phi}^u, \boldsymbol{\phi}^u)^{-1}\boldsymbol{z}^\dagger,\\
		m^\dagger(x)=K_m(x,{\boldsymbol{\phi}}^m) K({\boldsymbol{\phi}}^m, {\boldsymbol{\phi}}^m)^{-1}\boldsymbol{\rho}^\dagger,\\
		\boldsymbol{V}^\dagger(x)= {K}_{\boldsymbol{V}}(x,{\boldsymbol{\phi}}^{\boldsymbol{V}}) {K}_{\boldsymbol{V}}({\boldsymbol{\phi}}^{\boldsymbol{V}},  {\boldsymbol{\phi}}^{\boldsymbol{V}})^{-1}\boldsymbol{v}^\dagger. 
	\end{cases}
\end{align}
	where $(\boldsymbol{z}^\dagger, \boldsymbol{\rho}^\dagger, \overline{H}^\dagger, \boldsymbol{V}^\dagger)$ is a minimizer of  \eqref{OptGPProb_Cpttl_fin}.   
\end{theorem}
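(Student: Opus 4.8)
The plan is to reduce the infinite-dimensional problem \eqref{OptGPProb} to the finite-dimensional problem \eqref{OptGPProb_Cpttl_fin}, prove existence of a minimizer for the latter by the direct method, and then lift it back to function space through the representer formula \eqref{exp1df}; this mirrors the strategy of Theorem 3.3 in \cite{mou2022numerical}. The key structural observation, supplied by Assumption \ref{hyp_pb}, is that both the objective and the constraints of \eqref{OptGPProb} depend on $(u,m,\boldsymbol{V})$ only through the finitely many bounded linear functionals collected in $\boldsymbol{\phi}^u$, $\boldsymbol{\phi}^m$, and $\boldsymbol{\phi}^{\boldsymbol{V}}$ (the collocation-point evaluations composed with the operators $\boldsymbol{L}^u,\boldsymbol{L}^m,\boldsymbol{L}^{\boldsymbol{V}}$, together with the observation functionals $\boldsymbol{\phi}^o,\Psi$ and the evaluations entering the integral penalties), plus the scalar $\overline{H}$ directly. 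Consequently, for any prescribed $(\boldsymbol{z},\boldsymbol{\rho},\boldsymbol{v})$, among all $(u,m,\boldsymbol{V})$ with $[\boldsymbol{\phi}^u,u]=\boldsymbol{z}$, $[\boldsymbol{\phi}^m,m]=\boldsymbol{\rho}$, $[\boldsymbol{\phi}^{\boldsymbol{V}},\boldsymbol{V}]=\boldsymbol{v}$ every term except the RKHS norms is fixed, so the objective is minimized by the minimal-norm interpolants, which are precisely the representer formulas \eqref{goprs} and \eqref{reprefmlve}. This shows that the infima of \eqref{OptGPProb} and \eqref{OptGPProb_Cpttl_fin} coincide and puts their minimizers in correspondence.

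It then suffices to show that \eqref{OptGPProb_Cpttl_fin} attains its minimum. First I would check the feasible set is nonempty: since $(u^*,m^*,\overline{H}^*,\boldsymbol{V}^*)$ solves \eqref{gsmfgs} with $u^*\in\mathcal{U}$, $m^*\in\mathcal{M}$, $\boldsymbol{V}^*\in\boldsymbol{\mathcal{V}}$, the vectors $\boldsymbol{z}^*=[\boldsymbol{\phi}^u,u^*]$, $\boldsymbol{\rho}^*=[\boldsymbol{\phi}^m,m^*]$, $\boldsymbol{v}^*=[\boldsymbol{\phi}^{\boldsymbol{V}},\boldsymbol{V}^*]$ satisfy the constraint $G=0$ by Assumption \ref{hyp_pb}. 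Next, assuming the functionals in each of $\boldsymbol{\phi}^u,\boldsymbol{\phi}^m,\boldsymbol{\phi}^{\boldsymbol{V}}$ are linearly independent, so that the Gram matrices $K_u(\boldsymbol{\phi}^u,\boldsymbol{\phi}^u)$, $K_m(\boldsymbol{\phi}^m,\boldsymbol{\phi}^m)$, $K_{\boldsymbol{V}}(\boldsymbol{\phi}^{\boldsymbol{V}},\boldsymbol{\phi}^{\boldsymbol{V}})$ are symmetric positive definite, the quadratic forms $\boldsymbol{z}^TK_u(\boldsymbol{\phi}^u,\boldsymbol{\phi}^u)^{-1}\boldsymbol{z}$, $\boldsymbol{\rho}^TK_m(\boldsymbol{\phi}^m,\boldsymbol{\phi}^m)^{-1}\boldsymbol{\rho}$, $\boldsymbol{v}^TK_{\boldsymbol{V}}(\boldsymbol{\phi}^{\boldsymbol{V}},\boldsymbol{\phi}^{\boldsymbol{V}})^{-1}\boldsymbol{v}$ together with $|\overline{H}|^2$ make the objective coercive as a function of $(\boldsymbol{z},\boldsymbol{\rho},\boldsymbol{v},\overline{H})$, the remaining penalty terms being nonnegative. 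Since $G$ is continuous (being assembled from the continuous maps $\boldsymbol{P}$ and $B$), the feasible set $\{G=0\}$ is closed, and a coercive continuous function attains its infimum on a nonempty closed subset of a finite-dimensional space; this yields a minimizer $(\boldsymbol{z}^\dagger,\boldsymbol{\rho}^\dagger,\boldsymbol{v}^\dagger,\overline{H}^\dagger)$.

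Finally I would lift this minimizer to functions via \eqref{exp1df}. The reproducing property of the representer formula gives $[\boldsymbol{\phi}^u,u^\dagger]=\boldsymbol{z}^\dagger$, $[\boldsymbol{\phi}^m,m^\dagger]=\boldsymbol{\rho}^\dagger$, and $[\boldsymbol{\phi}^{\boldsymbol{V}},\boldsymbol{V}^\dagger]=\boldsymbol{v}^\dagger$, so in particular the values $\boldsymbol{L}^{u,\Omega}(u^\dagger)(x_i)$, $\boldsymbol{L}^{m,\Omega}(m^\dagger)(x_i)$, $\boldsymbol{L}^{\boldsymbol{V},\Omega}(\boldsymbol{V}^\dagger)(x_i)$ coincide with the corresponding entries of $\boldsymbol{z}^{(2)},\boldsymbol{\rho}^{(3)},\boldsymbol{v}^{(1)}$. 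By Assumption \ref{hyp_pb} this forces $\boldsymbol{\mathcal{P}}(u^\dagger,m^\dagger,\overline{H}^\dagger,\boldsymbol{V}^\dagger)(x_i)=(G(\boldsymbol{z}^{(2)\dagger},\boldsymbol{\rho}^{(3)\dagger},\overline{H}^\dagger,\boldsymbol{v}^{(1)\dagger}))_i=0$ at the interior nodes and $\mathcal{B}=0$ at the boundary nodes, so the lifted quadruple is feasible for \eqref{OptGPProb}, and its objective equals the common infimum by construction; hence it is a minimizer of \eqref{OptGPProb} of the asserted form.

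The main obstacle is the reduction step rather than any compactness argument: one must verify that \emph{every} term of the objective and \emph{every} constraint of \eqref{OptGPProb} genuinely factors through the finite functional vectors $\boldsymbol{\phi}^u,\boldsymbol{\phi}^m,\boldsymbol{\phi}^{\boldsymbol{V}}$, and, in the reverse direction, that the reproducing property pins down $\boldsymbol{L}^{u,\Omega}(u^\dagger)(x_i)$ and its analogues exactly so that the lifted functions satisfy the original nonlinear constraints pointwise. This is precisely where Assumption \ref{hyp_pb} is indispensable, since without the factorization of $\boldsymbol{\mathcal{P}}$ and $\mathcal{B}$ through bounded linear operators the constraint would not reduce to a condition on finitely many coordinates and the representer formula could not be invoked. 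A secondary technical point is the positive definiteness of the Gram matrices, which underlies both the representer formula and the coercivity; if the functionals fail to be linearly independent one passes to a maximal independent subset. As an alternative, the direct method can be carried out directly in function space by extracting a weakly convergent minimizing sequence and combining weak lower semicontinuity of the RKHS norms with the weak continuity of the finitely many pairings and the continuity of $G$, but the finite-dimensional reduction is cleaner and produces the representer form claimed in \eqref{exp1df} at once.
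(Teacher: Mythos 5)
Your proposal is correct and follows essentially the same route as the paper: reduce to the finite-dimensional problem \eqref{OptGPProb_Cpttl_fin} via the representer formulas, use the true solution to show feasibility, and apply the direct method in finite dimensions (the paper phrases the coercivity step as intersecting the feasible set with a compact sublevel set $\mathcal{C}_1$, which is equivalent to your coercivity-plus-closedness argument). Your additional care about the positive definiteness of the Gram matrices and the verification that the lifted functions satisfy the pointwise constraints makes explicit what the paper leaves implicit, but it is the same proof.
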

\begin{proof}
	By the above arguments, \eqref{OptGPProb} is equivalent to \eqref{OptGPProb_Cpttl_fin}. Hence, the key is to prove the existence of a minimizer to \eqref{OptGPProb_Cpttl_fin}. 
	The argument here is similar to the proof of Theorem 3.3 in \cite{mou2022numerical}. Let $(u^*, m^*, \overline{H}^*, \boldsymbol{V}^*)$ be the solution to \eqref{gsmfgs}. We define the tuple $(\boldsymbol{z}_*, \boldsymbol{\rho}_*, \boldsymbol{v}_*)$ such that $\boldsymbol{z}_* = [\boldsymbol{\phi}^u, u]$, $\boldsymbol{\rho}_* = [\boldsymbol{\phi}^m, m]$, and $\boldsymbol{v}_* = [\boldsymbol{\phi}^{\boldsymbol{V}}, \boldsymbol{V}]$. Then, $(\boldsymbol{z}_*, \boldsymbol{\rho}_*, \overline{H}^*, \boldsymbol{v}_*)$ satisfies the constraints in \eqref{OptGPProb_Cpttl_fin}. For a little abuse of notations, we denote by $G(\boldsymbol{z}, \boldsymbol{\rho},\ \overline{H}, \boldsymbol{v})=0$ the constraints in \eqref{OptGPProb_Cpttl_fin} and define
	\begin{align*}
		\begin{split}
			\mathcal{C}_1=
			&\bigg\{(\boldsymbol{z}, \boldsymbol{\rho}, \overline{H}, \boldsymbol{v})| \boldsymbol{z}^TK_u(\boldsymbol{\phi}^u, \boldsymbol{\phi}^u)^{-1}\boldsymbol{z} + \boldsymbol{\rho}^TK_m({\boldsymbol{\phi}}^m, {\boldsymbol{\phi}}^m)^{-1}\boldsymbol{\rho} + \boldsymbol{v}^T{K}_{\boldsymbol{V}}({\boldsymbol{\phi}}^{\boldsymbol{V}}, {\boldsymbol{\phi}}^{\boldsymbol{V}})^{-1}\boldsymbol{v} + |\overline{H}|^2 \\
&\quad\quad\quad\quad + \frac{1}{\gamma^2} |\boldsymbol{\rho}^{(2)} - \boldsymbol{m}^o|^2 + |\Sigma^{-1}(\boldsymbol{v}^{(2)} - \boldsymbol{V}^o)|^2+\beta \bigg|\frac{1}{M_\Omega}\sum\limits_{i=1}^{M_\Omega}z^{(1), \Omega}_i\bigg|^2+\beta\bigg|\frac{1}{M_\Omega}\sum\limits_{i=1}^{M_\Omega}\rho^{(1), \Omega}_i-1\bigg|^2\\
			&\quad\quad\quad\quad\leq \boldsymbol{z}^T_*K_u(\boldsymbol{\phi}^u, \boldsymbol{\phi}^u)^{-1}\boldsymbol{z}_* + \boldsymbol{\rho}^T_*K_m({\boldsymbol{\phi}}^m, {\boldsymbol{\phi}}^m)^{-1}\boldsymbol{\rho}_* + \boldsymbol{v}^T_*{K}_{\boldsymbol{V}}({\boldsymbol{\phi}}^{\boldsymbol{V}}, {\boldsymbol{\phi}}^{\boldsymbol{V}})^{-1}\boldsymbol{v}_* + |\overline{H}^*|^2 \\
&\quad\quad\quad\quad + \frac{1}{\gamma^2} |\boldsymbol{\rho}^{(2)}_* - \boldsymbol{m}^o|^2 + |\Sigma^{-1}(\boldsymbol{v}^{(2)}_* - \boldsymbol{V}^o)|^2+\beta \bigg|\frac{1}{M_\Omega}\sum\limits_{i=1}^{M_\Omega}z^{(1), \Omega}_{*, i}\bigg|^2+\beta\bigg|\frac{1}{M_\Omega}\sum\limits_{i=1}^{M_\Omega}\rho^{(1), \Omega}_{*, i}-1\bigg|^2\bigg\}.
		\end{split}
	\end{align*} Then, \eqref{OptGPProb_Cpttl_fin} is equivalent to 
	\begin{align}
		\label{mfg1dform2}
		\begin{cases}
			\min\limits_{\boldsymbol{z},\boldsymbol{\rho}, \overline{H}, \boldsymbol{v}} \boldsymbol{z}^TK_u(\boldsymbol{\phi}^u, \boldsymbol{\phi}^u)^{-1}\boldsymbol{z} + \boldsymbol{\rho}^TK_m({\boldsymbol{\phi}}^m, {\boldsymbol{\phi}}^m)^{-1}\boldsymbol{\rho} + \boldsymbol{v}^T{K}_{\boldsymbol{V}}({\boldsymbol{\phi}}^{\boldsymbol{V}}, {\boldsymbol{\phi}}^{\boldsymbol{V}})^{-1}\boldsymbol{v} + |\overline{H}|^2 \\
\quad\quad\quad\quad + \frac{1}{\gamma^2} |\boldsymbol{\rho}^{(2)} - \boldsymbol{m}^o|^2 + |\Sigma^{-1}(\boldsymbol{v}^{(2)} - \boldsymbol{V}^o)|^2+\beta \bigg|\frac{1}{M_\Omega}\sum\limits_{i=1}^{M_\Omega}z^{(1), \Omega}_i\bigg|^2+\beta\bigg|\frac{1}{M_\Omega}\sum\limits_{i=1}^{M_\Omega}\rho^{(1), \Omega}_i-1\bigg|^2\\
			\text{s.t.}\  (\boldsymbol{z}, \boldsymbol{\rho}, \overline{H}, \boldsymbol{v})\in \mathcal{C}:=\bigg\{(\boldsymbol{z}, \boldsymbol{\rho}, \overline{H}, \boldsymbol{v})|G(\boldsymbol{z}, \boldsymbol{\rho}, \overline{H}, \boldsymbol{v})=0\bigg\}\cap \mathcal{C}_1.
		\end{cases}
	\end{align}
	By the continuity of $G$ in $(\boldsymbol{z}, \boldsymbol{\rho}, \overline{H}, \boldsymbol{v})$ and the fact that $(\boldsymbol{z}_*, \boldsymbol{\rho}_*, \overline{H}^*, \boldsymbol{v}_*)\in \mathcal{C}$, $\mathcal{C}$ is compact and non-empty. Hence, the objective function \eqref{mfg1dform2} achieves a minimum on $\mathcal{C}$. Thus, \eqref{OptGPProb_Cpttl_fin} admits a minimizer. We conclude \eqref{exp1df} by \eqref{1drepre}. 
\end{proof}
\subsection{A Probabilistic Perspective}
\label{sub:probpersp}
Here, we give a probabilistic interpretation of the GP method  for the MFG inverse problems. First, we put independent Gaussian priors on the solution  $(u^*, m^*, \overline{H}^*, \boldsymbol{V}^*)$ to \eqref{gsmfgs} such that $u^*\sim\mathcal{N}(0, {K}_u)$, $m^*\sim\mathcal{N}(0, {K}_m)$, $\overline{H}^*\sim\mathcal{N}(0, 1
)$, and $V_i^*\sim\mathcal{N}(0, {K}_{V_i})$,  where $K_u$, $K_m$, $K_{V_i}$ are kernels associated with the RKHSs $\mathcal{U}$, $\mathcal{M}$, and $\mathcal{V}_i$, $i=1,\dots, N_v$. Let $\boldsymbol{\phi}^u$, $\boldsymbol{\phi}^m$, and  $\boldsymbol{\phi}^{\boldsymbol{V}}$ be the vectors of linear operators defined in Subsection  \ref{sub:OptProb}. Let $\boldsymbol{z}$, $\boldsymbol{\rho}$, and $\boldsymbol{v}$ be observations defined as  
\begin{align*}
\boldsymbol{z} = [\boldsymbol{\phi}^u, u^*], \boldsymbol{\rho} = [\boldsymbol{\phi}^m, m^*], \text{ and } \boldsymbol{v} = [\boldsymbol{\phi}^{\boldsymbol{V}}, \boldsymbol{V}^*], 
\end{align*}
Thus, $\boldsymbol{z}\sim \mathcal{N}(0, K_u(\boldsymbol{\phi}^u, \boldsymbol{\phi}^u))$, $\boldsymbol{\rho}\sim \mathcal{N}(0, K_m(\boldsymbol{\phi}^m, \boldsymbol{\phi}^m))$ , $\boldsymbol{v}\sim \mathcal{N}(0, K_{\boldsymbol{V}}(\boldsymbol{\phi}^{\boldsymbol{V}}, \boldsymbol{\phi}^{\boldsymbol{V}}))$. Moreover, the probability densities of $\boldsymbol{z}$, $\boldsymbol{\rho}$, and $\boldsymbol{V}$ satisfy
\begin{align*}
p(\boldsymbol{z})=\frac{1}{C_{\boldsymbol{z}}\sqrt{\operatorname{det}(K(\boldsymbol{\phi}^u, \boldsymbol{\phi}^u))}}e^{-\frac{1}{2}\boldsymbol{z}^TK(\boldsymbol{\phi}^u, \boldsymbol{\phi}^u)^{-1}\boldsymbol{z}},\\
p(\boldsymbol{\rho})=\frac{1}{C_{\boldsymbol{\rho}}\sqrt{\operatorname{det}(K(\boldsymbol{\phi}^m, \boldsymbol{\phi}^m))}}e^{-\frac{1}{2}\boldsymbol{\rho}^TK(\boldsymbol{\phi}^m, \boldsymbol{\phi}^{m})^{-1}\boldsymbol{\rho}},\\
p(\boldsymbol{v})=\frac{1}{C_{\boldsymbol{v}}\sqrt{\operatorname{det}(K(\boldsymbol{\phi}^{\boldsymbol{V}}, \boldsymbol{\phi}^{\boldsymbol{V}}))}}e^{-\frac{1}{2}\boldsymbol{v}^TK(\boldsymbol{\phi}^{\boldsymbol{V}}, \boldsymbol{\phi}^{\boldsymbol{V}})^{-1}\boldsymbol{v}},
 \end{align*}
where $C_{\boldsymbol{z}}$, $C_{\boldsymbol{\rho}}$, and $C_{\boldsymbol{v}}$ are constants guaranteeing the unit mass of the probability density. By assumptions, $\boldsymbol{m}^o$ is a noisy observation of $\boldsymbol{\rho}^{(2)}$, i.e. $\boldsymbol{\epsilon} = \boldsymbol{\rho}^{(2)} - \boldsymbol{m}^0$, $\boldsymbol{\epsilon}\sim\mathcal{N}(0, \gamma^2 I)$, where $\boldsymbol{\epsilon}$ is independent with other random variables. Thus, 
$\gamma^{-1}(\boldsymbol{\rho}^{(2)} - \boldsymbol{m}^o)\sim \mathcal{N}(0, I)$ and $\gamma^{-1}(\boldsymbol{\rho}^{(2)} - \boldsymbol{m}^o)$ is independent with other random variables. Similarly, $\Sigma^{-1}(\boldsymbol{v}^{(2)} - \boldsymbol{V}^o)\sim \mathcal{N}(0, I)$ and is another independent random vector. Then,  solving   Problem \ref{st_prob} by \eqref{OptGPProb_Cpttl_fin} is equivalent to finding $\boldsymbol{z}$, $\boldsymbol{\rho}$, $\overline{\boldsymbol{H}}$, and $\boldsymbol{v}$ with the maximum likelihood while satisfying the PDE system and with penalizations on noisy observations and the means of $\boldsymbol{z}$ and $\boldsymbol{\rho}$, i.e., \eqref{OptGPProb_Cpttl_fin} is equivalent to 
\begin{align*}
\begin{cases}
\min\limits_{\boldsymbol{z}, \boldsymbol{\rho}, \boldsymbol{v}, \overline{H}} 
-\log p(\boldsymbol{z}) - \log p(\boldsymbol{\rho}) - \log p(\overline{H}) - \log p(\boldsymbol{v}) - \log p(\gamma^{-1}(\boldsymbol{\rho}^{(2)}-\boldsymbol{m}^o))\\
\quad\quad\quad\quad - \log p(\Sigma^{-1}(\boldsymbol{v}^{(2)} - \boldsymbol{V}^o))+\beta \bigg|\frac{1}{M_\Omega}\sum\limits_{i=1}^{M_\Omega}z^{(1), \Omega}_i\bigg|^2+\beta\bigg|\frac{1}{M_\Omega}\sum\limits_{i=1}^{M_\Omega}\rho^{(1), \Omega}_i-1\bigg|^2\\
\text{s.t.}\quad G(\boldsymbol{z}^{(2)}, \boldsymbol{\rho}^{(3)}, \overline{H}, \boldsymbol{v}^{(1)}) = 0.
\end{cases}
\end{align*}

\section{An Inverse Problem For Time-dependent MFGs}
\label{secTimDep}
This section considers Problem \ref{main_prob} in settings of time-dependent MFGs and introduces a GP framework for solving the inverse problem. Differing from stationary MFGs, time-dependent MFGs typically don't incorporate mass constraints, a feature naturally ensured by  PDEs. In Subsection \ref{sub:tdprb}, we reformulate Problem \ref{main_prob} for time-dependent MFGs. Subsection \ref{subsec:tdgp} details the GP framework built upon a forward-backward  time-discretization scheme. In contrast to the approach outlined in \cite{chen2021solving}, our methodology is distinctively tailored for time-dependent MFGs involving forward-backward PDEs. In this context, neither a purely forward nor a purely backward Euler scheme is applicable. Instead, we integrate a mixed forward-backward time discretization technique, which is proposed in  \cite{achdou2010mean}, with GPs for the inverse problem.

Another feasible approach for solving the inverse problems of time-dependent MFGs is to treat the temporal variable as an additional spatial variable, applying the methods described in the preceding section to address corresponding challenges. Here, we propose a time-discretized approach, where the time-dependent PDE system is discretized in time and GPs are employed to approximate the unknowns in the entire time-independent PDE system. This discretization offers two main benefits: firstly, it separates the complexities of time and space, simplifying kernel selection based on spatial variability; secondly, it allows for distinct kernel choices at different time slices, catering to the varying spatial regularities of functions over time.

\subsection{An Inverse Problem for Time-dependent MFGs}
\label{sub:tdprb}
Let $\Omega \subset \mathbb{R}^d$. In this section, we discuss the inverse problem for time-dependent MFGs of the following form
\begin{align}
	\label{gtdmfgs}
	\begin{cases}
		-\partial_t u^* +\mathcal{P}_1(u^*, m^*,  \boldsymbol{V}^*)(t, x)= 0, &\forall (t,x) \in (0, T)\times\operatorname{int} \Omega,\\	
        \partial_t m^* +\mathcal{P}_2(u^*, m^*,  \boldsymbol{V}^*)(t, x)= 0, &\forall (t,x) \in  (0, T)\times\operatorname{int} \Omega,\\
		\mathcal{B}(u^*, m^*, \boldsymbol{V}^*)(t, x) = 0, &\forall  (t, x) \in (0, T) \times \partial \Omega,\\
		u^*(T, x) = \phi^*(x), m^*(0, x) = \mu^*(x), &\forall x \in \overline{\Omega}. 
	\end{cases}
\end{align}  
where $\boldsymbol{V}^*$
comprises a set of functions that depict environment configurations, $\phi^*$  represents the terminal cost of agents, and $\mu^*$ is the initial distribution of agents. Given $\boldsymbol{V}^*$, $\mu^*$, and $\phi^*$, $(u^*, m^*)$ solves the time-dependent MFG such that $u^*$ stands for the value function and $m^*$ represents the density of the population of agents. In this context, $\mathcal{P}_1$ and $\mathcal{P}_2$ are two nonlinear operators, while $\mathcal{B}$ stands for a boundary operator. We proceed under the assumption that equation \eqref{gtdmfgs} admits a unique classical solution $(u^*, m^*)$ when the underlying environment configuration $\boldsymbol{V}^*$, the initial distribution $\mu^*$, and the terminal cost $\phi^*$ are given. We detail Problem \ref{main_prob} in the setting of time-dependent MFGs of the form \eqref{gtdmfgs} in the following problem.
\begin{problem}
\label{td_prob}
Consider a set of functions on $\Omega$, denoted by $\boldsymbol{V}^*=\{V^*_i\}_{i=1}^{N_v}$, with $N_v\in \mathbb{N}$. Let ${\phi}^*$ be a function defined over $\overline{\Omega}$, and let $\mu^*$ denote a distribution on $\Omega$. When provided with $\boldsymbol{V}^*$, $\phi^*$, and ${\mu}^*$, \eqref{gtdmfgs} yields a unique classical solution $(u^*, m^*)$. Our objective is to estimate the quantities $u^*$, $m^*$, $\boldsymbol{V}^*$, $\mu^*$, and $\phi^*$ based on  partial noisy observations of $m^*$ and of a subset $\boldsymbol{V}_o^*$ from $\boldsymbol{V}^*$. Specifically, we have

\begin{enumerate}
    \item \textbf{Partial noisy observations of $m^*$}. We possess a set of linear operators $\{\phi^o_l\}_{l=1}^{N_m}$, $N_m\in \mathbb{N}$ at which we have imprecise observations of $m^*$. These observations are denoted by $\boldsymbol{m}^o$. Formally, $\boldsymbol{m}^o = ([\phi_1^o, m^*], \dots, [\phi_{N_m}^o, m^*]) + \boldsymbol{\epsilon}$ where $\boldsymbol{\epsilon}\sim \mathcal{N}(0, \gamma^2 I)$.
    
    \item \textbf{Partial noisy observations of $\boldsymbol{V}^*$}. Denote a subset of $\boldsymbol{V}^*$ by $\boldsymbol{V}_o^*=\{V_j^*\}_{j\in \mathcal{O}_v}$, where $\mathcal{O}_v\subset \{1,\dots, N_v\}$. For this subset, we have observations at specific sets of linear operators, $\{\{\psi_e^{o,j}\}_{e=1}^{N^{j}_{v}}\}_{j\in \mathcal{O}_v}$. The observations are given by
    \begin{align*}
    \boldsymbol{V^o}=(\boldsymbol{V}^o_j)_{j\in \mathcal{O}_v}, \text{ where }\boldsymbol{V}^o_j = ([\psi_1^{o,j}, V_j^*], \dots, [\psi_{N_v}^{o,j}, V_j^*] + \boldsymbol{\epsilon}_j, \boldsymbol{\epsilon}_j \sim \mathcal{N}(0, \gamma_j^2I).
    \end{align*}
\end{enumerate}

\textbf{Inverse Problem Statement}

Suppose that agents play the time-dependent MFG specified in \eqref{gtdmfgs}, we infer $u^*$, $m^*$, $\boldsymbol{V}^*$, $\boldsymbol{\phi}^*$, and $\mu^*$ based on the data $\boldsymbol{m}^o$ and $\boldsymbol{V}^o$.
\end{problem}

\subsection{A Time-discretization Framework}
\label{subsec:tdgp}
In this section, we present a GP framework to solve Problem \ref{td_prob}. This is accomplished by utilizing a time-discretized variant of \eqref{gtdmfgs}, which is given in \eqref{tdgtdmfgs}. The key challenge in \eqref{gtdmfgs} lies in its incorporation of coupled forward-backward PDEs. To effectively handle this, we employ a mixed forward-backward time-discretization approach. This methodology is inspired by and builds upon established strategies found in the MFG literature \cite{achdou2010mean, achdou2012iterative}. The key difference between our discretization given in  \eqref{tdgtdmfgs} and the schemes in \cite{achdou2010mean, achdou2012iterative}
is that we only discretize \eqref{gtdmfgs} in time and approximate solutions in the resulting time-independent PDEs by GPs but they discretize the MFG both in time and space  by using the adjoint structure inherent in MFGs. Therefore, the existence and uniqueness of the solution to the discretized PDE system in \eqref{tdgtdmfgs} can be proved using the same fixed-point argument as in \cite{achdou2010mean}.

Let $\{t_k\}_{k=0}^{N_T}$ be an equiv-distributed grid points on the time interval $[0, T]$ with $t_0 = 0$ and $t_{N_T}=T$. We approximate $u$, $m$, and $\boldsymbol{V}$ at the time grid point by sets of functions $\{u_k\}_{k=0}^{N_T}$, $\{m_k\}_{k=0}^{N_T}$, and $\{\boldsymbol{V}_k\}_{k=0}^{N_T}$ such that $u_k\approx u(t_k)$, $m_k\approx m(t_k)$, and $\boldsymbol{V}_k\approx \boldsymbol{V}(t_k)$ for $k=0, \dots, N_T$. Then, we  discretize \eqref{gtdmfgs} in time and get the following system
\begin{align}
\label{tdgtdmfgs}
\begin{cases}
-\frac{u_{k+1}(x)-u_k(x)}{\Delta t} +\mathcal{P}_1(u_k, m_{k+1},  \boldsymbol{V}_{k})(x)= 0, &\forall x \in \operatorname{int} \Omega, k\in \{0, \dots, N_T - 1\},\\	
        \frac{m_{k+1}(x) - m_k(x)}{\Delta t} +\mathcal{P}_2(u_k, m_{k+1},  \boldsymbol{V}_{k})(x)= 0, &\forall x \in  \operatorname{int} \Omega, k\in \{0, \dots, N_T - 1\},\\
		\mathcal{B}(u_{k}, m_{k+1}, \boldsymbol{V}_k)(x) = 0, &\forall  x \in \partial \Omega, k\in \{0,\dots,  N_T - 1\}, \\
		u_{N_T}(x) = \phi^*(x), m_0(x) = \mu^*(x), &\forall x \in \overline{\Omega}. 
\end{cases}
\end{align}
To simplify the presentation, we assume that the time grid points are rich enough such that  noisy observations of the unknown variables occur only at discrete time slots. If observations are made between these intervals, the issue can be addressed by either employing a more refined time discretization or interpolating between the values of adjacent time slices. Next, we impose a GP prior on each unknown and approximate unknowns by GPs conditioned on PDEs while penalizing the observations. Let $\{x_i\}_{i=1}^M$ be a set of samples such that $x_1,\dots, x_{M_\Omega}\in \operatorname{int}\Omega$ and $x_{M_\Omega + 1},\dots,x_M\in \partial\Omega$ for $1\leq M_\Omega\leq M$.  Let $(\{u_k^*\}_{k=0}^{N_T}, \{m_k^*\}_{k=0}^{N_T}, \{\boldsymbol{V}_k^*\}_{k=0}^{N_T})$ be the solution of \eqref{tdgtdmfgs}. Denote by $\{\mathcal{U}_k\}_{k=0}^{N_T}$ and  $\{\mathcal{M}_k\}_{k=0}^{N_T}$ the collections of RKHSs such that $u_k^*\in \mathcal{U}_k$ and  $m_k^*\in \mathcal{M}_k$ for each $0\leq k\leq N_T$. Meanwhile, let $\boldsymbol{\mathcal{V}}:=\{\{\mathcal{V}_{i,k}\}_{k=1}^{N_T}\}_{i=1}^{N_v}$ be a collection of RKHSs where $V_{i,k}^*\in \mathcal{V}_{i,k}$ and let $\boldsymbol{\mathcal{V}}_k =\otimes_{{i=1}}^{N_v}\mathcal{V}_{i,k}$. Here, we assume that the observation operators for $m^*$ are fixed for all the time slots and collect them into a vector $\boldsymbol{\phi}^o$. Let $\boldsymbol{m}^o_k$ be a noisy observation of $[\boldsymbol{\phi}^o, m_k^*]$ such that $([\boldsymbol{\phi}^o, m_k^*]-\boldsymbol{m}^o_k) \sim \mathcal{N}(0, \gamma_k^2I)$.  Let $\Psi_k$ be a collection of linear functionals in $\boldsymbol{\mathcal{V}}^*_k$, the dual  of $\boldsymbol{\mathcal{V}}_k$, and let $\boldsymbol{V}_k^o$ be a noisy observation of $[\Psi_k, \boldsymbol{V}^*_k]$ such that $([\Psi_k, \boldsymbol{V}^*_k]-\boldsymbol{V}^o_k)\sim \mathcal{N}(0, \Sigma_k)$, where $\Sigma_k$ is the covariance matrix for the noise. Then, to solve Problem \ref{td_prob}, we approximate $(u_k, m_k, \boldsymbol{V}_k)_{k=0}^{N_T}$ by a minimizer of the following optimal recovery problem
\begin{align}
\label{OptTDGPProb}
\begin{cases}
\min\limits_{\overset{(u_k, m_k, \boldsymbol{V}_k)\in \mathcal{U}_k\times\mathcal{M}_k\times \boldsymbol{\mathcal{V}}_k}{k=0, \dots, N_T}}\sum_{k=0}^{N_T}\|u_k\|_{\mathcal{U}_k}^2 + \sum_{k=0}^{N_T}\|m_k\|_{\mathcal{M}_k}^2 + \sum_{k=0}^{N_T}\|\boldsymbol{V}_k\|_{\boldsymbol{\mathcal{V}}_k}^2 \\
\quad\quad\quad\quad\quad\quad\quad\quad+ \sum_{k=0}^{N_T}\frac{1}{\gamma^2_k} |[\boldsymbol{\phi}^o, m_k] - \boldsymbol{m}^o_k|^2
+ \sum_{k=0}^{N_T}|\Sigma^{-1}_k([\Psi_k, \boldsymbol{V}_k] - \boldsymbol{V}^o_k)|^2\\
\operatorname{s.t.}\quad -\frac{u_{k+1}(x_i)-u_k(x_i)}{\Delta t} +\mathcal{P}_1(u_k, m_{k+1},  \boldsymbol{V}_{k})(x_i)= 0,  \forall i \in \{1,\dots, M_\Omega\}, k\in \{0, \dots, N_T - 1\},\\	
        \quad\quad\quad \frac{m_{k+1}(x_i) - m_k(x_i)}{\Delta t} +\mathcal{P}_2(u_k, m_{k+1},  \boldsymbol{V}_{k})(x_i)= 0,   \forall i \in \{1,\dots, M_\Omega\}, k\in \{0, \dots, N_T - 1\},\\
		\quad\quad\quad\mathcal{B}(u_{k}, m_{k+1}, \boldsymbol{V}_k)(x_i) = 0, \,\,\,\quad\quad\quad\quad\quad\quad\quad\quad \forall i \in \{M_{\Omega}+1, \dots, M\}, k\in \{0, \dots, N_T - 1\},\\
		\quad\quad\quad u_{N_T}(x_i) = \phi^*(x_i), m_0(x_i) = \mu^*(x_i), \,\quad\quad\quad\quad \forall i \in \{1,\dots, M\}.
\end{cases}    
\end{align}
Similar arguments to those outlined in the proof of Theorem \ref{exi11} can be applied to establish the existence of minimizers in the context of \eqref{OptTDGPProb}.

Next, we reformulate \eqref{OptTDGPProb} into a finite dimensional minimization problem. For this transformation, we adopt specific assumptions regarding the structures of $\mathcal{P}_1$, $\mathcal{P}_2$, and $\mathcal{B}$ as follows.
\begin{hyp}
\label{hyp_tdpb}
Assume that there exists collections of bounded linear operators $\boldsymbol{L}^{u, \Omega}$, $\boldsymbol{L}^{u, \partial\Omega}$, $\boldsymbol{L}^{m, \Omega}$, $\boldsymbol{L}^{m, \partial \Omega}$, $\boldsymbol{L}^{\boldsymbol{{V}}, \Omega}$, and $\boldsymbol{L}^{\boldsymbol{{V}}, \partial\Omega}$,  and continuous nonlinear maps $P_1$, $P_2$,  and $B$ such that
\begin{align*}
\begin{cases}
\mathcal{P}_1(u, m, \boldsymbol{V})(x) = P_1(\boldsymbol{L}^{u, \Omega}(u)(x), \boldsymbol{L}^{m, \Omega}(m)(x), \boldsymbol{L}^{\boldsymbol{V}, \Omega}(\boldsymbol{V})(x)), \forall x\in \operatorname{int}\Omega, \\
\mathcal{P}_2(u, m, \boldsymbol{V})(x) = P_2(\boldsymbol{L}^{u, \Omega}(u)(x), \boldsymbol{L}^{m, \Omega}(m)(x), \boldsymbol{L}^{\boldsymbol{V}, \Omega}(\boldsymbol{V})(x)), \forall x\in \operatorname{int}\Omega, \\
\mathcal{B}(u, m, \boldsymbol{V})(x) = B(\boldsymbol{L}^{u, \partial \Omega}(u)(x), \boldsymbol{L}^{m, \partial \Omega}(m)(x),  \boldsymbol{L}^{\boldsymbol{V}, \partial \Omega}(\boldsymbol{V})(x)), \forall x\in \partial \Omega. 
\end{cases}
\end{align*}
In the following texts, we denote $\boldsymbol{L}^{u} := (\boldsymbol{L}^{u, \Omega}, \boldsymbol{L}^{u, \partial\Omega})$, $\boldsymbol{L}^{m} := (\boldsymbol{L}^{m, \Omega}, \boldsymbol{L}^{m, \partial\Omega})$, and $\boldsymbol{L}^{\boldsymbol{{V}}} := (\boldsymbol{L}^{\boldsymbol{{V}}, \Omega}, \boldsymbol{L}^{\boldsymbol{{V}}, \partial\Omega})$.
\end{hyp}
Under Assumption \ref{hyp_tdpb}, we rewrite \eqref{OptTDGPProb} as
\begin{align}
\label{OptGPtdProb_Cpt}
\begin{cases}
\min\limits_{\overset{(u_k, m_k, \boldsymbol{V}_k)\in \mathcal{U}_k\times\mathcal{M}_k\times \boldsymbol{\mathcal{V}}_k}{k=0, \dots, N_T}}\sum_{k=0}^{N_T}\|u_k\|_{\mathcal{U}_k}^2 + \sum_{k=0}^{N_T}\|m_k\|_{\mathcal{M}_k}^2 + \sum_{k=0}^{N_T}\|\boldsymbol{V}_k\|_{\boldsymbol{\mathcal{V}}_k}^2\\
\quad\quad\quad\quad\quad\quad\quad\quad+ \sum_{k=0}^{N_T}\frac{1}{\gamma^2_k} |[\boldsymbol{\phi}^o, m_k] - \boldsymbol{m}^o_k|^2+ \sum_{k=0}^{N_T}|\Sigma^{-1}_k([\Psi_k, \boldsymbol{V}_k] - \boldsymbol{V}^o_k)|^2\\
\operatorname{s.t.}\quad -\frac{[\delta_{x_i}, u_{k+1} - u_k]}{\Delta t} + P_1([\delta_{x_i}\circ \boldsymbol{L}^{\Omega, u}, u_k], [\delta_{x_i}\circ \boldsymbol{L}^{\Omega, m}, m_{k+1}],  [\delta_{x_i}\circ \boldsymbol{L}^{\Omega, \boldsymbol{V}}, \boldsymbol{V}_k]) = 0, \quad \\
\quad\quad\quad\quad\quad\quad\quad\quad\quad\quad\quad\quad\quad\quad\quad\quad\quad\quad\quad\quad\quad\quad\forall i \in \{1,\dots, M_\Omega\}, k\in \{0, \dots, N_T-1\},\\	
        \quad\quad\quad \frac{[\delta_{x_i}, m_{k+1} - m_k]}{\Delta t} + P_2([\delta_{x_i}\circ \boldsymbol{L}^{\Omega, u}, u_k], [\delta_{x_i}\circ \boldsymbol{L}^{\Omega, m}, m_{k+1}],  [\delta_{x_i}\circ \boldsymbol{L}^{\Omega, \boldsymbol{V}}, \boldsymbol{V}_k])=0,\\
\quad\quad\quad\quad\quad\quad\quad\quad\quad\quad\quad\quad\quad\quad\quad\quad\quad\quad\quad\quad\quad\quad\forall i \in \{1,\dots, M_\Omega\}, k\in \{0, \dots, N_T-1\},\\
		\quad\quad\quad B([\delta_{x_i}\circ \boldsymbol{L}^{\partial\Omega, u}, u], [\delta_{x_i}\circ \boldsymbol{L}^{\partial\Omega, m}, m], [\delta_{x_i}\circ \boldsymbol{L}^{\partial\Omega, \boldsymbol{V}}, \boldsymbol{V}])=0, \\
\quad\quad\quad\quad\quad\quad\quad\quad\quad\quad\quad\quad\quad\quad\quad\quad\quad\quad\quad\quad\quad\quad\forall i \in \{M_\Omega+1,\dots, M\}, k\in \{0, \dots, N_T-1\},\\
		\quad\quad\quad [\delta_{x_i}, u_{N_T}] = \phi^*(x_i), [\delta_{x_i}, m_0] = \mu^*(x_i), \,\,\quad\quad \forall i \in \{1,\dots, M\},
\end{cases}    
\end{align}
Denote $\boldsymbol{\delta}:=(\boldsymbol{\delta}^{\Omega}, \boldsymbol{\delta}^{\partial\Omega})$, where $\boldsymbol{\delta}^\Omega=(\delta_{x_1},\dots, \delta_{x_{M_\Omega}})$ and $\boldsymbol{\delta}^{\partial\Omega}=(\delta_{x_{M_\Omega}+1},\dots, \delta_{x_{M}})$. Let $\widetilde{\boldsymbol{\phi}}^u = \boldsymbol{\delta}\circ \boldsymbol{L}^u$, $\widetilde{\boldsymbol{\phi}}^m = \boldsymbol{\delta}\circ \boldsymbol{L}^m$, and $\widetilde{\boldsymbol{\phi}}^{\boldsymbol{V}} = \boldsymbol{\delta}\circ \boldsymbol{L}^{\boldsymbol{V}}$.
To solve \eqref{OptGPtdProb_Cpt}, we introduce latent variables and rewrite \eqref{OptGPtdProb_Cpt} as 
\begin{align}
\label{OptGPProb_Cpttltd}
\begin{cases}
\min\limits_{\boldsymbol{z}, \boldsymbol{\rho}, \boldsymbol{v}}
\begin{cases}
\min\limits_{\overset{(u_k, m_k, \boldsymbol{V}_k)\in \mathcal{U}_k\times\mathcal{M}_k\times \boldsymbol{\mathcal{V}}_k}{k=0,\dots, N_T}} \sum_{k=0}^{N_T}\|u_k\|_{\mathcal{U}_k}^2 + \sum_{k=0}^{N_T}\|m_k\|_{\mathcal{M}_k}^2 + \sum_{k=0}^{N_T} \|\boldsymbol{V}_k\|_{\boldsymbol{\mathcal{V}}_k}^2\\
\text{s.t.} \quad [\boldsymbol{\delta}^{\Omega}, u_k] = \boldsymbol{z}^{(1), \Omega}_k, [\boldsymbol{\delta}^{\partial\Omega}, u_k] = \boldsymbol{z}^{(1), \partial\Omega}_k,  [\widetilde{\boldsymbol{\phi}}^u, u_k] = \boldsymbol{z}^{(2)}_k, \\ \quad
\quad[\boldsymbol{\delta}^{\Omega}, m_k] = \boldsymbol{\rho}^{(1), \Omega}_k, [\boldsymbol{\delta}^{\partial\Omega}, m_k] = \boldsymbol{\rho}^{(1), \partial\Omega}_k, [\boldsymbol{\phi}^o, m_k] =  \boldsymbol{\rho}^{(2)}_k, 
[\widetilde{\boldsymbol{\phi}}^m, m_k] = \boldsymbol{\rho}^{(3)}_k,\\ \quad\quad[\widetilde{\boldsymbol{\phi}}^{\boldsymbol{V}}, \boldsymbol{V}_k] = \boldsymbol{v}^{(1)}_k,  [\Psi, \boldsymbol{V}_k] = \boldsymbol{v}^{(2)}_k, 
\end{cases}\\
\quad\quad\quad\quad + \sum_{k=0}^{N_T}\frac{1}{\gamma^2_k} |\boldsymbol{\rho}^{(2)}_k - \boldsymbol{m}^o_k|^2 + \sum_{k=0}^{N_T}|\Sigma^{-1}_k(\boldsymbol{v}^{(2)}_k - \boldsymbol{V}^o_k)|^2\\
\operatorname{s.t.}\quad -\frac{z_{k+1, i}^{(1), \Omega} - z_{k, i}^{(1), \Omega}}{\Delta t} + P_1(\boldsymbol{z}_k^{(2), \Omega}, \boldsymbol{\rho}_{k+1}^{(3), \Omega},  \boldsymbol{v}_{k}^{(1), \Omega}) = 0, \quad \forall i \in \{1,\dots, M_\Omega\}, k\in \{0, \dots, N_T-1\},\\	
        \quad\quad\quad \frac{\rho_{k+1, i}^{(1), \Omega} - \rho_{k, i}^{(1), \Omega}}{\Delta t} + P_2(\boldsymbol{z}_k^{(2), \Omega}, \boldsymbol{\rho}_{k+1}^{(3), \Omega},  \boldsymbol{v}_{k}^{(1), \Omega})=0,  \quad\,\, \forall i \in \{1,\dots, M_\Omega\}, k\in \{0, \dots, N_T-1\},\\
		\quad\quad\quad B(\boldsymbol{z}_k^{(2), \partial \Omega}, \boldsymbol{\rho}_{k+1}^{(3), \partial \Omega}, \boldsymbol{v}_k^{(1), \partial \Omega})=0, \quad\quad\quad\quad\quad\quad\,\,\, \forall i \in \{M_{\Omega}+1, \dots, M\}, k\in \{0, \dots, N_T-1\},\\
		\quad\quad\quad z_{N_T, i}^{(1)} = \phi^*(x_i), \rho_{0, i}^{(1)} = \mu^*(x_i), \quad\quad\quad\quad\quad\quad\quad\,\, \forall i \in \{1,\dots, M\},
\end{cases}
\end{align}
where $\boldsymbol{z} = (\boldsymbol{z}_k)_{k=0}^{N_T}, \boldsymbol{z}_k=(\boldsymbol{z}^{(1), \Omega}_k, \boldsymbol{z}^{(1), \partial\Omega}_k, \boldsymbol{z}^{(2)}_k)$, $\boldsymbol{\rho}=(\boldsymbol{\rho}_k)_{k=0}^{N_T}, \boldsymbol{\rho}_k=(\boldsymbol{\rho}^{(1), \Omega}_k, \boldsymbol{\rho}^{(1), \partial\Omega}_k, \boldsymbol{\rho}^{(2)}_k, \boldsymbol{\rho}^{(3)}_k)$, and $\boldsymbol{v}=(\boldsymbol{v}_k)_{k=0}^{N_T}, \boldsymbol{v}_k=(\boldsymbol{v}^{(1)}_k, \boldsymbol{v}^{(2)}_k)$.  Denote $\boldsymbol{\phi}^u := (\boldsymbol{\delta}, \widetilde{\boldsymbol{\phi}}^u)$,  $\boldsymbol{\phi}^m :=(\boldsymbol{\delta}, \boldsymbol{\phi}^o, \widetilde{\boldsymbol{\phi}}^m)$, and $\boldsymbol{\phi}^{\boldsymbol{V}} :=(\widetilde{\boldsymbol{\phi}}^{\boldsymbol{V}}, \Psi)$. 

Let $\{K_{u,k}\}_{k=0}^{N_T}$, $\{K_{m,k}\}_{k=0}^{N_T}$, $\{K_{\boldsymbol{V},k}\}_{k=0}^{N_T}$ be collections of kernels corresponding to the sets of RKHSs $\{\mathcal{U}_k\}_{k=0}^{N_T}$, $\{\mathcal{M}_k\}_{k=0}^{N_T}$, and $\{\boldsymbol{\mathcal{V}}_k\}_{k=0}^{N_T}$ respectively. Let $(u_k^\dagger, m_k^\dagger, \boldsymbol{V}_k^\dagger)_{k=0}^{N_T}$ be a minimizer to the first level minimization problem in \eqref{OptGPProb_Cpttltd}. Then, 
by the representer formulas in \eqref{goprs} and \eqref{reprefmlve} (also see \cite[Sec. 17.8]{owhadi2019operator}), we obtain
\begin{align*}
	\begin{cases}
		u_k^\dagger(x)=K_{u,k}(x,\boldsymbol{\phi}^u)K_{u,k}(\boldsymbol{\phi}^u, \boldsymbol{\phi}^u)^{-1}\boldsymbol{z}_k,\\
		m_k^\dagger(x)=K_{m,k}(x,{\boldsymbol{\phi}}^m) K_{m,k}({\boldsymbol{\phi}}^m, {\boldsymbol{\phi}}^m)^{-1}\boldsymbol{\rho}_k,\\
		\boldsymbol{V}_k^\dagger(x)= {K}_{\boldsymbol{V},k}(x,{\boldsymbol{\phi}}^{\boldsymbol{V}}) {K}_{\boldsymbol{V},k}({\boldsymbol{\phi}}^{\boldsymbol{V}},  {\boldsymbol{\phi}}^{\boldsymbol{V}})^{-1}\boldsymbol{v}_k. 
	\end{cases}
\end{align*}
Thus, we have
\begin{align*}
	\begin{cases}		\|u_k^\dagger\|_{\mathcal{U}_k}^2=\boldsymbol{z}_k^TK_{u, k}(\boldsymbol{\phi}^u, \boldsymbol{\phi}^u)^{-1}\boldsymbol{z}_k,\\
\|m_k^\dagger\|_{\mathcal{M}_k}^2=\boldsymbol{\rho}_k^TK_{m,k}({\boldsymbol{\phi}}^m, {\boldsymbol{\phi}}^m)^{-1}\boldsymbol{\rho}_k, \\
\|\boldsymbol{V}_k^\dagger\|_{\boldsymbol{\mathcal{V}}_k}^2=\boldsymbol{v}_k^T{K}_{\boldsymbol{V},k}({\boldsymbol{\phi}}^{\boldsymbol{V}}, {\boldsymbol{\phi}}^{\boldsymbol{V}})^{-1}\boldsymbol{v}_k.
	\end{cases}
\end{align*}
Hence, we can formulate \eqref{OptGPProb_Cpttltd} as a finite-dimensional optimization problem
\begin{align}
\label{OptGPProb_tdCpttl_fin}
\begin{cases}
\min\limits_{\boldsymbol{z}, \boldsymbol{\rho}, \boldsymbol{v}} \sum_{k=0}\boldsymbol{z}_k^TK_{u,k}(\boldsymbol{\phi}^u, \boldsymbol{\phi}^u)^{-1}\boldsymbol{z}_k + \sum_{k=0}\boldsymbol{\rho}_k^TK_{m,k}(\boldsymbol{\phi}^m, \boldsymbol{\phi}^m)^{-1}\boldsymbol{\rho}_k + \sum_{k=0}\boldsymbol{v}_k^T{K}_{\boldsymbol{V},k}(\boldsymbol{\phi}^{\boldsymbol{V}}, \boldsymbol{\phi}^{\boldsymbol{V}})^{-1}\boldsymbol{v}_k\\
\quad\quad\quad\quad + \sum_{k=0}^{N_T}\frac{1}{\gamma^2_k} |\boldsymbol{\rho}^{(2)}_k - \boldsymbol{m}^o_k|^2 + \sum_{k=0}^{N_T}|\Sigma^{-1}_k(\boldsymbol{v}^{(2)}_k - \boldsymbol{V}^o_k)|^2\\
\operatorname{s.t.}\quad -\frac{z_{k+1, i}^{(1), \Omega} - z_{k, i}^{(1), \Omega}}{\Delta t} + P_1(\boldsymbol{z}_k^{(2), \Omega}, \boldsymbol{\rho}_{k+1}^{(3), \Omega},  \boldsymbol{v}_{k}^{(1), \Omega}) = 0, \quad \forall i \in \{1,\dots, M_\Omega\}, k\in \{0, \dots, N_T-1\},\\	
        \quad\quad\quad \frac{\rho_{k+1, i}^{(1), \Omega} - \rho_{k, i}^{(1), \Omega}}{\Delta t} + P_2(\boldsymbol{z}_k^{(2), \Omega}, \boldsymbol{\rho}_{k+1}^{(3), \Omega},  \boldsymbol{v}_{k}^{(1), \Omega})=0,  \quad\,\, \forall i \in \{1,\dots, M_\Omega\}, k\in \{0, \dots, N_T-1\},\\
		\quad\quad\quad B(\boldsymbol{z}_k^{(2), \partial \Omega}, \boldsymbol{\rho}_{k+1}^{(3), \partial \Omega}, \boldsymbol{v}_k^{(1), \partial \Omega})=0, \quad\quad\quad\quad\quad\quad\,\,\, \forall i \in \{M_{\Omega}+1, \dots, M\}, k\in \{0, \dots, N_T-1\},\\
		\quad\quad\quad z_{N_T, i}^{(1)} = \phi^*(x_i), \rho_{0, i}^{(1)} = \mu^*(x_i), \quad\quad\quad\quad\quad\quad\quad\,\, \forall i \in \{1,\dots, M\}.
\end{cases}
\end{align}
In the numerical experiments presented in Section \ref{secNumercialExpe}, we use the Gauss--Newton method to solve \eqref{OptGPProb_tdCpttl_fin}.

\section{Numerical Experiments}
\label{secNumercialExpe}
In this section, we illustrate the effectiveness of our approach using various examples. All experiments involve reconstructing complete population profiles, agent strategies, and entire environmental configurations from partial, noisy data about populations and environments. Specifically, Subsection \ref{sub:num:smfg0} highlights the determination of the viscosity constant, a critical parameter measuring uncertainty in agents' dynamics. Subsections \ref{sub:num:smfguk} and \ref{sub:num:smfgnlc} are dedicated to identifying the coupling functions in MFGs, which encode agents' interactions. Subsection \ref{sub:num:mspfcd} addresses a misspecification problem, demonstrating our method's robustness by recovering configurations of a MFG with local coupling using a model designed for non-local coupling. Lastly, in Subsection \ref{sub:num:tdmfg}, we apply our methodology to a time-dependent MFG scenario, showcasing its adaptability to dynamic settings.

In all of our experiments, we employ Python in conjunction with the JAX package, which facilitates automatic differentiation. The current setup relies solely on CPU processing. However, notable enhancements in performance can be realized by incorporating accelerated hardware like Graphics Processing Units (GPUs), which are designed to significantly expedite computational tasks.

\subsection{A Stationary MFG}
\label{sub:num:smfg0}
In this example, we study inverse problems related to the following stationary MFG
\begin{align}
\label{exp:smfg}
\begin{cases}
-\nu \Delta u + H(x, \nabla u) = m^2 - V(x) + \overline H, & \forall x \in \mathbb{T}^2,\\
-\nu \Delta m - \div(D_p H(x, \nabla u) m) = 0, & \forall x\in \mathbb{T}^2, \\
\int_{\mathbb{T}^2} u \, dx = 0, \int_{\mathbb{T}^2} m \, dx = 1, &
\end{cases}    
\end{align}
where \( H(x, p) = \frac{1}{2} |p|^2 \). We identify \(\mathbb{T}^2\) by \([-0.5, 0.5) \times [-0.5, 0.5)\), and the true potential \( V(x) = \sin(2\pi y) + \sin(2\pi x) + \cos(4\pi x) \).

Our first goal is to recover  \( (u, m, V, \overline{H}) \) given noisy observations of the population density \( m \) under a fixed viscosity \( \nu \). To verify the accuracy of the recovered quantities, we first solve \eqref{exp:smfg} using the GP method in \cite{mou2022numerical} given $\nu$ and $V$. These numerical solutions, alongside the explicit formulation of $V$, are viewed as reference solutions. This designation underscores their role as benchmarks against which we evaluate the accuracy of the corresponding recovered solutions. For simplicity, we denote the reference solutions by $(u^*, m^*, V^*, \overline{H}^*)$. To solve the inverse problem, we randomly select \( M = 400 \) collocation points from \(\mathbb{T}^2\) and choose the first \( I = 40\) points as observation points. We treat the addition of  the values of $m^*$ at the first $I$ observation points with  independent standard Gaussian noises following $\mathcal{N}(0, \gamma^2I)$, where $\gamma = 10^{-3}$, as the observed data for $m$. We employed periodic kernels for recovering \( m \), \( u \), and \( V \). For each kernel, we set the lengthscale parameter to  \( 1.41 \), and incorporated adaptive diagonal regularization terms  (``nuggets") with parameters \( \eta = 10^{-8} \) to the covariance matrices. Figure \ref{fig:contour1} presents the results of our experiment in recovering \( m \), \( u \), and \( V \) with a viscosity \( \nu = 0.1 \). We notice that the accuracy in recovering the variables $u$ and $m$ is superior to that of $V$ and $\overline{H}$. The reason behind this lies in the nature of the variables and their interdependencies. Given known values of density $m$ and viscosity $\nu$, the solution for $u$ can be precisely determined through the FP equation and integration constraints (see discussions in the introduction section). In contrast, the recovery of the pair $(V, \overline{H})$ faces a uniqueness challenge. This challenge arises because if a pair $(V, \overline{H})$ satisfies the equation in \eqref{exp:smfg}, then so does the pair $(V + C, \overline{H} + C)$ for any constant $C$. This implies that without direct observations of $V$, it's impossible to uniquely determine the values of both $V$ and $\overline{H}$. Furthermore, we observe that the pointwise error in estimating $V - \overline{H}$ is less than that in estimating $V$ alone. The $L_2$ error for the recovered $V$ is $0.311662$, in contrast, the $L_2$ error for the recovered $V-\overline{H}$ is significantly lower, recorded at $0.0035159$. Hence, it is promising to accurately recover $V$ up to a constant when $\nu$ and only noisy data on $m$ are given.

Figure \ref{fig:log-log plot1} illustrates the log-log plot of error versus the number of observed data when \( \nu = 0.1 \). We see that the errors decrease as we increase the number of observations. The results are obtained by averaging over 50 independent realizations.

In scenarios where viscosity \( \nu \) is unknown, we demonstrate the feasibility of simultaneously recovering both \( \nu \) and \( V \) from observations of the population \( m \) and \( V \). Table \ref{tab:recovered viscosity} shows the recovered \( \nu \) values under various true viscosities. We choose \( M = 400 \) sample points, out of which \( I = 20 \) data points of \( m \) and \( V \) are noisily observed with additive noises following $\mathcal{N}(0, \gamma^2I)$, where $\gamma=10^{-3}$.  We see that our GP framework can accurately recover the uncertainty in the MFG system.

\begin{figure}
  \centering
  \begin{subfigure}{0.24\textwidth} 
    \includegraphics[width=\linewidth]{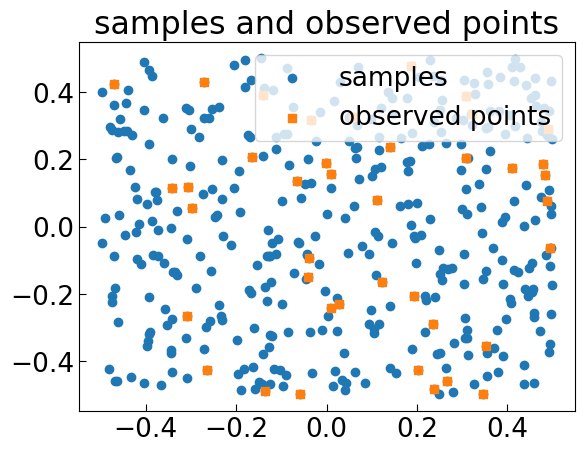}
    \caption{Samples \& observations}
  \end{subfigure}
  \hfil
  \begin{subfigure}{0.24\textwidth}
    \includegraphics[width=\linewidth]{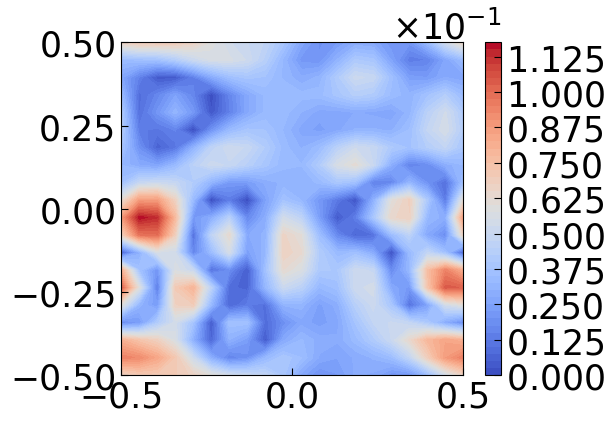}
    \caption{error contour of $V-\bar H$}
  \end{subfigure}
   \vspace{1em} 
 
  \begin{subfigure}{0.24\textwidth}
    \includegraphics[width=\linewidth]{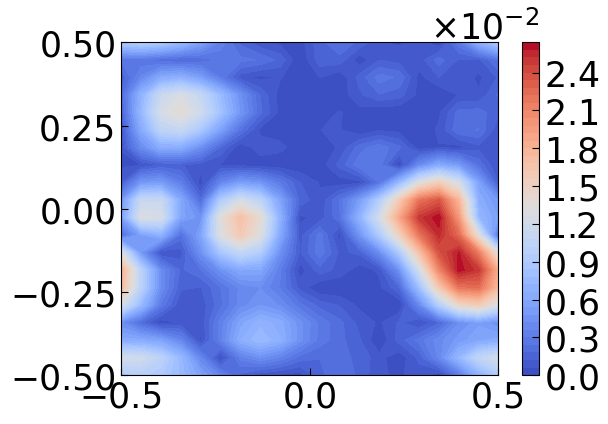}
    \caption{error contour of $m$}
  \end{subfigure}
  \hfill
  \begin{subfigure}{0.24\textwidth}
    \includegraphics[width=\linewidth]{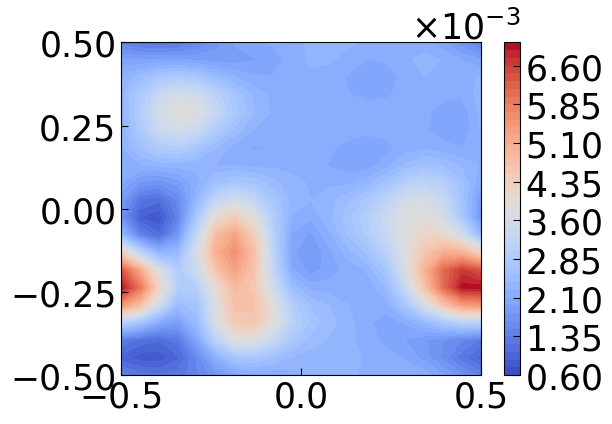}
    \caption{error contour of $u$}
  \end{subfigure}
  \hfill
  \begin{subfigure}{0.24\textwidth}
    \includegraphics[width=\linewidth]{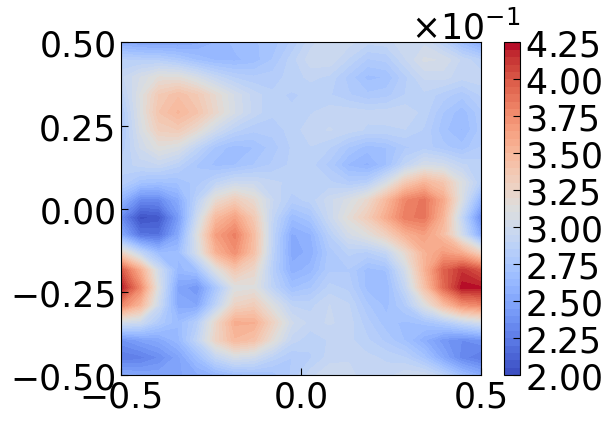}
    \caption{error contour of $V$}
  \end{subfigure}

  \vspace{1em} 
  
  \begin{subfigure}{0.24\textwidth} 
    \includegraphics[width=\linewidth]{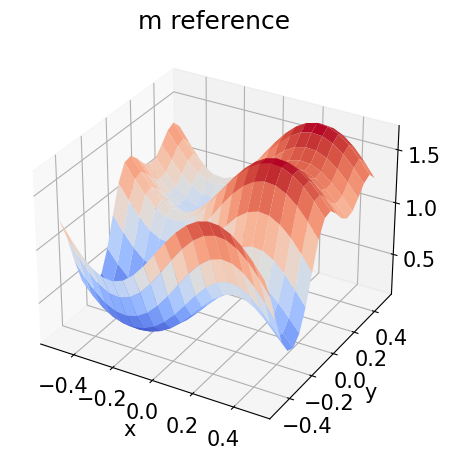}
    \caption{baseline for $m$}
  \end{subfigure}
  \hfill
  \begin{subfigure}{0.24\textwidth}
    \includegraphics[width=\linewidth]{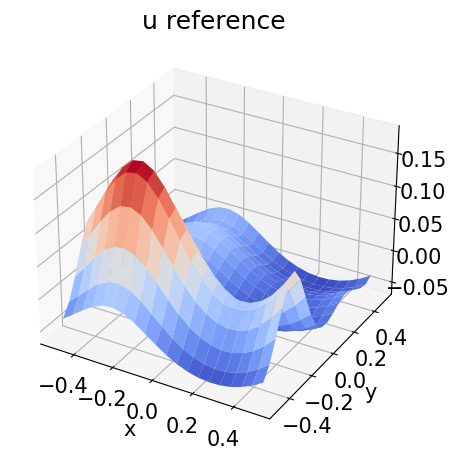}
    \caption{baseline for $u$}
  \end{subfigure}
  \hfill
  \begin{subfigure}{0.24\textwidth}
    \includegraphics[width=\linewidth]{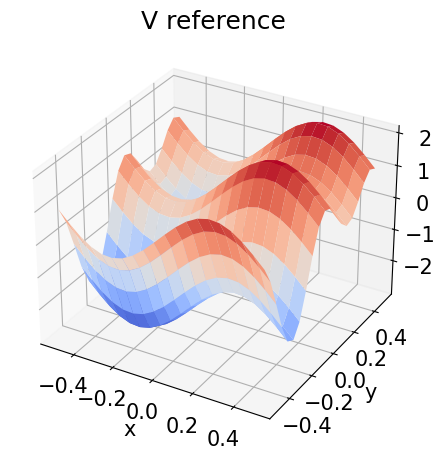}
    \caption{baseline for $V$}
  \end{subfigure}
  
  \vspace{1em} 
  
  \begin{subfigure}{0.24\textwidth} 
    \includegraphics[width=\linewidth]{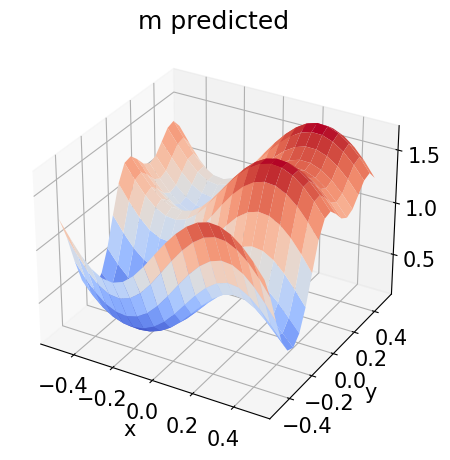}
    \caption{recovered $m$}
  \end{subfigure}
  \hfill
  \begin{subfigure}{0.24\textwidth}
    \includegraphics[width=\linewidth]{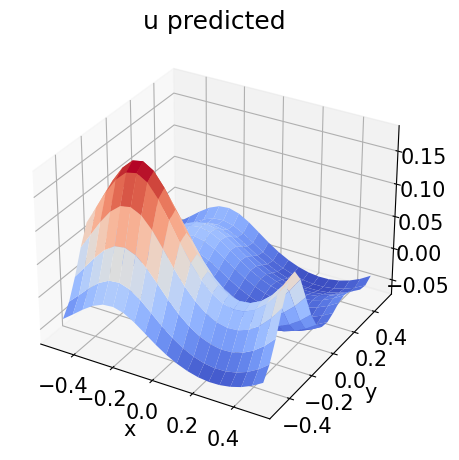}
    \caption{recovered $u$}
  \end{subfigure}
  \hfill
  \begin{subfigure}{0.24\textwidth}
    \includegraphics[width=\linewidth]{example1/V_reference1.png}
    \caption{recovered $V$}
  \end{subfigure}
  
  \caption{Numerical results for the MFG in \eqref{exp:smfg}. $\nu=0.1$, $I = 40$ observed data on $m$, $M=400$ collocation points, ground truth $\overline H=-1.066853$, recovered $\overline H^{\dagger}=-0.752241$. The $L_2$ norm of error of $V$ is $0.311662$, and the $L_2$ norm of error of $V-\overline H$ is $0.035159$.}
  \label{fig:contour1}
\end{figure}

\begin{figure}
  \centering
  \begin{subfigure}{0.3\linewidth}
\includegraphics[width=\linewidth]{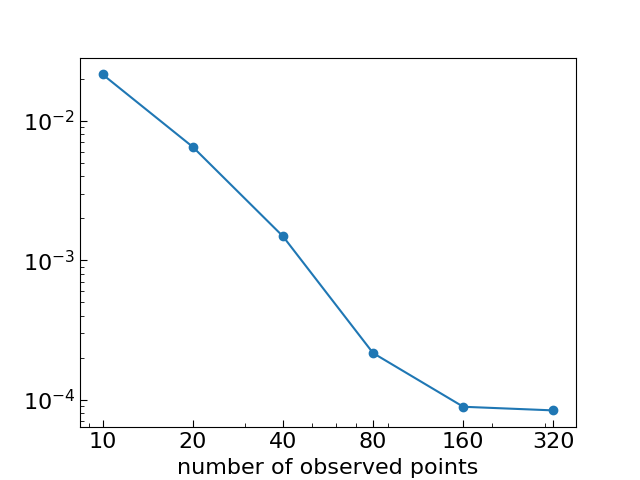}
    \caption{log-log plot of $L_2$ error of $m$}
  \end{subfigure}
  \hfill
  \begin{subfigure}{0.3\linewidth}
    \includegraphics[width=\linewidth]{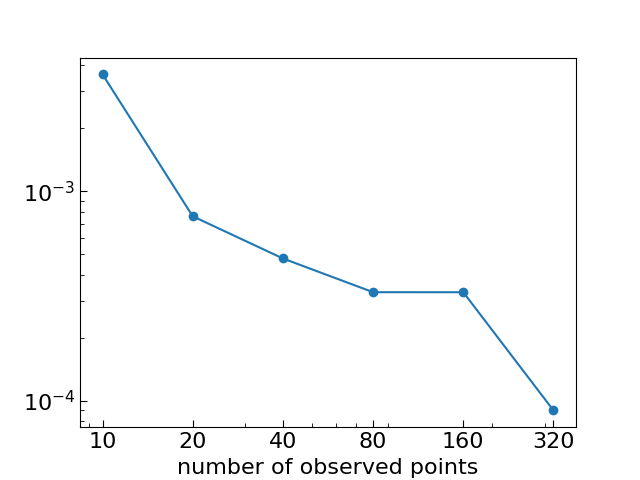}
    \caption{log-log plot of $L_2$ error of $u$}
  \end{subfigure}
  \hfill
  \begin{subfigure}{0.3\linewidth}
    \includegraphics[width=\linewidth]{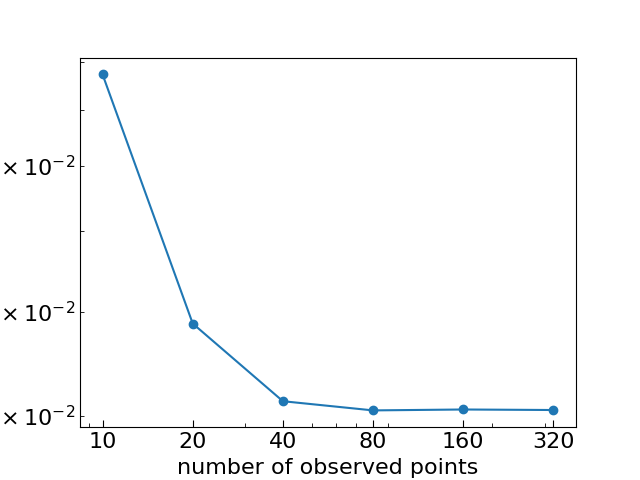}
    \caption{log-log plot of $L_2$ error of $V$}
  \end{subfigure}
  \caption{Numerical results for the MFG in \eqref{exp:smfg}. Log-log plots of errors v.s. number of observed data $I$, when $\nu=0.1$. }
  \label{fig:log-log plot1}
\end{figure}

\begin{table}[h!]
  \centering
  \caption{Recovered viscosity constants for the MFG in \eqref{exp:smfg} with $20$ points of observations (both $m$ and $V$ are observed), and $M=400$ collocation points. $\nu$ stands for the true viscosity and $\nu^\dagger$ represents the recovered viscosity. }
  \label{tab:recovered viscosity}
  \begin{tabular}{|c|c|c|c|c|c|c|c|c|}
    \hline
    $\nu$ & 1 & 0.5 & 0.1 & 0.05 & 0.01\\
    \hline
    $\nu^{\dagger}$& 1.000804 & 0.497811 & 0.099312 & 0.050905 & 0.015023 \\
    \hline
  \end{tabular}
\end{table}

\subsection{A Stationary MFG with an Unknown Local Coupling}
\label{sub:num:smfguk}

We investigate a stationary MFG with an unknown local coupling, formulated as:
\begin{align}
\label{exp:smfguk}
\begin{cases}
-\Delta u + H(x, \nabla u) = \Gamma(m) - V(x) + \overline H, & \forall x\in \mathbb{T}^2, \\
-\Delta m - \div (D_p H(x, \nabla u) m) = 0, & \forall x\in \mathbb{T}^2, \\
\int_{\mathbb{T}^2} u \, dx = 0, \int_{\mathbb{T}^2} m \, dx = 1, &
\end{cases}    
\end{align}
where \( H(x, p) = \frac{1}{2} |p|^2 \), \( \Gamma(m) = m^{\alpha} \), and the exponent \( \alpha \) is a positive constant. We identify \( \mathbb{T}^2\) with \( [-0.5, 0.5) \times [-0.5, 0.5) \), and choose the true potential \( V(x) = \cos(2\pi x) + \sin(2\pi y) + \sin(4\pi x) \). Our objective is to recover both \( V \) and \( \alpha \) based on noisy observations of the population density \( m \) and  \( V \).

As before, we use the GP method in \cite{mou2022numerical} to compute the numerical solutions for \eqref{exp:smfguk} for given $\alpha$ and $V$. These numerical solutions, in conjunction with  the explicit formulation of $V$ and $\alpha$ serve as reference solutions for accuracy comparisons. We randomly select \( M = 400 \) collocation points from \( \mathbb{T}^2 \), observing the values of \( m \) at the first \( I = 40 \) points and the values of \( V \) at the first \( I_V = 20 \) points as data. Independent Gaussian noises following \( \mathcal{N}(0, \gamma^2 I) \) with a standard deviation of \( \gamma = 10^{-3} \) are added to these observations. Periodic kernels are deployed to recover \( m \), \( u \), and \( V \), with lengthscale parameters being 0.6, 0.6, and 1, respectively. Additionally, adaptive diagonal regularization terms ('nuggets') with parameters \( \eta = 10^{-8} \) are incorporated into the covariance matrices for regularization purposes.

Figure \ref{fig:contour2} presents our experimental results in recovering \( m \), \( u \), and \( V \) with  \( \alpha = 2 \). The results show that we recover $u, m, V, \overline{H}$, and the coupling $\Gamma$ within reasonable accuracy with limited observational data. Table \ref{tab:recovered power} displays the recovered values of \( \alpha \) and \( \overline H \) under various ground truths, with \( I_V = 20 \) observed data points for \( V \), \( I = 40 \) observed data points for \( m\), and \( M = 400 \) samples.

\begin{figure}
  \centering
  \begin{subfigure}{0.24\textwidth} 
    \includegraphics[width=\linewidth]{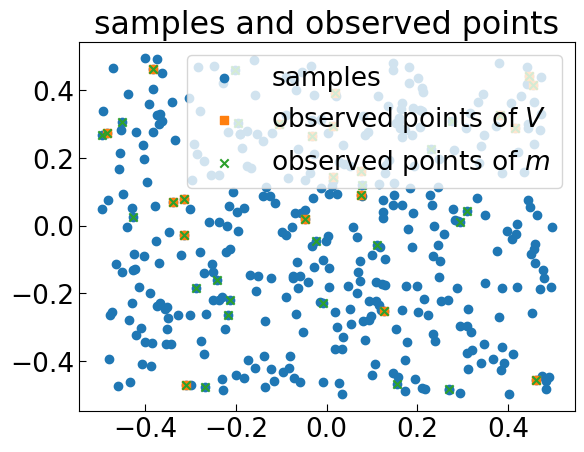}
    \caption{Samples \& observations}
  \end{subfigure}
  \hfill
  \begin{subfigure}{0.24\textwidth}
    \includegraphics[width=\linewidth]{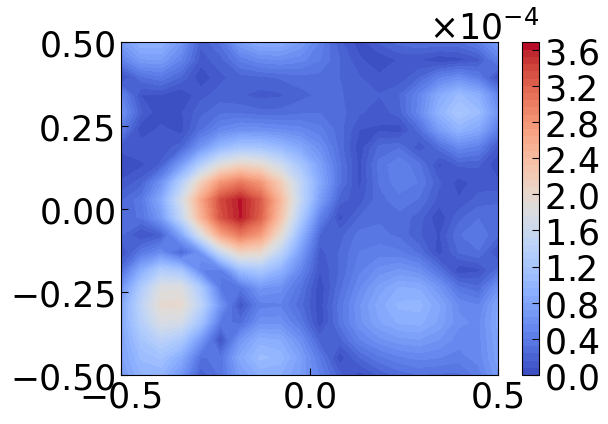}
    \caption{error contour of $m$}
  \end{subfigure}
  \hfill
  \begin{subfigure}{0.24\textwidth}
    \includegraphics[width=\linewidth]{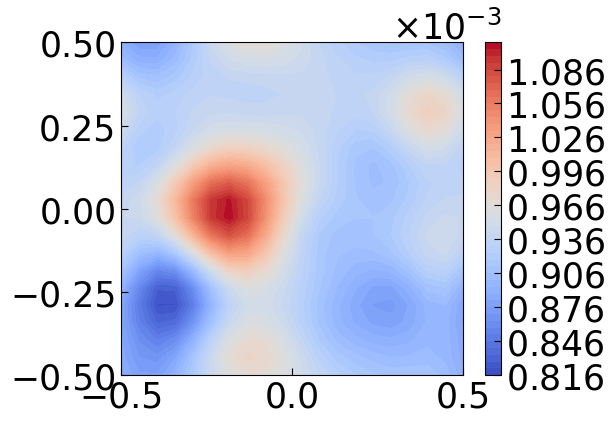}
    \caption{error contour of $u$}
  \end{subfigure}
  \hfill
  \begin{subfigure}{0.24\textwidth}
    \includegraphics[width=\linewidth]{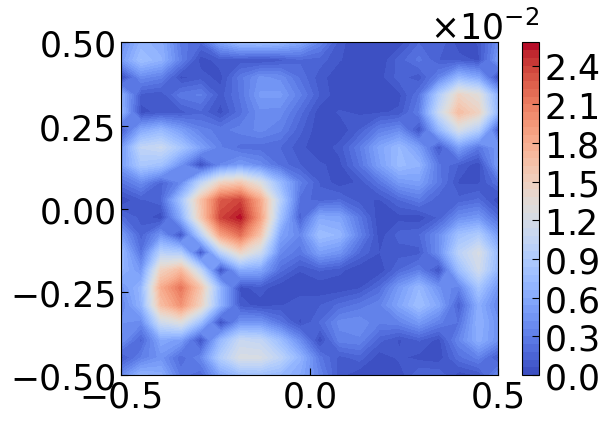}
    \caption{error contour of $V$}
  \end{subfigure}

  \vspace{1em} 
 \begin{subfigure}{0.24\textwidth}
    \includegraphics[width=\linewidth]{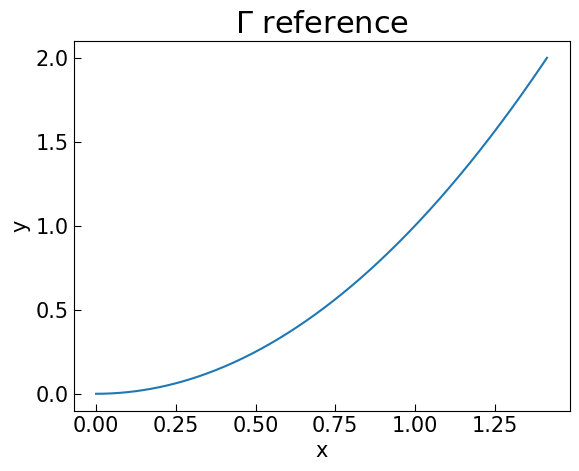}
    \caption{baseline for $\Gamma$}
  \end{subfigure}
  \hfill
  \begin{subfigure}{0.24\textwidth} 
    \includegraphics[width=\linewidth]{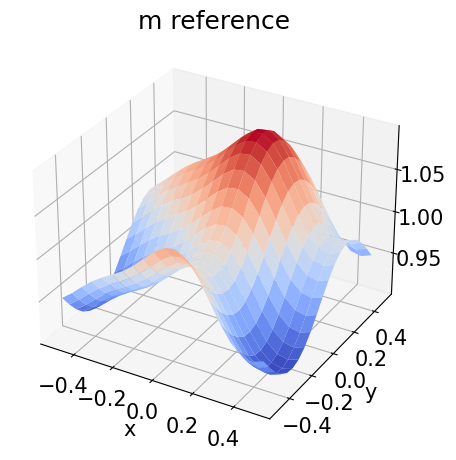}
    \caption{baseline for $m$}
  \end{subfigure}
  \hfill
  \begin{subfigure}{0.24\textwidth}
    \includegraphics[width=\linewidth]{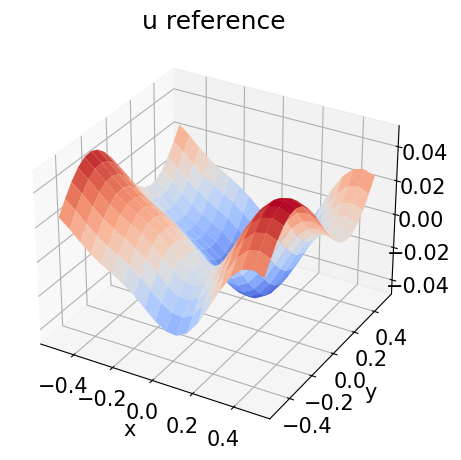}
    \caption{baseline for $u$}
  \end{subfigure}
  \hfill
  \begin{subfigure}{0.24\textwidth}
    \includegraphics[width=\linewidth]{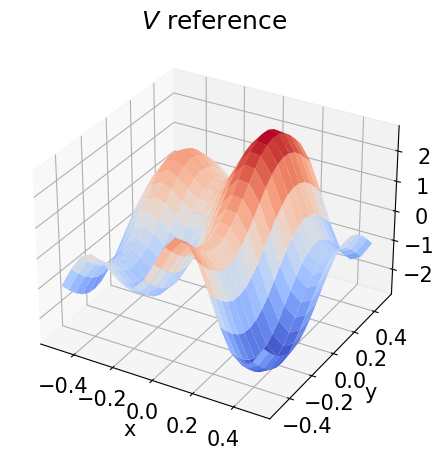}
    \caption{baseline for $V$}
  \end{subfigure}

  \vspace{1em} 
  \begin{subfigure}{0.24\textwidth}
    \includegraphics[width=\linewidth]{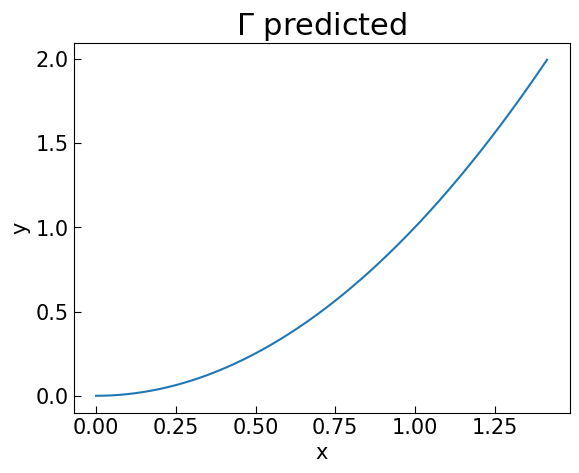}
    \caption{recovered $\Gamma$}
  \end{subfigure}
  \hfill
  \begin{subfigure}{0.24\textwidth} 
    \includegraphics[width=\linewidth]{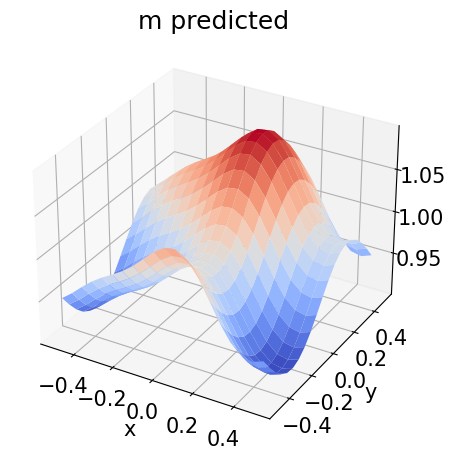}
    \caption{recovered $m$}
  \end{subfigure}
  \hfill
  \begin{subfigure}{0.24\textwidth}
    \includegraphics[width=\linewidth]{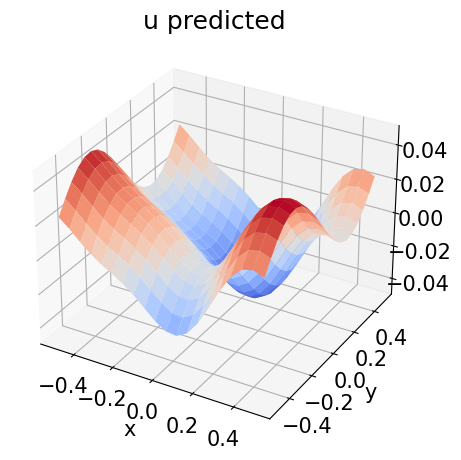}
    \caption{recovered $u$}
  \end{subfigure}
  \hfill
  \begin{subfigure}{0.24\textwidth}
    \includegraphics[width=\linewidth]{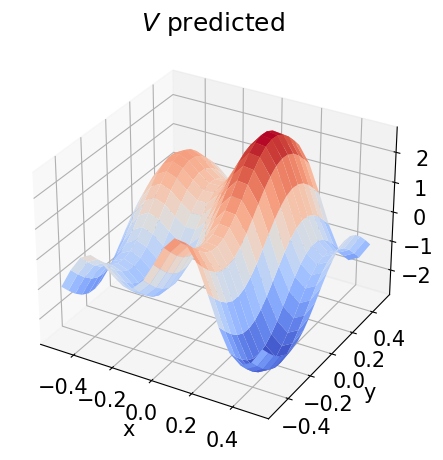}
    \caption{recovered $V$}
  \end{subfigure}

  \caption{Numerical results for the MFG in \eqref{exp:smfguk} with $\Gamma(m)=m^\alpha$.  $I=40$ observed data on $m$ and $I_V=20$ observed data on $V$, $M=400$ collocation points, ground truth $\alpha=2$, $\overline H= -0.975970$, recovered $\alpha^{\dagger}=1.990808$, $\overline H^{\dagger}=-0.966401$.}
  \label{fig:contour2}
\end{figure}

\begin{table}
  \centering
  \caption{Numerical results for the MFG in \eqref{exp:smfguk} with $\Gamma(m)=m^\alpha$. Recovered power $\alpha$ and $\overline H$ with $I_V=20$ data points of $V$, $I=40$ observations on $m$, and $M=400$ sample points. $\alpha$ and $\overline{H}$ stand for the references and $\alpha^\dagger$ and $\overline{H}^\dagger$ represent the recovered variables.  }
  \label{tab:recovered power}
  \begin{tabular}{|c|c|c|c|c|c|c|c|c|}
    \hline
    $\alpha$ & 0.5 & 1 & 2 & 3 & 5\\
    \hline
    $\alpha^{\dagger}$& 0.504769 & 1.182533 & 1.990808 & 2.89700 & 4.776276 \\
    \hline
     \rule{0pt}{2.3ex}   $\overline{H}^{}$ & -0.970467 & -0.971760& -0.975970 & -0.985608 &-0.998791 \\
    \hline
    $\overline H^{\dagger}$ &-0.970979 & -0.991179& -0.966402 &-0.985758 &-0.981125\\
    \hline
  \end{tabular}
\end{table}

\subsection{A Stationary MFG with an Unknown Non-local Coupling}
\label{sub:num:smfgnlc}
In this example, we study the stationary MFG \eqref{exp:smfguk} with a non-local coupling $\Gamma$ defined as 
\[
\Gamma(m) = \int_{\mathbb{T}^2} e^{-\frac{(x-y)^2}{2\sigma^2}}m(y) \, \dif y,
\]
where the lengthscale \(\sigma\) is unknown. We identify \(\mathbb{T}^2\) with \( [0, 1) \times [0, 1)\), and the true potential \(V(x) = \sin(2\pi y) + \sin(2\pi x) + \cos(4\pi x)\). Our objective is to recover \(m, u, V, \overline{H}\) and \(\sigma\) based on noisy observations of the population density \(m\) and coefficient \(V\).

We use the GP method in \cite{mou2022numerical} to compute the reference solutions given $\sigma$ and $V$. Then, we take $M=950$ sample points in domain, of which $900$ points are Gauss-Legendre quadrature points and $50$ points are randomly selected observation points for \(m\) and \(V\), i.e., \(I = I_V = 50\). Independent Gaussian noises following \(\mathcal{N}(0, \gamma^2 I)\) with a standard deviation of \(\gamma = 10^{-3}\) are added to these observations. The Gauss-Legendre quadrature rule is utilized to compute the convolution in the non-local coupling term. Periodic kernels are chosen for \(m\), \(u\), and \(V\) with lengthscale parameters $0.6$, $0.6$, and $1$, respectively. Additionally, adaptive diagonal regularization terms (``nuggets") with parameters \(\eta = 10^{-8}\) are incorporated into the covariance matrices.

Figure \ref{fig:contour3} presents our experimental results in recovering \(m\), \(u\), and \(V\) with \(\sigma = 0.5\). Table \ref{tab:recovered lengthscale} displays the recovered values of \(\sigma\) and \(\overline{H}\) under various ground truths, with \(I_V = I = 50\) observed \(m, V\) data points and a total of \(M = 950\) sample points. These results demonstrate the accuracy of our GP framework.

\begin{figure}
  \centering
  \begin{subfigure}{0.24\textwidth} 
    \includegraphics[width=\linewidth]{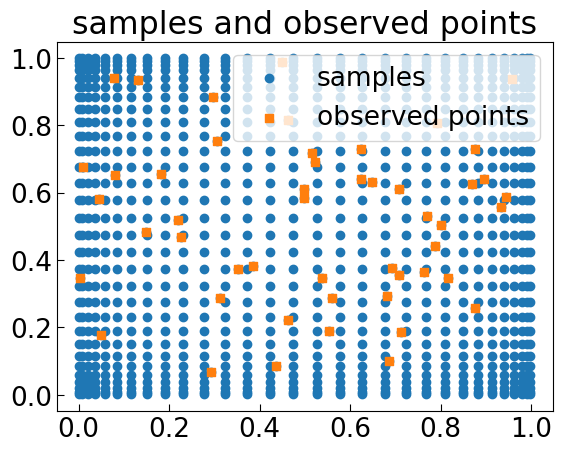}
    \caption{Samples \& observations}
  \end{subfigure}
  \hfill
  \begin{subfigure}{0.24\textwidth}
    \includegraphics[width=\linewidth]{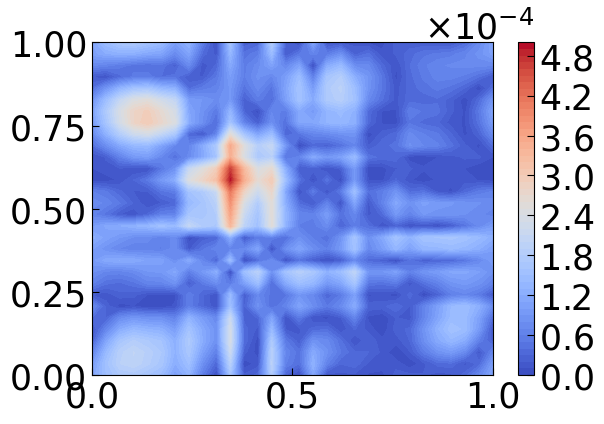}
    \caption{error contour of $m$}
  \end{subfigure}
  \hfill
  \begin{subfigure}{0.24\textwidth}
    \includegraphics[width=\linewidth]{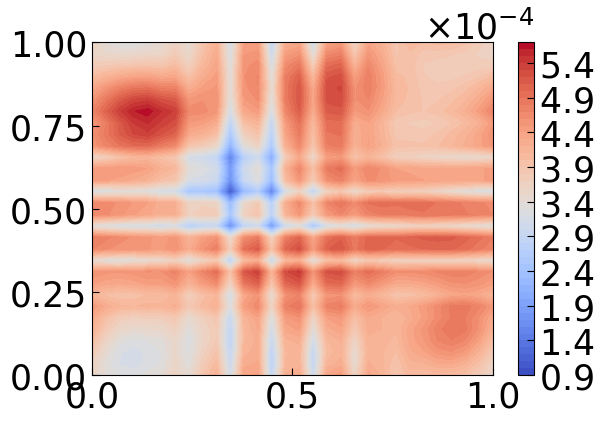}
    \caption{error contour of $u$}
  \end{subfigure}
  \hfill
  \begin{subfigure}{0.24\textwidth}
    \includegraphics[width=\linewidth]{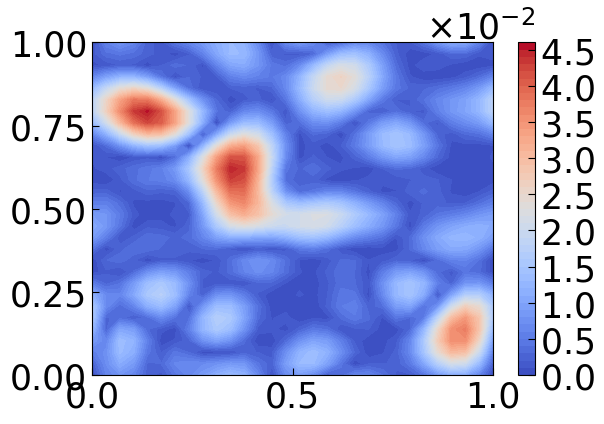}
    \caption{error contour of $V$}
  \end{subfigure}

  \vspace{1em} 
 \begin{subfigure}{0.24\textwidth}
    \includegraphics[width=\linewidth]{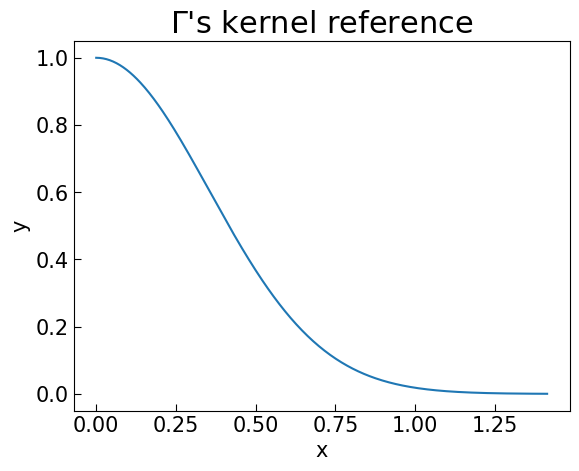}
    \caption{baseline for $\Gamma$}
  \end{subfigure}
  \hfill
  \begin{subfigure}{0.24\textwidth} 
    \includegraphics[width=\linewidth]{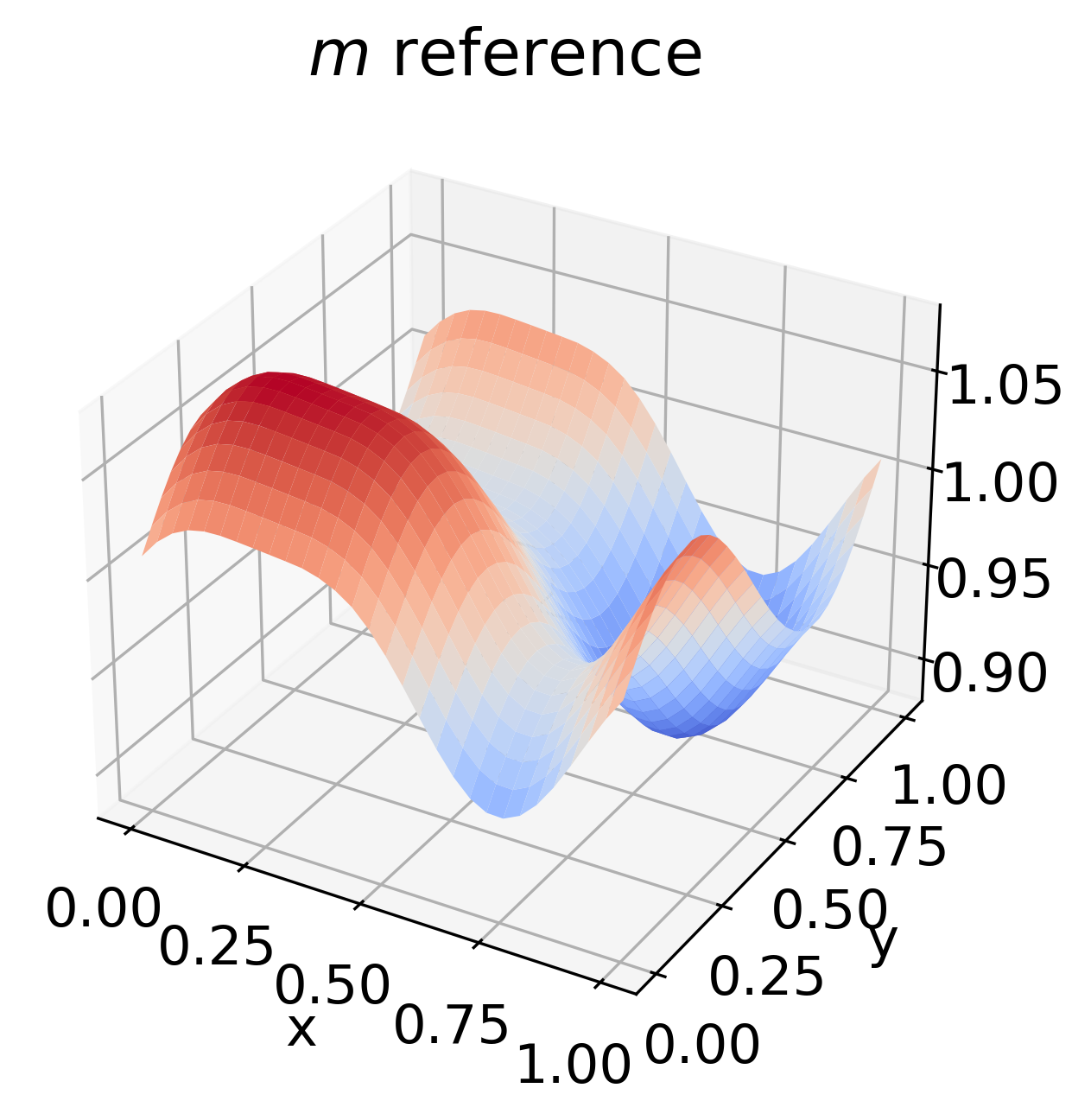}
    \caption{baseline for $m$}
  \end{subfigure}
  \hfill
  \begin{subfigure}{0.24\textwidth}
    \includegraphics[width=\linewidth]{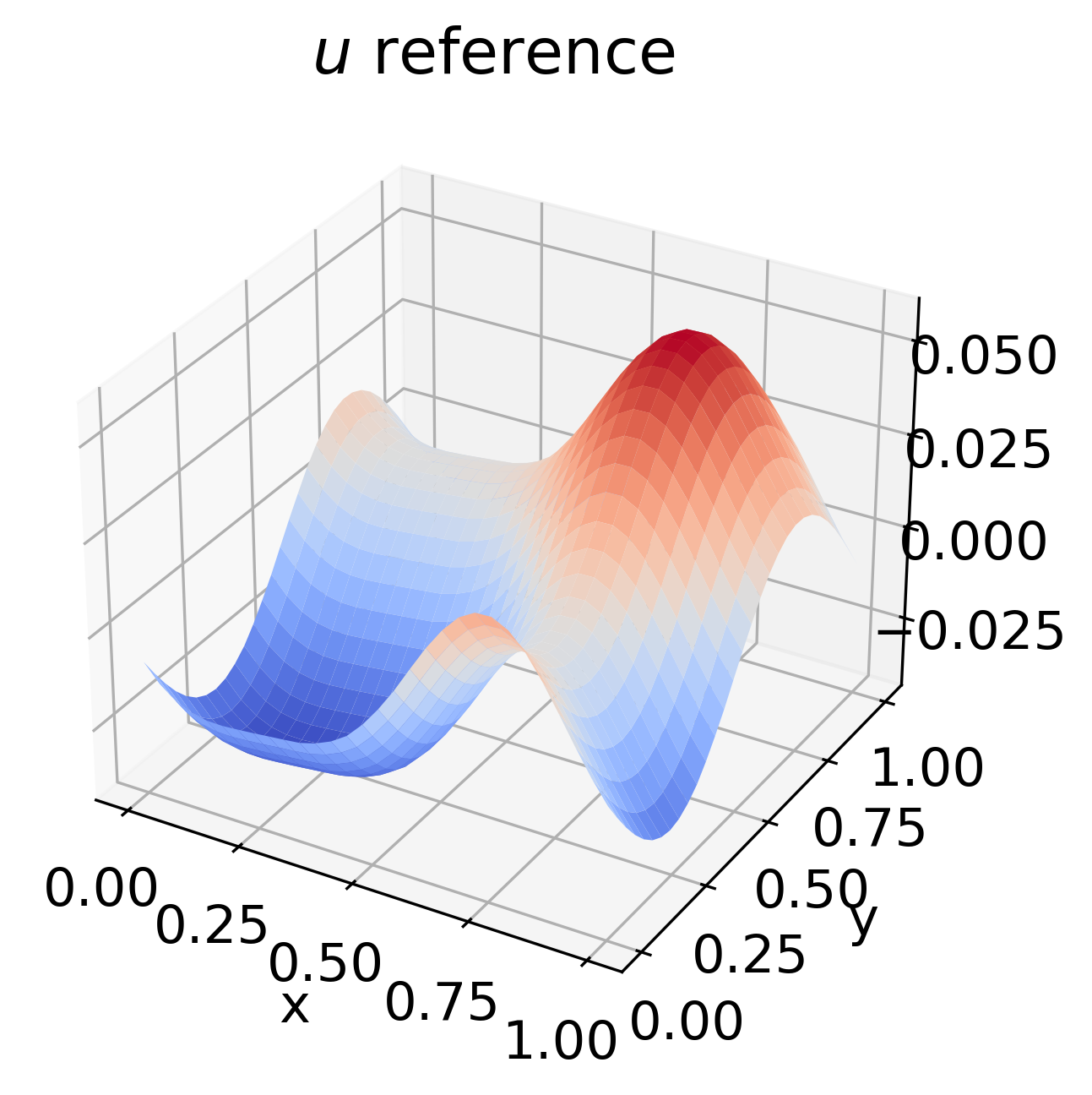}
    \caption{baseline for $u$}
  \end{subfigure}
  \hfill
  \begin{subfigure}{0.24\textwidth}
    \includegraphics[width=\linewidth]{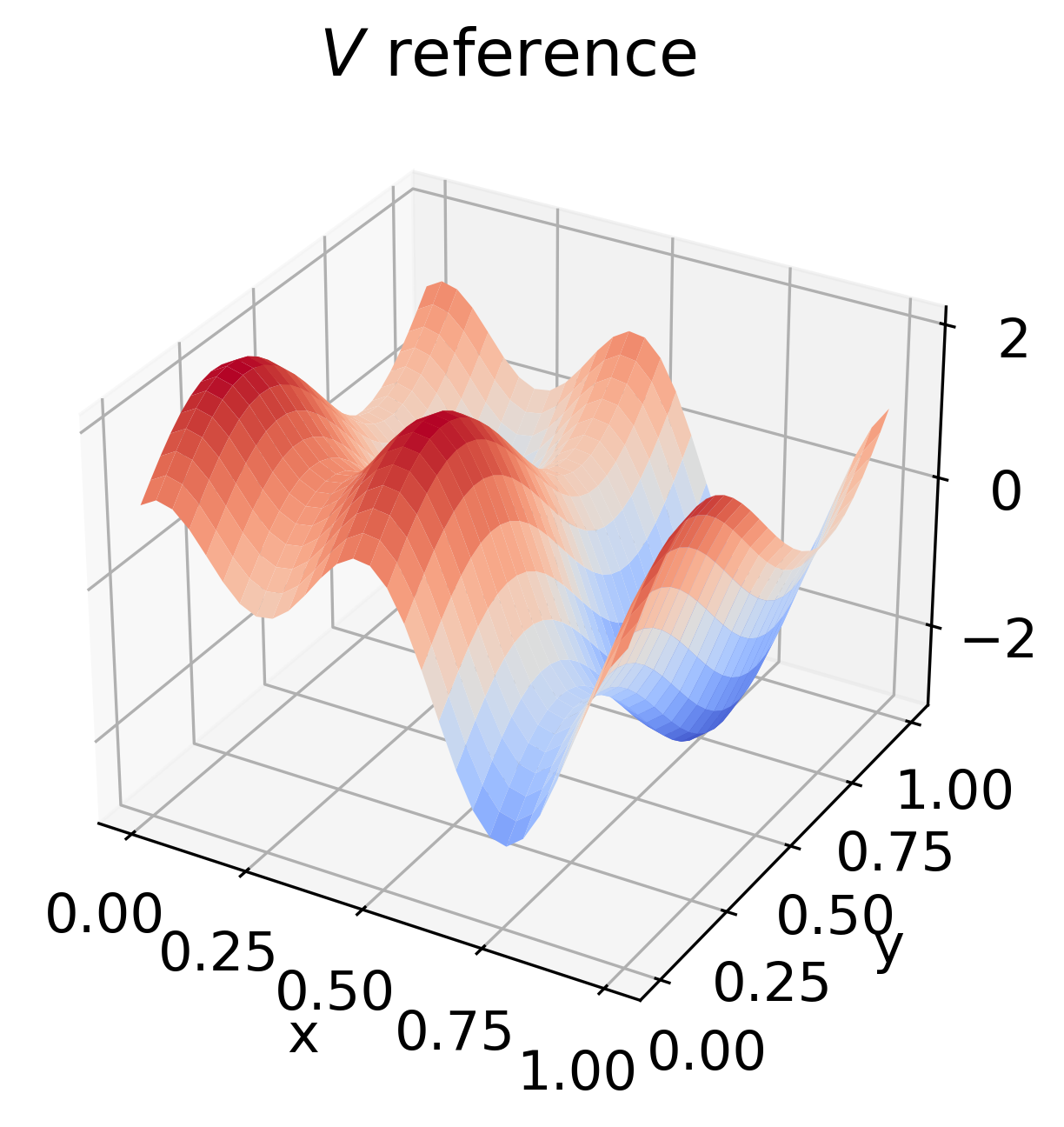}
    \caption{baseline for $V$}
  \end{subfigure}

  \vspace{1em} 
  \begin{subfigure}{0.24\textwidth}
    \includegraphics[width=\linewidth]{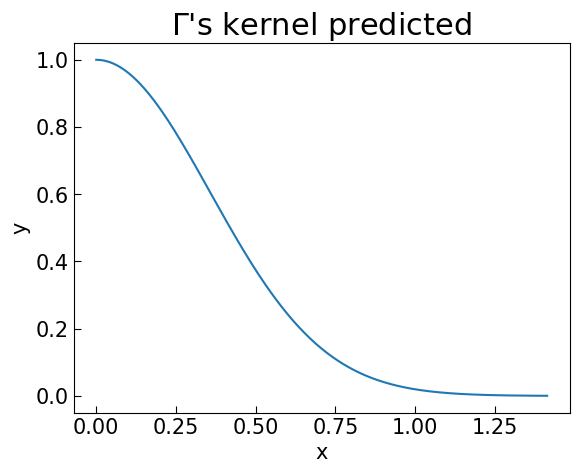}
    \caption{recovered $\Gamma$}
  \end{subfigure}
  \hfill
  \begin{subfigure}{0.24\textwidth} 
    \includegraphics[width=\linewidth]{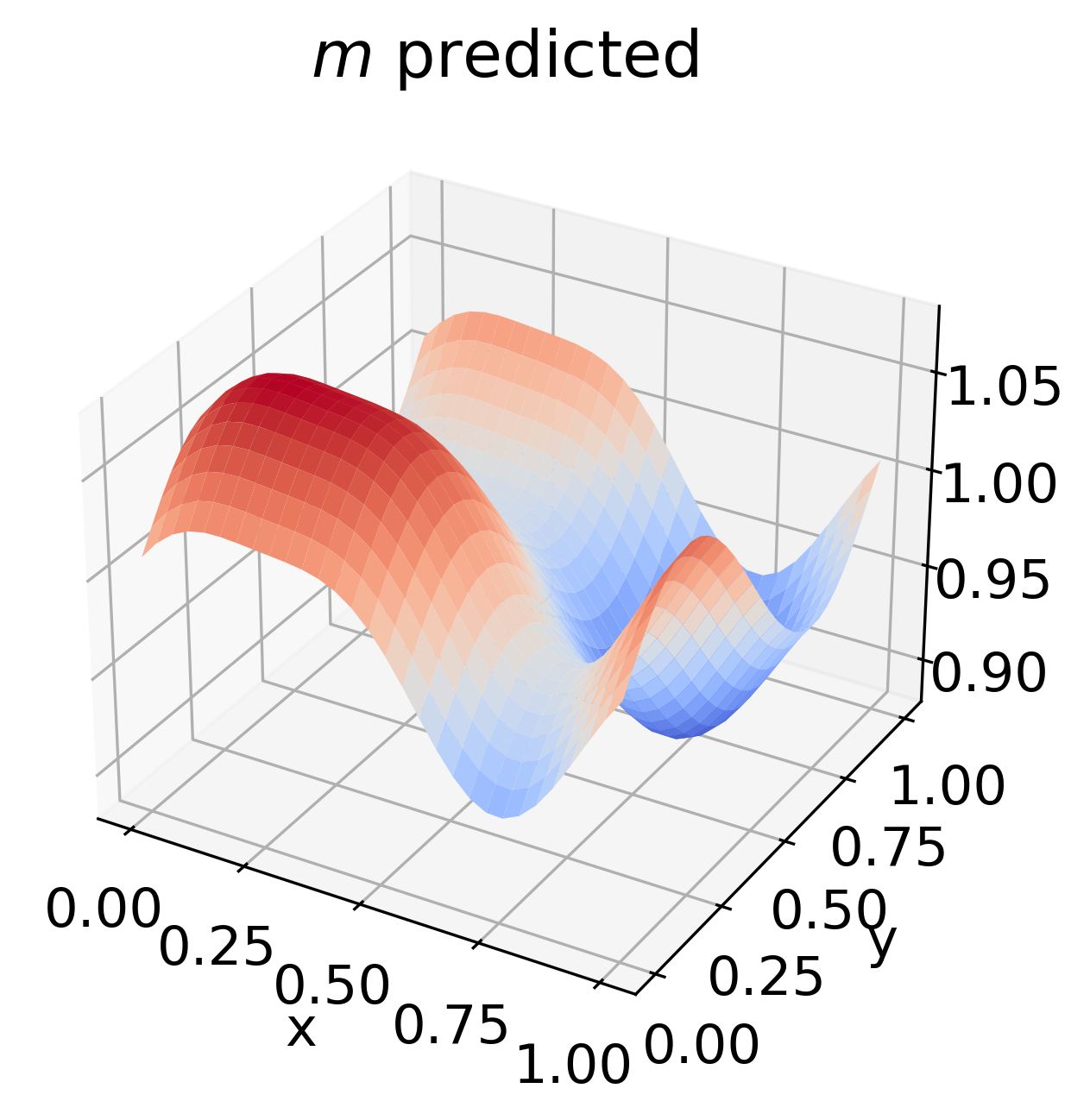}
    \caption{recovered $m$}
  \end{subfigure}
  \hfill
  \begin{subfigure}{0.24\textwidth}
    \includegraphics[width=\linewidth]{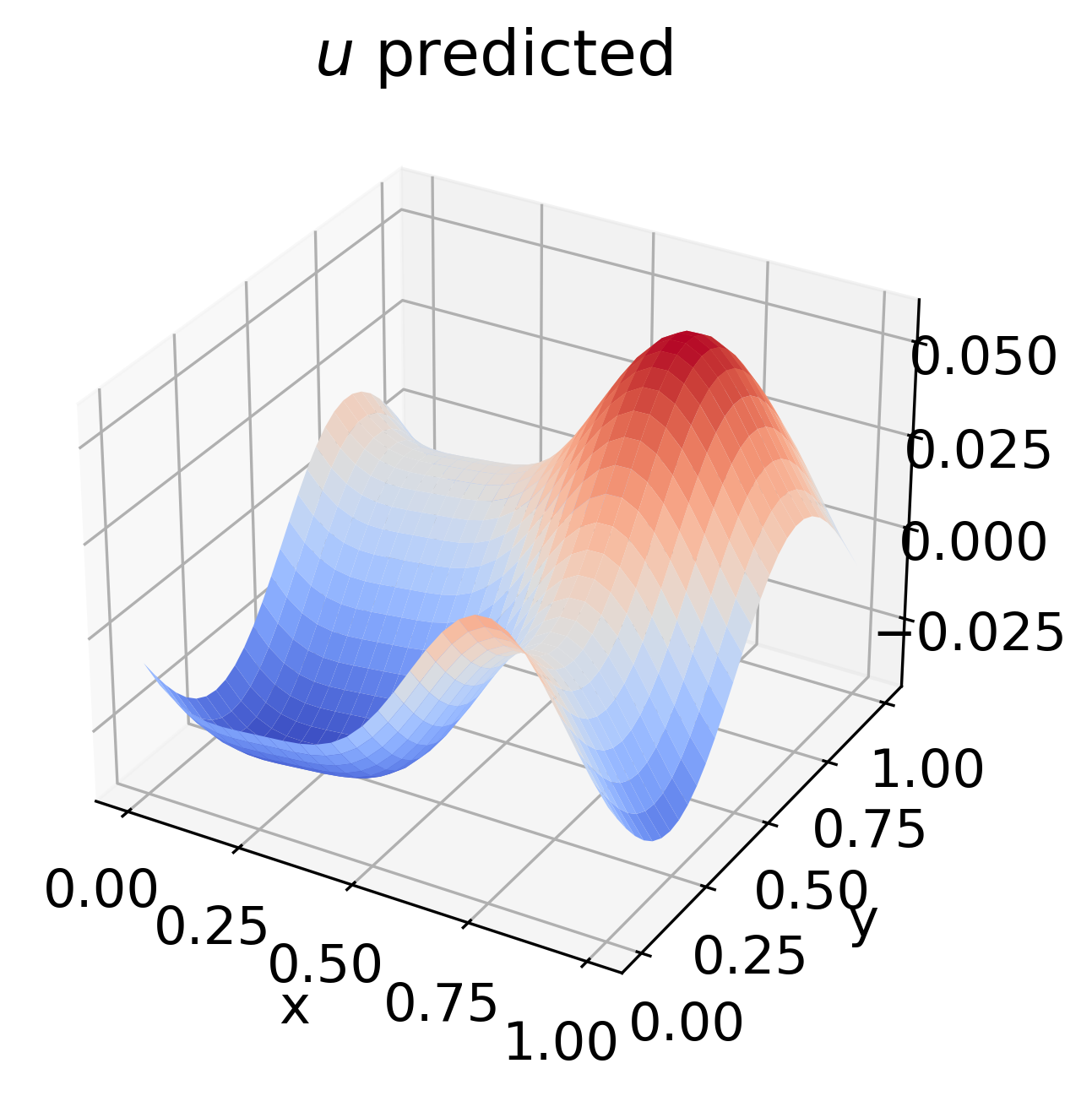}
    \caption{recovered $u$}
  \end{subfigure}
  \hfill
  \begin{subfigure}{0.24\textwidth}
    \includegraphics[width=\linewidth]{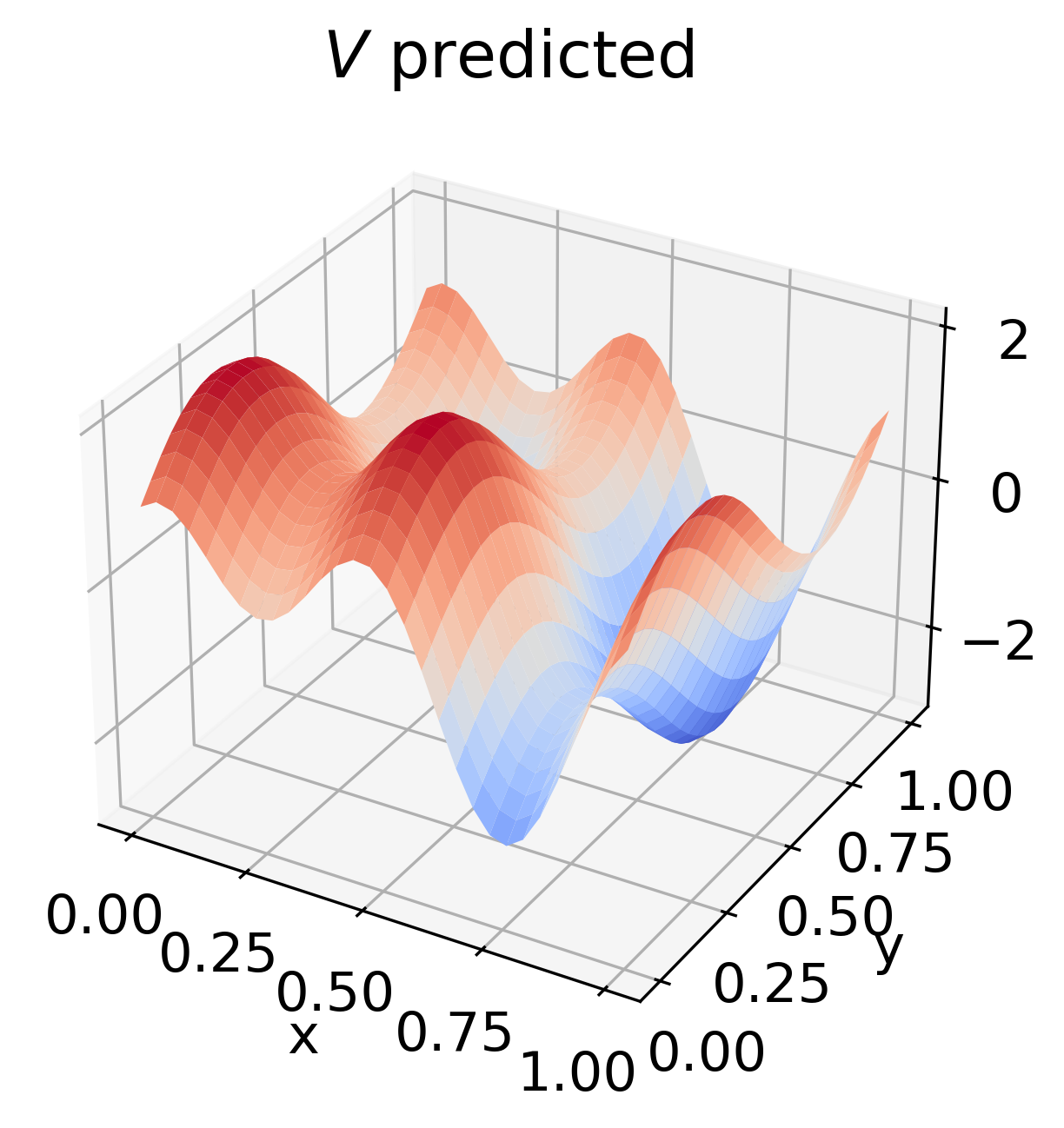}
    \caption{recovered $V$}
  \end{subfigure}

  \caption{Numerical results for the MFG in \eqref{exp:smfguk} with $\Gamma(m)=\int_{\mathbb{T}^2}e^{\frac{(x-y)^2}{2\sigma^2}}m(y)\dif y$. $50$ observation points for $m$ and $V$, $M=950$ sample points, ground truth $\sigma=0.5$, $\overline H=-0.545407$, recovered $\sigma^{\dagger}=0.504769$, $\overline H^{\dagger} = -0.626463$.}
  \label{fig:contour3}
\end{figure}

\begin{table}
  \centering
  \caption{Numerical results for the MFG in \eqref{exp:smfguk} with $\Gamma(m)=\int_{\mathbb{T}^2}e^{\frac{(x-y)^2}{2\sigma^2}}m(y)\dif y$. Recovered values of $\sigma$ and $\overline{H}$ with $50$ observation points for $m$ and $V$ and $M=950$ sample points. }
  \label{tab:recovered lengthscale}
  \begin{tabular}{|c|c|c|c|c|}
    \hline
    $\sigma$ & 0.4 & 0.5 & 1 & 2\\
    \hline
    $\sigma^{\dagger}$& 0.541382 & 0.504769 & 0.978857 & 1.921419 \\
    \hline
     \rule{0pt}{2.3ex}  $\overline{H}$&-0.431692& -0.545408 & -0.819267 & -0.926975 \\
     \hline
    $\overline{H^{\dagger}}$& -0.584162 &-0.626463 & -0.814232 & -0.921420 \\
    \hline
  \end{tabular}
\end{table}

\subsection{A Misspecified Recovery}
\label{sub:num:mspfcd}
In this example, we study a misspecification problem, where we recover the strategies and environment data generated by a MFG with a local coupling by specifying another MFG model with a non-local coupling. More precisely, we observe noisy data generated by the MFG system in \eqref{exp:smfguk} with  \( V(x) = \cos(2\pi x) + \cos(2\pi y) + \cos(4\pi x) \) and \( \Gamma(m) = m^2 \), but our goal is to recover \( m \), \( u \), \( V \), and \( \overline H \) by assuming that \( \Gamma(m) \) takes the form \( \Gamma(m) = \int_{\mathbb{T}^2} e^{-\frac{(x-y)^2}{2\sigma^2}}m(y) \, dy \), where the lengthscale \( \sigma \) is to be determined.

We use the GP method \cite{mou2022numerical} to solve \eqref{exp:smfguk} with $\Gamma(m)=m^2$ to get the reference solutions. Then, to solve the inverse problems, we take $625$ Gauss-Legendre quadrature points as samples and randomly select \( I = 30 \) observation points for \( m \). Independent Gaussian noises following \(\mathcal{N}(0, \gamma^2 I) \) with a standard deviation of \( \gamma = 10^{-3} \) are added to these observations. The Gauss-Legendre quadrature rule are employed to calculate the convolution in the non-local coupling. In the numerical experiments, periodic kernels are chosen for \( m \), \( u \), and \( V \) with lengthscale parameters set to $0.6$, $0.6$, and $1$, respectively. Additionally, adaptive diagonal regularization terms (``nuggets") with parameters \( \eta = 10^{-8} \) are added into the covariance matrices.

Figure \ref{fig:contour4} presents our experimental results in recovering \( m \), \( u \), and \( V \). It is important to highlight that in this scenario, there are no observations available for  $V$. The accuracy of the results imply the robustness of our GP framework.

\begin{figure}
  \centering
  \begin{subfigure}{0.24\textwidth} 
    \includegraphics[width=\linewidth]{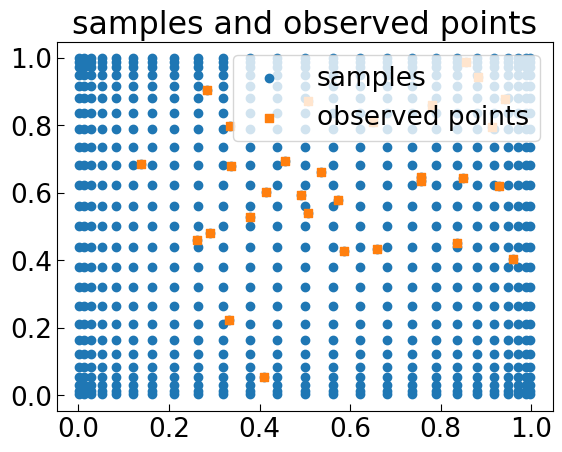}
    \caption{Samples \& observations}
  \end{subfigure}
  \hfill
  \begin{subfigure}{0.24\textwidth}
    \includegraphics[width=\linewidth]{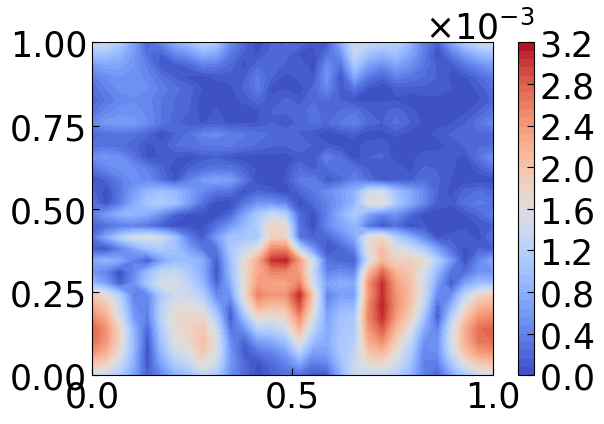}
    \caption{error contour of $m$}
  \end{subfigure}
  \hfill
  \begin{subfigure}{0.24\textwidth}
    \includegraphics[width=\linewidth]{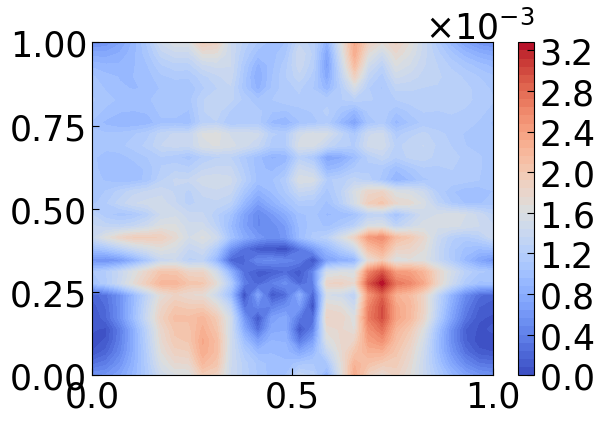}
    \caption{error contour of $u$}
  \end{subfigure}
  \hfill
  \begin{subfigure}{0.24\textwidth}
    \includegraphics[width=\linewidth]{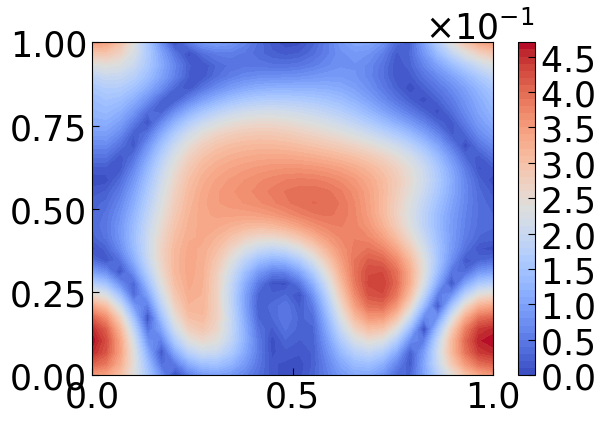}
    \caption{error contour of $V$}
  \end{subfigure}
  
  \vspace{1em} 
  
  \begin{subfigure}{0.24\textwidth} 
    \includegraphics[width=\linewidth]{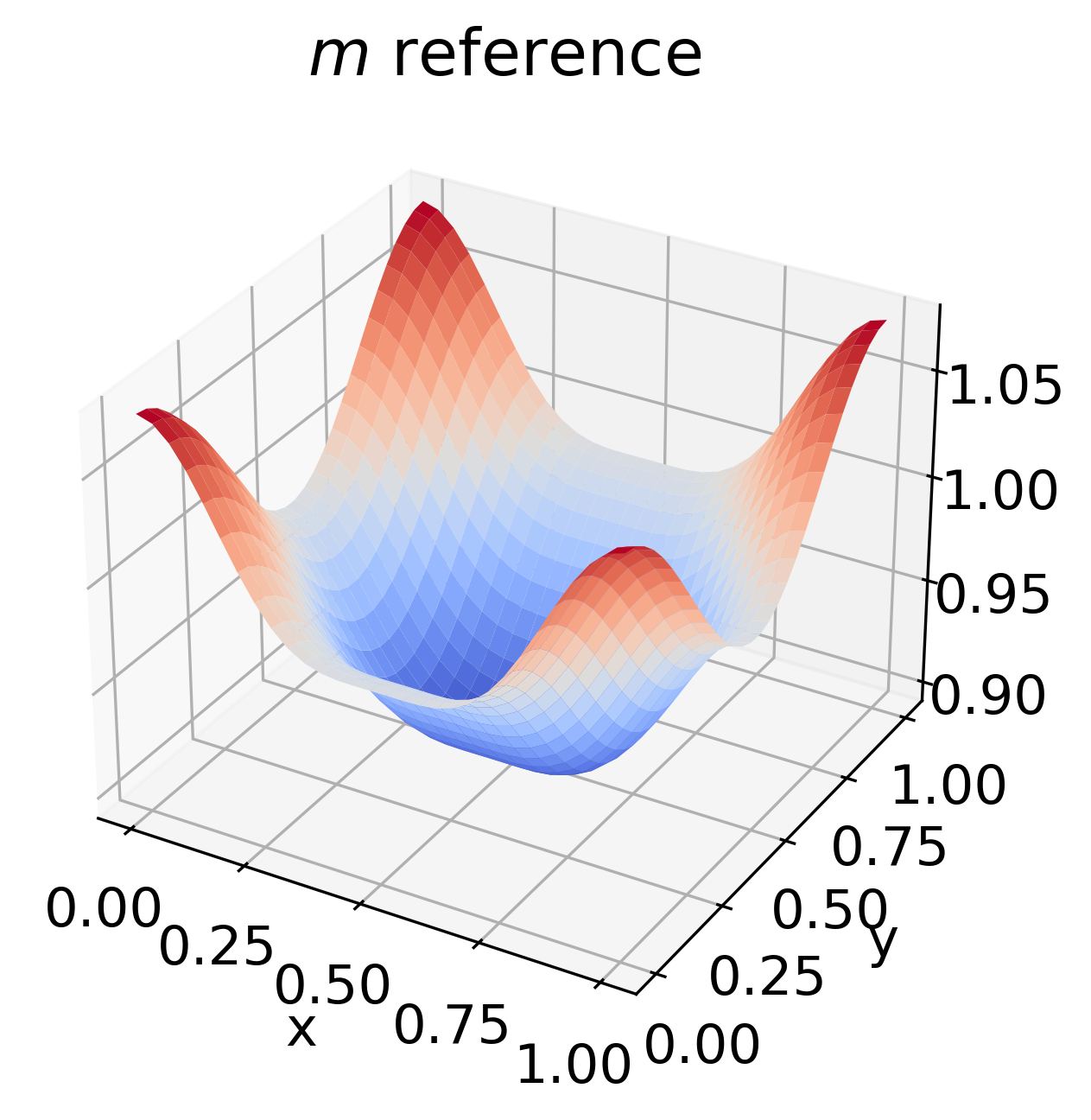}
    \caption{baseline for $m$}
  \end{subfigure}
  \hfill
  \begin{subfigure}{0.24\textwidth}
    \includegraphics[width=\linewidth]{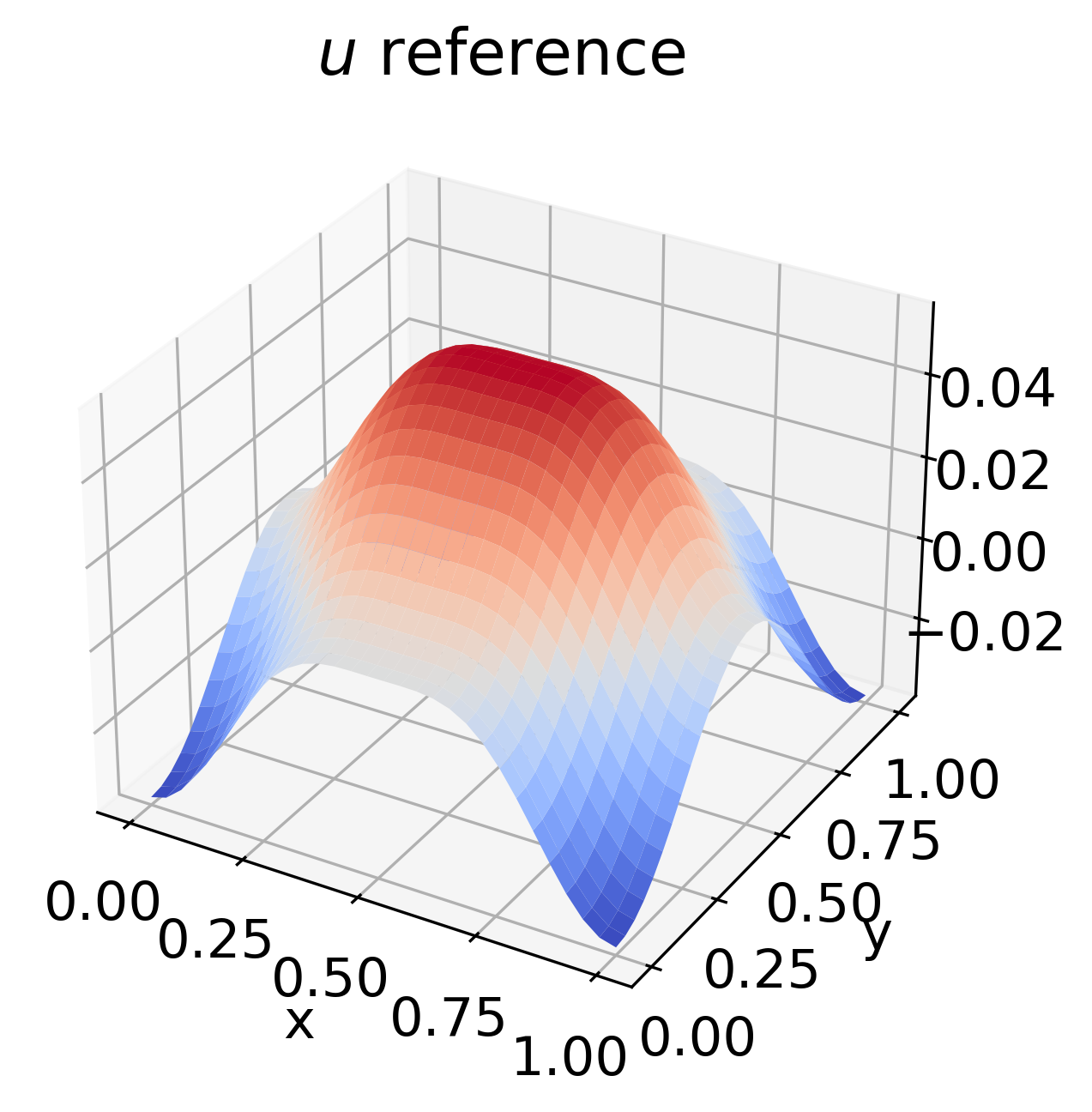}
    \caption{baseline for $u$}
  \end{subfigure}
  \hfill
  \begin{subfigure}{0.24\textwidth}
    \includegraphics[width=\linewidth]{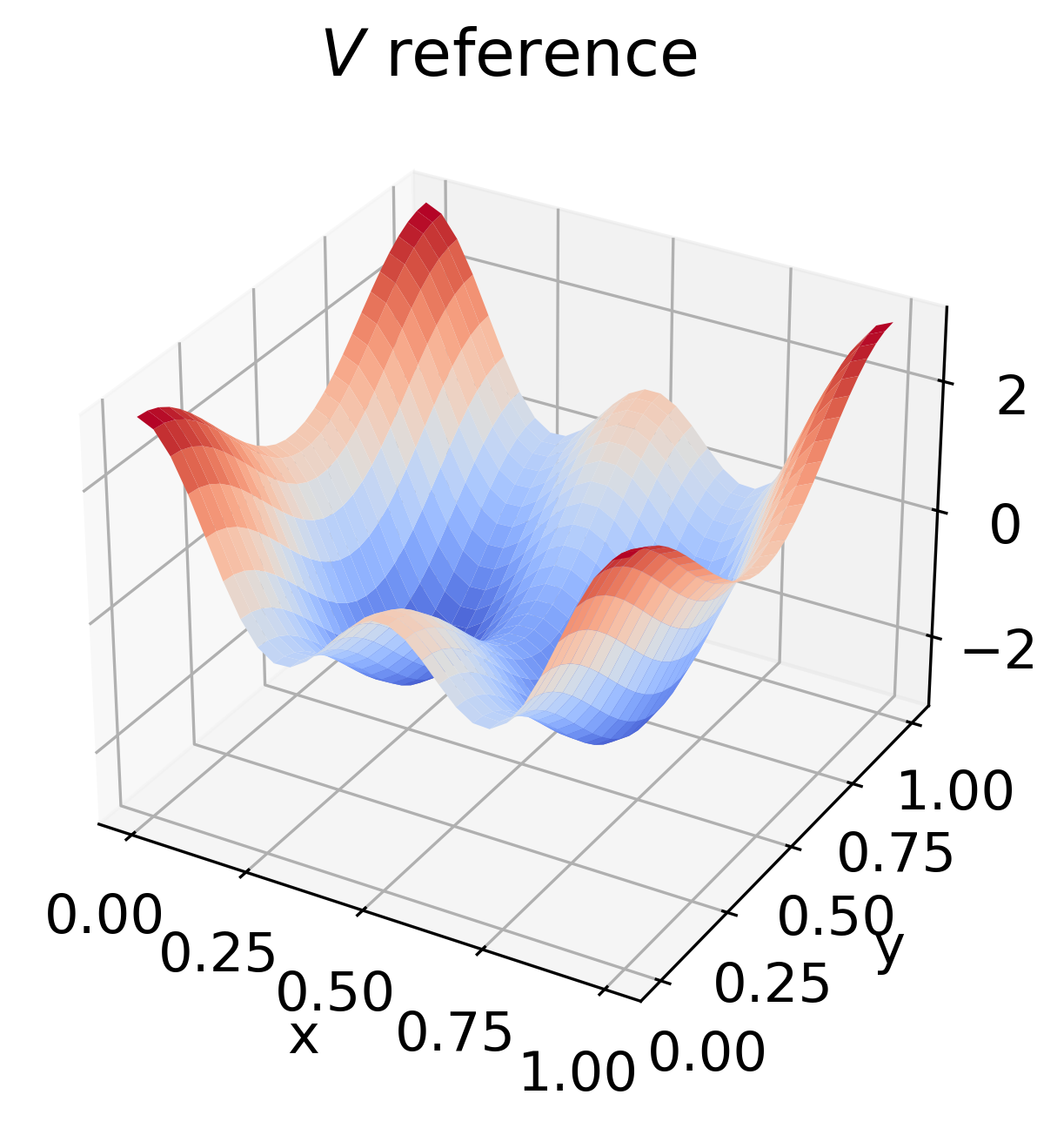}
    \caption{baseline for $V$}
  \end{subfigure}
  
  \vspace{1em} 
  
  \begin{subfigure}{0.24\textwidth} 
    \includegraphics[width=\linewidth]{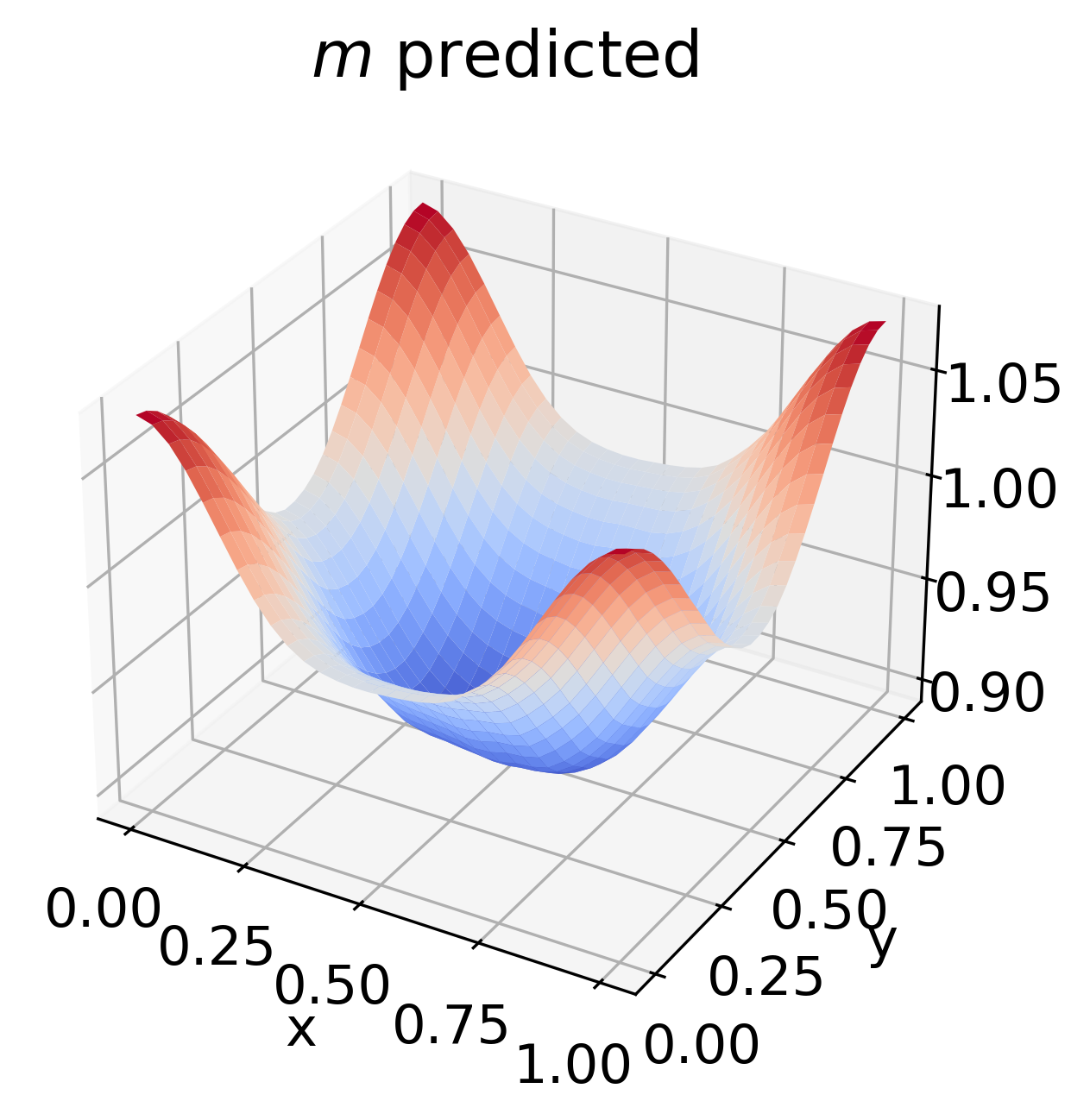}
    \caption{recovered $m$}
  \end{subfigure}
  \hfill
  \begin{subfigure}{0.24\textwidth}
    \includegraphics[width=\linewidth]{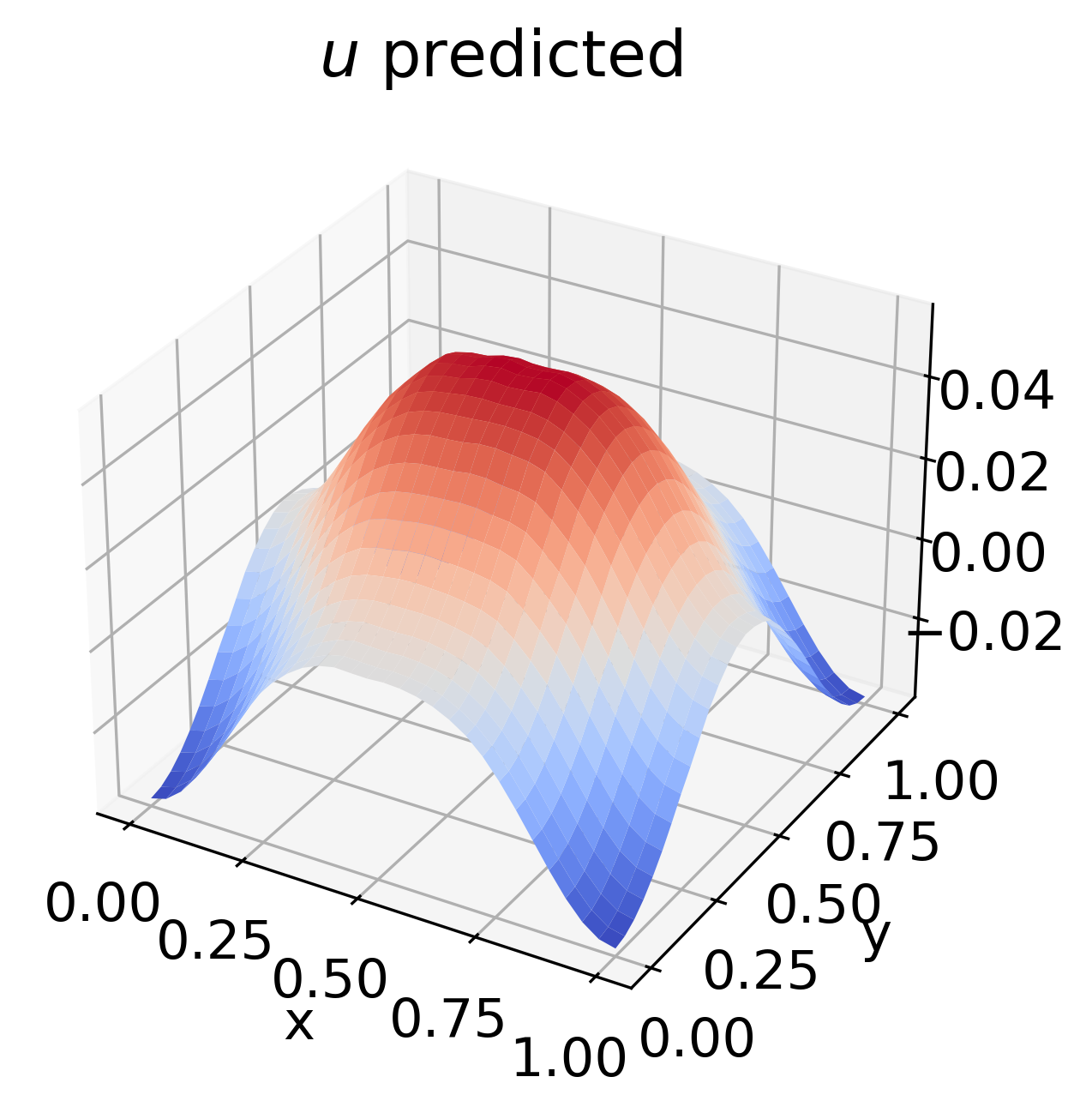}
    \caption{recovered $u$}
  \end{subfigure}
  \hfill
  \begin{subfigure}{0.24\textwidth}
    \includegraphics[width=\linewidth]{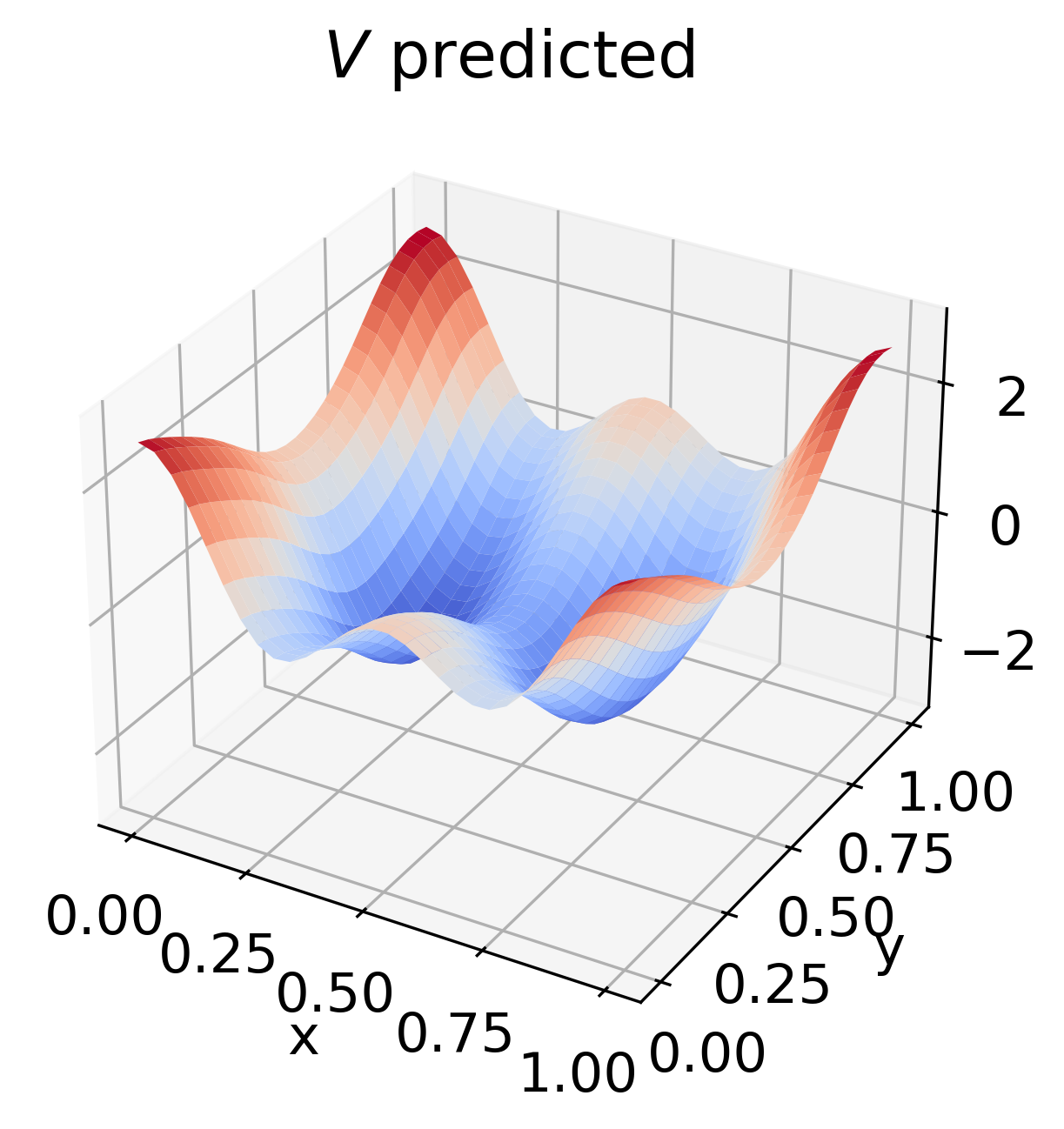}
    \caption{recovered $V$}
  \end{subfigure}
  
  \caption{Numerical results for the misspecification problem. We choose  $M=625$ samples and  $I = 30$ observations on $m$. The ground truth $\overline H=-0.920221$, and the recovered $\overline H^{\dagger}=-0.524876$, recovered $\sigma=0.619232$. }
  \label{fig:contour4}
\end{figure}

\subsection{A Time-dependent MFG}
\label{sub:num:tdmfg}
In this subsection, we investigate a time-dependent MFG with local coupling, represented by the following equations:
\begin{align*}
\begin{cases}
-\partial_t u - \nu \Delta u + H(x, \nabla u) = m^2 - V(x), & \forall (t, x)\in (0, T)\times \mathbb{T}, \\
\partial_t m - \nu \Delta m - \div (D_p H(x, \nabla u) m) = 0, & \forall (t, x)\in (0, T)\times \mathbb{T}, \\
m(0, x) = m_0(x), \quad u(T, x) = u_T(x), & \forall x\in \mathbb{T}, 
\end{cases}    
\end{align*}
where $T=1$, \( H(x, p) = \frac{1}{2} |p|^2 \). Let  \( \mathbb{T}\) be identified by \([-0.5, 0.5) \),  \( m_0(\cdot) = 1 \), and \( u_T(\cdot) = 0 \), and the true potential \( V(x) = 2\sin(2\pi x) + \cos(2\pi x) \). Our goal is to recover \(u, m, V\) in the whole time and space domain under noisy observations of the population density \( m \) and of the potential $V$ for a fixed viscosity \( \nu \).

Following the methodology outlined in Section \ref{subsec:tdgp}, we divide the entire time period into \( N_T = 20 \) intervals. At each interval, we randomly sample \( M = 50 \) collocation points from \( \mathbb{T} \) and observe the value of \( m \) at the first \( I_m = 5 \) points as data, along with observations of \( V \) at \( I_V = 3 \) points. Independent Gaussian noises following \( \mathcal{N}(0, \gamma^2 I) \) with a standard deviation of \( \gamma = 10^{-3} \) are added to these observations. Periodic kernels are used for \( m \), \( u \), and \( V \), with lengthscale parameters $1.5, 1.5$, and $1$, separately. Additionally, we set the nugget  parameters \( \eta = 10^{-8} \) to regularize  the covariance matrices.

Figure \ref{fig:contour5} presents our experimental results in recovering \( m \), \( u \), and \( V \) with a viscosity of \( \nu = 0.1 \). The recovery process demonstrated reasonable accuracy, effectively recover the quantities of interest  with limited observations.

\begin{figure}
  \centering
  \begin{subfigure}{0.27\textwidth}
    \includegraphics[width=\linewidth]{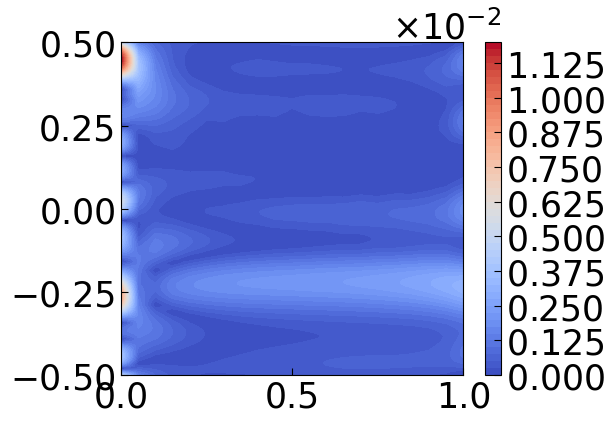}
    \caption{error contour of $m$}
  \end{subfigure}
  \hfill
  \begin{subfigure}{0.27\textwidth}
    \includegraphics[width=\linewidth]{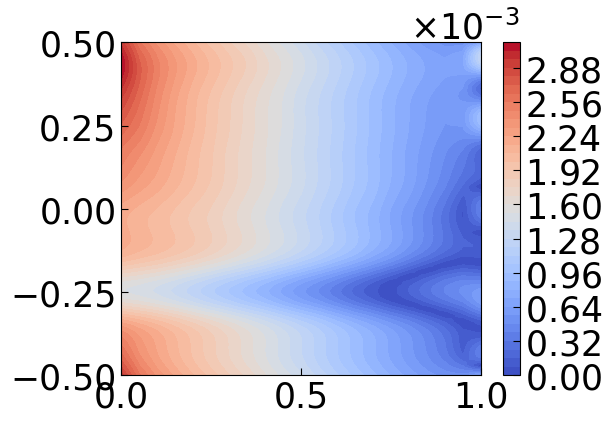}
    \caption{error contour of $u$}
  \end{subfigure}
  \hfill
  \begin{subfigure}{0.27\textwidth}
    \includegraphics[width=\linewidth]{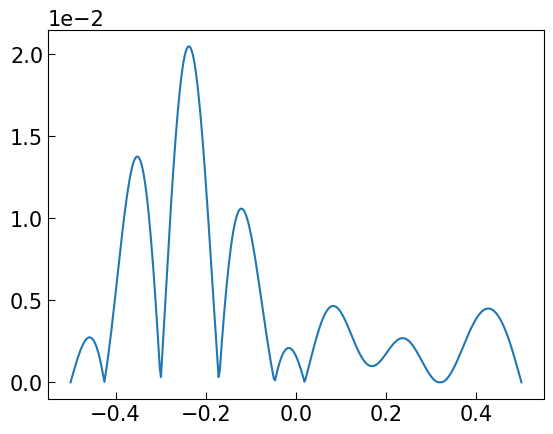}
    \caption{error of $V$}
  \end{subfigure}
  
  \vspace{1em} 
  
  \begin{subfigure}{0.27\textwidth} 
    \includegraphics[width=\linewidth]{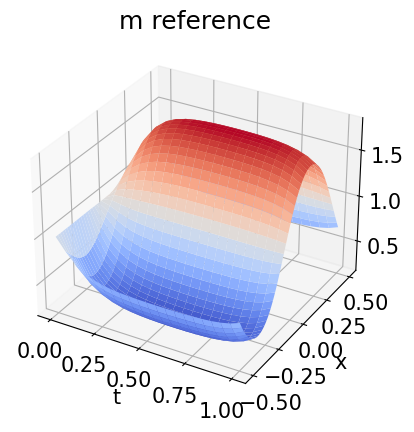}
    \caption{baseline for $m$}
  \end{subfigure}
  \hfill
  \begin{subfigure}{0.27\textwidth}
    \includegraphics[width=\linewidth]{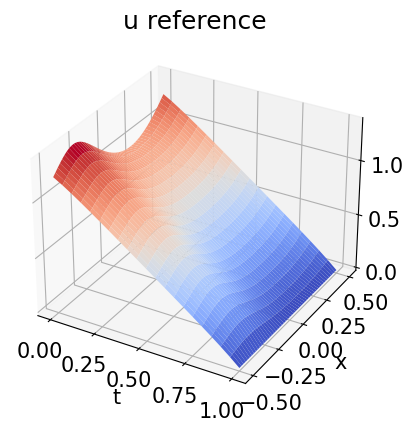}
    \caption{baseline for $u$}
  \end{subfigure}
  \hfill
  \begin{subfigure}{0.27\textwidth}
    \includegraphics[width=\linewidth]{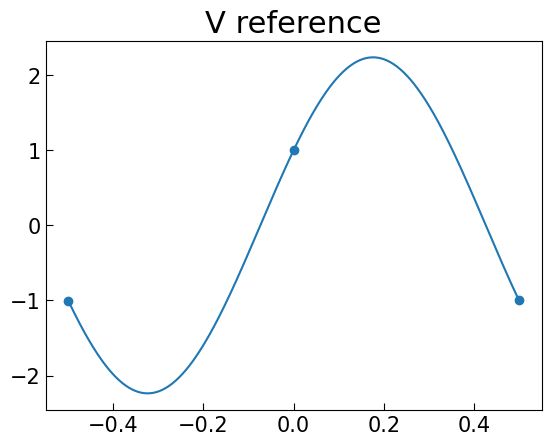}
    \caption{baseline for $V$, the $3$ scatter points are observations on $V$}
  \end{subfigure}
  
  \vspace{1em} 
  
  \begin{subfigure}{0.27\textwidth} 
    \includegraphics[width=\linewidth]{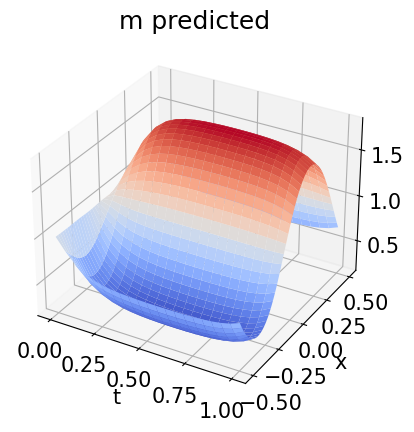}
    \caption{recovered $m$}
  \end{subfigure}
  \hfill
  \begin{subfigure}{0.27\textwidth}
    \includegraphics[width=\linewidth]{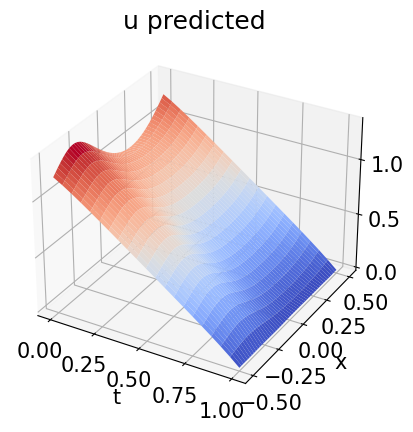}
    \caption{recovered $u$}
  \end{subfigure}
  \hfill
  \begin{subfigure}{0.27\textwidth}
    \includegraphics[width=\linewidth]{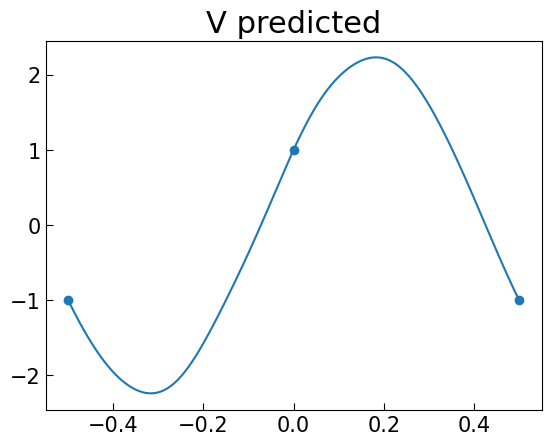}
    \caption{recovered $V$, the $3$ scatter points are observations on $V$}
  \end{subfigure}
  
  \caption{Numerical results for the time-dependent MFG. We set $\nu=0.1$. At each time slot, we select $M=50$ samples on $\mathbb{T}$. Within these samples, we choose $I_m=5$ observation points for $m$ and $I_V=3$ observation points for $V$.}
  \label{fig:contour5}
\end{figure}

\section{Conclusions and Future Works}
In this study, we introduce a GP framework designed to infer strategies and environmental configurations for agents engaged in MFGs. This framework operates under the challenge of working with partial and noisy data regarding the populations of agents and their environmental settings. To tackle this, we employ GPs to approximate the quantities of interest and address the inverse problems through Maximum A Posteriori (MAP) estimations conditioned on PDEs and constraints that are satisfied at specific collocation points and within the observed data.

Our method's effectiveness is showcased through a variety of examples, highlighting its capability to accurately recover diverse quantities in MFG contexts. An intriguing avenue for future work involves integrating approaches that address scalability issues, such as Random Fourier Features, sparse GPs, and mini-batch methods. This integration aims to enhance the scalability of our framework, making it more adept at handling large datasets.

Furthermore, another promising direction is the application of our methodology in fields like economics, biology, or finance. Here, the goal would be to develop MFG models in scenarios where no established models matching the data currently exist, using our approach to interpret and manage the data effectively.

\label{secDisc}

\begin{acknowledgement*}
CM is supported by CityU Start-up Grant 7200684, Hong Kong RGC Grant ECS 21302521, Hong Kong RGC Grant GRF 11311422 and Hong Kong RGC Grant GRF 11303223. XY acknowledges support from the Air Force Office of Scientific Research under MURI award number FA9550-20-1-0358 (Machine Learning and Physics-Based Modeling and Simulation). CZ is supported by Singapore MOE (Ministry of Education) AcRF Grant A-8000453-00-00, IoTex Foundation Industry Grant A-8001180-00-00 and NSFC Grant No. 11871364. 
\end{acknowledgement*}

\bibliographystyle{plain}

\begin{thebibliography}{10}
	
	\bibitem{achdou2010mean}
	Y.~Achdou and I.~Capuzzo-Dolcetta.
	\newblock Mean field games: numerical methods.
	\newblock {\em SIAM Journal on Numerical Analysis}, 48(3):1136--1162, 2010.
	
	\bibitem{achdou2020mean}
	Y.~Achdou and M.~Lauri{\`e}re.
	\newblock Mean field games and applications: Numerical aspects.
	\newblock {\em Mean Field Games}, pages 249--307, 2020.
	
	\bibitem{achdou2012iterative}
	Y.~Achdou and V.~Perez.
	\newblock Iterative strategies for solving linearized discrete mean field games
	systems.
	\newblock {\em Networks and Heterogeneous Media}, 7(2):197, 2012.
	
	\bibitem{alvarez2012kernels}
	M.~A. Alvarez, L.~Rosasco, and N.~D. Lawrence.
	\newblock Kernels for vector-valued functions: A review.
	\newblock {\em Foundations and Trends{\textregistered} in Machine Learning},
	4(3):195--266, 2012.
	
	\bibitem{briceno2019implementation}
	L.~M. Brice{\~n}o-Arias, D.~Kalise, Z.~Kobeissi, M.~Lauri{\`e}re, A.~M.
	Gonz{\'a}lez, and F.~Silva.
	\newblock On the implementation of a primal-dual algorithm for second order
	time-dependent mean field games with local couplings.
	\newblock {\em ESAIM: Proceedings and Surveys}, 65:330--348, 2019.
	
	\bibitem{briceno2018proximal}
	L.~M. Briceno-Arias, D.~Kalise, and F.~Silva.
	\newblock Proximal methods for stationary mean field games with local
	couplings.
	\newblock {\em SIAM Journal on Control and Optimization}, 56(2):801--836, 2018.
	
	\bibitem{cannarsa2020long}
	P.~Cannarsa, W.~Cheng, C.~Mendico, and K.~Wang.
	\newblock Long-time behavior of first-order mean field games on euclidean
	space.
	\newblock {\em Dynamic Games and Applications}, 10(2):361--390, 2020.
	
	\bibitem{cardaliaguet2010notes}
	P.~Cardaliaguet.
	\newblock Notes on mean field games.
	\newblock Technical report, Technical report, 2010.
	
	\bibitem{cardaliaguet2012long}
	P.~Cardaliaguet, J.-M. Lasry, P.-L. Lions, and A.~Porretta.
	\newblock Long time average of mean field games.
	\newblock {\em Networks and heterogeneous media}, 7(2):279--301, 2012.
	
	\bibitem{carmona2021convergence}
	R.~Carmona and M.~Lauri{\`e}re.
	\newblock Convergence analysis of machine learning algorithms for the numerical
	solution of mean field control and games {I}: The ergodic case.
	\newblock {\em SIAM Journal on Numerical Analysis}, 59(3):1455--1485, 2021.
	
	\bibitem{carmona2019convergence}
	R.~Carmona and M.~Lauri{\`e}re.
	\newblock Convergence analysis of machine learning algorithms for the numerical
	solution of mean field control and games: {II}--the finite horizon case.
	\newblock {\em accepted in Annals of Applied Probability}, 2021.
	
	\bibitem{cesaroni2019stationary}
	A.~Cesaroni, M.~Cirant, S.~Dipierro, M.~Novaga, and E.~Valdinoci.
	\newblock On stationary fractional mean field games.
	\newblock {\em Journal de Math{\'e}matiques Pures et Appliqu{\'e}es},
	122:1--22, 2019.
	
	\bibitem{chen2021solving}
	Y.~Chen, B.~Hosseini, H.~Owhadi, and A.~M. Stuart.
	\newblock Solving and learning nonlinear {PDE}s with {G}aussian processes.
	\newblock {\em Journal of Computational Physics}, 2021.
	
	\bibitem{chen2023sparse}
	Y.~Chen, H.~Owhadi, and F.~Sch{\"a}fer.
	\newblock Sparse {C}holesky factorization for solving nonlinear {PDE}s via
	{G}aussian processes.
	\newblock {\em arXiv preprint arXiv:2304.01294}, 2023.
	
	\bibitem{chow2022numerical}
	Y.~T. Chow, S.~W. Fung, S.~Liu, L.~Nurbekyan, and S.~Osher.
	\newblock A numerical algorithm for inverse problem from partial boundary
	measurement arising from mean field game problem.
	\newblock {\em Inverse Problems}, 39(1):014001, 2022.
	
	\bibitem{cirant2021long}
	M.~Cirant and A.~Porretta.
	\newblock Long time behavior and turnpike solutions in mildly non-monotone mean
	field games.
	\newblock {\em ESAIM. COCV}, 27, 2021.
	
	\bibitem{ding2022mean}
	L.~Ding, W.~Li, S.~Osher, and W.~Yin.
	\newblock A mean field game inverse problem.
	\newblock {\em Journal of Scientific Computing}, 92(1):7, 2022.
	
	\bibitem{ding2023determining}
	M~Ding, H.~Liu, and G.~Zheng.
	\newblock Determining a stationary mean field game system from full/partial
	boundary measurement.
	\newblock {\em arXiv preprint arXiv:2308.06688}, 2023.
	
	\bibitem{evangelista2018first}
	D.~Evangelista, R.~Ferreira, D.~Gomes, L.~Nurbekyan, and V.~Voskanyan.
	\newblock First-order, stationary mean-field games with congestion.
	\newblock {\em Nonlinear Analysis}, 173:37--74, 2018.
	
	\bibitem{gao2021belief}
	H.~Gao, A.~Lin, R.~A Banez, W.~Li, Z.~Han, S.~J. Osher, and H.~V. Poor.
	\newblock Belief and opinion evolution in social networks: A high-dimensional
	mean field game approach.
	\newblock In {\em ICC 2021-IEEE International Conference on Communications},
	pages 1--6. IEEE, 2021.
	
	\bibitem{gomes2015economic}
	D.~Gomes, L.~Nurbekyan, and E.~Pimentel.
	\newblock Economic models and mean-field games theory.
	\newblock {\em Publicaoes Matematicas, IMPA, Rio, Brazil}, 2015.
	
	\bibitem{gomes2021mean}
	D.~Gomes and J.~Sa{\'u}de.
	\newblock A mean-field game approach to price formation.
	\newblock {\em Dynamic Games and Applications}, 11(1):29--53, 2021.
	
	\bibitem{gomes2020hessian}
	D.~Gomes and X.~Yang.
	\newblock The {H}essian {R}iemannian flow and {N}ewton's method for effective
	{H}amiltonians and {M}ather measures.
	\newblock {\em ESAIM: Mathematical Modelling and Numerical Analysis},
	54(6):1883--1915, 2020.
	
	\bibitem{gueant2011mean}
	O.~Gu{\'e}ant, J.-M. Lasry, and P.-L. Lions.
	\newblock Mean field games and applications.
	\newblock In {\em Paris-Princeton lectures on mathematical finance 2010}, pages
	205--266. Springer, 2011.
	
	\bibitem{hamzi2023learning}
	B.~Hamzi, H.~Owhadi, and Y.~Kevrekidis.
	\newblock Learning dynamical systems from data: A simple cross-validation
	perspective, part iv: case with partial observations.
	\newblock {\em Physica D: Nonlinear Phenomena}, 454:133853, 2023.
	
	\bibitem{huang2007invariance}
	M.~Huang, P.~E. Caines, and R.~P. Malham{\'e}.
	\newblock An invariance principle in large population stochastic dynamic games.
	\newblock {\em Journal of Systems Science and Complexity}, 20(2):162--172,
	2007.
	
	\bibitem{huang2007large}
	M.~Huang, P.~E. Caines, and R.~P. Malham{\'e}.
	\newblock Large-population cost-coupled lqg problems with nonuniform agents:
	individual-mass behavior and decentralized {$\epsilon$}-nash equilibria.
	\newblock {\em IEEE transactions on automatic control}, 52(9):1560--1571, 2007.
	
	\bibitem{huang2007nash}
	M.~Huang, P.~E. Caines, and R.~P. Malham{\'e}.
	\newblock The nash certainty equivalence principle and mckean-vlasov systems:
	an invariance principle and entry adaptation.
	\newblock In {\em 2007 46th IEEE Conference on Decision and Control}, pages
	121--126. IEEE, 2007.
	
	\bibitem{huang2006large}
	M.~Huang, R.~P. Malham{\'e}, and P.~E. Caines.
	\newblock Large population stochastic dynamic games: closed-loop mckean-vlasov
	systems and the nash certainty equivalence principle.
	\newblock {\em Communications in Information and Systems}, 6(3):221--252, 2006.
	
	\bibitem{imanuvilov2023lipschitz}
	O.~Imanuvilov, H.~Liu, and M.~Yamamoto.
	\newblock Lipschitz stability for determination of states and inverse source
	problem for the mean field game equations.
	\newblock {\em arXiv preprint arXiv:2304.06673}, 2023.
	
	\bibitem{klibanov2023h}
	M.~V. Klibanov, J.~Li, and H.~Liu.
	\newblock H\"older stability and uniqueness for the mean field games system via
	carleman estimates.
	\newblock {\em arXiv preprint arXiv:2304.00646}, 2023.
	
	\bibitem{klibanov2023convexification}
	M.~V. Klibanov, J.~Li, and Z.~Yang.
	\newblock Convexification for a coefficient inverse problem of mean field
	games.
	\newblock {\em arXiv preprint arXiv:2310.08878}, 2023.
	
	\bibitem{lasry2006jeux1}
	J.-M. Lasry and P.-L. Lions.
	\newblock Jeux {\`a} champ moyen. {I}--le cas stationnaire.
	\newblock {\em Comptes Rendus Math{\'e}matique}, 343(9):619--625, 2006.
	
	\bibitem{lasry2006jeux}
	J.-M. Lasry and P.-L. Lions.
	\newblock Jeux {\`a} champ moyen. {II}--horizon fini et contr{\^o}le optimal.
	\newblock {\em Comptes Rendus Math{\'e}matique}, 343(10):679--684, 2006.
	
	\bibitem{lasry2007mean}
	J.-M. Lasry and P.-L. Lions.
	\newblock Mean field games.
	\newblock {\em Japanese journal of mathematics}, 2(1):229--260, 2007.
	
	\bibitem{lauriere2021numerical}
	M.~Lauri{\`e}re.
	\newblock Numerical methods for mean field games and mean field type control.
	\newblock {\em arXiv preprint arXiv:2106.06231}, 2021.
	
	\bibitem{lee2021mean}
	W.~Lee, S.~Liu, W.~Li, and S.~Osher.
	\newblock Mean field control problems for vaccine distribution.
	\newblock {\em arXiv preprint arXiv:2104.11887}, 2021.
	
	\bibitem{lee2020mean}
	W.~Lee, S.~Liu, H.~Tembine, W.~Li, and S.~J. Osher.
	\newblock Controlling propagation of epidemics via mean-field control.
	\newblock {\em SIAM Journal on Applied Math}, 2020.
	
	\bibitem{lin2020apac}
	A.~T. Lin, S.~W. Fung, W.~Li, L.~Nurbekyan, and S.~J. Osher.
	\newblock Alternating the population and control neural networks to solve
	high-dimensional stochastic mean-field games.
	\newblock {\em Proceedings of the National Academy of Sciences}, 118(31), 2021.
	
	\bibitem{lionslec}
	P.-L. Lions.
	\newblock {\em Cours au Coll{\`e}ge de France}, 2007-2013.
	
	\bibitem{liu2023inverse}
	H.~Liu, C.~Mou, and S.~Zhang.
	\newblock Inverse problems for mean field games.
	\newblock {\em Inverse Problems}, 2023.
	
	\bibitem{liu2023simultaneously}
	H.~Liu and S.~Zhang.
	\newblock Simultaneously recovering running cost and hamiltonian in mean field
	games system.
	\newblock {\em arXiv preprint arXiv:2303.13096}, 2023.
	
	\bibitem{liu2021computational}
	S.~Liu, M.~Jacobs, W.~Li, L.~Nurbekyan, and S.~J. Osher.
	\newblock Computational methods for first-order nonlocal mean field games with
	applications.
	\newblock {\em SIAM Journal on Numerical Analysis}, 59(5):2639--2668, 2021.
	
	\bibitem{liu2020splitting}
	S.~Liu and L.~Nurbekyan.
	\newblock Splitting methods for a class of non-potential mean field games.
	\newblock {\em Journal of Dynamics and Games, 2020}, 2020.
	
	\bibitem{meng2023sparse}
	R.~Meng and X.~Yang.
	\newblock Sparse {G}aussian processes for solving nonlinear {PDE}s.
	\newblock {\em Journal of Computational Physics}, 490:112340, 2023.
	
	\bibitem{micchelli2005learning}
	C.~A. Micchelli and M.~Pontil.
	\newblock On learning vector-valued functions.
	\newblock {\em Neural computation}, 17(1):177--204, 2005.
	
	\bibitem{mou2022numerical}
	C.~Mou, X.~Yang, and C.~Zhou.
	\newblock Numerical methods for mean field games based on {G}aussian processes
	and {F}ourier features.
	\newblock {\em Journal of Computational Physics}, 2022.
	
	\bibitem{nurbekyan2019fourier}
	L.~Nurbekyan and J.~Sa{\'u}de.
	\newblock Fourier approximation methods for first-order nonlocal mean-field
	games.
	\newblock {\em Portugaliae Mathematica}, 75(3):367--396, 2019.
	
	\bibitem{owhadi2019operator}
	H.~Owhadi and C.~Scovel.
	\newblock {\em Operator-Adapted Wavelets, Fast Solvers, and Numerical
		Homogenization: From a Game Theoretic Approach to Numerical Approximation and
		Algorithm Design}, volume~35.
	\newblock Cambridge University Press, 2019.
	
	\bibitem{raissi2017machine}
	M.~Raissi, P.~Perdikaris, and G.~E. Karniadakis.
	\newblock Machine learning of linear differential equations using {G}aussian
	processes.
	\newblock {\em Journal of Computational Physics}, 348:683--693, 2017.
	
	\bibitem{raissi2018numerical}
	M.~Raissi, P.~Perdikaris, and G.~E. Karniadakis.
	\newblock Numerical {G}aussian processes for time-dependent and nonlinear
	partial differential equations.
	\newblock {\em SIAM Journal on Scientific Computing}, 40(1):A172--A198, 2018.
	
	\bibitem{ruthotto2020machine}
	L.~Ruthotto, S.~J. Osher, W.~Li, L.~Nurbekyan, and S.~W. Fung.
	\newblock A machine learning framework for solving high-dimensional mean field
	game and mean field control problems.
	\newblock {\em Proceedings of the National Academy of Sciences},
	117(17):9183--9193, 2020.
	
	\bibitem{rasmussen2006gaussian}
	C.~K. Williams and C.~E. Rasmussen.
	\newblock Gaussian processes for machine learning.
	\newblock {\em MIT press Cambridge, MA.}, 2006.
	
	\bibitem{yang2023context}
	L.~Yang, S.~Liu, T.~Meng, and S.~J. Osher.
	\newblock In-context operator learning with data prompts for differential
	equation problems.
	\newblock {\em Proceedings of the National Academy of Sciences},
	120(39):e2310142120, 2023.
	
	\bibitem{yang2023learning}
	L.~Yang, X.~Sun, B.~Hamzi, H.~Owhadi, and N.~Xie.
	\newblock Learning dynamical systems from data: A simple cross-validation
	perspective, part v: Sparse kernel flows for 132 chaotic dynamical systems.
	\newblock {\em arXiv preprint arXiv:2301.10321}, 2023.
	
	\bibitem{yang2023mini}
	X.~Yang and H.~Owhadi.
	\newblock A mini-batch method for solving nonlinear {PDE}s with {G}aussian
	processes.
	\newblock {\em arXiv preprint arXiv:2306.00307}, 2023.
	
\end{thebibliography}

\end{document}